\newtheorem{definition} {Definition}
\newtheorem{proposition} {Proposition}
\newtheorem{lemma} {Lemma}
\newtheorem{lemmaA} {Lemma A}
\newtheorem{theorem} {Theorem}
\newtheorem*{claim*} {Claim}
\newtheorem{corollary}  {Corollary}
\begin{document}

\title{Pricing with algorithms\footnote{Lamba: Pennsylvania State University, \href{mailto: rlamba@psu.edu}{\texttt{rlamba@psu.edu}};  Zhuk: \href{mailto: sergey.zhuk@gmail.com}{\texttt{sergey.zhuk@gmail.com}}. We are grateful to Nageeb Ali, Yu Awaya, Tilman B\"{o}rgers, In-Koo Cho, Yuhta Ishii, Vijay Krishna, Erik Madsen, Yusufcan Masatlioglu, Andrew McClellan, David Miller, Scott Page, Andrew Sweeting, Darshana Sunoj and seminar participants at Pennsylvania State University, University of Maryland and NBER/CEME Decentralization Conference 2022 for their comments and suggestions.}}
\author{Rohit Lamba \hspace{30mm} Sergey Zhuk}
\date{June 2022}
\maketitle

\begin{abstract}
  This paper studies Markov perfect equilibria in a repeated duopoly model where sellers choose algorithms. An algorithm is a mapping from the competitor's price to own price. Once set, algorithms respond quickly. Customers arrive randomly and so do opportunities to revise the algorithm.  In the simple game with two possible prices, monopoly pricing is the unique equilibrium outcome for standard functional forms of the profit function. More generally, with multiple prices, market power is the rule--in all equilibria, the expected payoff of both sellers is above the competitive outcome, and that of at least one seller is eventually close to or above the monopoly outcome. Sustenance of such collusion seems outside the scope of standard antitrust laws for it does not involve any direct communication.

    \end{abstract}
\newpage

\section{Introduction}

There is growing concern amongst academics, policymakers, and increasingly consumers that algorithms are engendering otherwise outlawed collusive practices amongst firms in various industries. For example, writing in {\it Science} magazine, \citet*{science_AI_collusion} remark:

\begin{quote}\begin{small}
    Collusion is generally condemned by economists and policy-makers and is unlawful in almost all countries. But the increasing delegation of price-setting to algorithms (1) has the potential for opening a back door through which firms could collude lawfully (2). Such algorithmic collusion can occur when artificial intelligence (AI) algorithms learn to adopt collusive pricing rules without human intervention, oversight, or even knowledge.\end{small}
\end{quote}

\noindent There is growing empirical evidence for this too. Studying adoption of algorithmic pricing in German gas stations, \citet*{German_gasoline_AI} find 

\begin{quote} \begin{small}
    Adoption increases margins, but only for non-monopoly stations. In duopoly markets, margins increase only if both stations adopt, suggesting that AP has a significant effect on competition.\end{small}
\end{quote}

\noindent Finally, new experimental work also suggests that algorithms may lead to collusion using adaptive strategies oft invoked in dynamic games. \citet*{experiment_AI_collusion} write: 

\begin{quote}\begin{small}
    The high prices are sustained by collusive strategies with a finite phase of punishment followed by a gradual return to cooperation. This finding is robust to asymmetries in cost or demand, changes in the number of players, and various forms of uncertainty.\end{small}
\end{quote}

The objective of this paper is to offer a parsimonious model that captures the essence of why algorithms may be abetting collusion.  We focus on the two specific properties of algorithms (also outlined in \citet{competiton_algo}). First, by design they allow a quick response time to price changes by competitors. Second, they offer temporary commitment to a pricing strategy---once an algorithm is chosen, price setting is delegated to it, till the algorithm is updated. The paper argues that these properties indeed facilitate collusion--- the market {\it naturally} converges to prices higher than the competitive outcome. Moreover, these choices looks legal on the books since they does not require any direct communication.




More specifically, we study a duopoly model of price setting, where the strategy for each seller is to choose an algorithm. An algorithm is simply a mapping from competitor's price to own price. Once the two algorithms are chosen, they automatically set prices in response to each other. Customers arrive according to some Poisson process. Upon arrival, a probabilistic demand function determines which seller the customer may buy from---a lower price leads to a higher likelihood of being chosen. In addition, each seller gets an opportunity to revise their algorithm according to a separate Poisson process. The dynamic game essentially follows a sequence of customer arrivals and opportunities to revise algorithms (see Section \ref{section model} for the model). 

The key modeling assumption is that ``$\Delta t$", i.e. the unit that measures the reaction time of algorithms, is small. It delivers three mechanisms: First, sellers can experiment and learn the opponent's algorithm very quickly, the learning cost is at most linear in the set of prices and the unit of time. So effectively when they choose their algorithm, they are best responding to the algorithm of the opponent. Second, prices converge quickly for a fixed set of algorithms and the convergent prices are what essentially matter for accounting payoffs. Third, since the time period of price changes is smaller than the average time sellers have to change algorithms, this offers them temporary commitment to a pricing strategy, in contrast to a standard repeated game. Once two algorithms are set, they react mechanically to the opponent's price, till one of the sellers gets a new opportunity revise its algorithm.

In fact, the first step in the analysis is to show that this dynamic game reduces to an asynchronous repeated game, that is, a repeated game in which players rotate in picking strategies. We then focus on Markov perfect equilibria of this game. The main result of the paper is that some level of collusion is inevitable, i.e. it is always attained, for all specifications of the model, and that these collusive equilibria can be further characterized. For starters, the payoff of both sellers is greater than the competitive outcome.\footnote{By ``competitive outcome", we mean here the Nash equilibrium of the one-shot game.} Moreover, exercise of market power is omnipresent---in all equilibria, the payoff of at least one seller is bounded from  below by a constant that approximates the monopoly profit.\footnote{In a duopoly model, we abuse terminology slightly and term monopoly profit to be half of the profit that would accrue to a single seller who sets prices to maximize its profit.} Further still, for certain specifications, repeated play of the monopoly strategy is the unique convergent outcome. In short, collusion is inevitable and extreme collusion is commonplace. 

The model is closely related to \citet{bruno_tacit}, which presents a remarkably elegant way to capture the idea of collusion through algorithms. \citet{bruno_tacit}, however, characterizes subgame perfect equilibria and asserts the main result for the limit case of zero effective discounting, that is zero revision opportunities for the sellers, which we argue is unrealistic. Our tack on the problem delivers results for all values of effective discounting, and in addition presents a much more exhaustive characterization of (Markov) equilibrium outcomes. A detailed discussion is presented in Section \ref{sec:comparison_bruno}. 


Our analysis starts with a special case where the sellers have access to two prices--- the competitive one and the pure monopoly one (Section \ref{sec:PD}). Within that realm, we consider a general 2$\times$2 matrix game that satisfies standard restrictions for the Prisoner's Dilemma game. It is shown that for two standard specifications--when the profit function is either quadratic or takes the discrete choice form used in various applied models-- repeated play of the monopoly price is the unique equilibrium outcome. Further, for general profit functions too, three types of equilibria emerge, two of which lead to monopoly as the unique outcome and the payoffs in the third too are bounded away from the competitive outcome. Recollect that repetition of static Nash prices is always an equilibrium in the standard repeated games model, not here. 

Here is a brief intuition for the inevitability of collusion in the 2$\times$2 model: The universe of all algorithms can be reduced to four types. A seller could always chooses the competitive price, no matter what the opponent does, call this algorithm/strategy $s_C$; the seller can also always choose the monopoly price, call this $s_M$. In addition, the seller can play tit-for-tat, that is, choose whatever price the opponent chooses, call this $s_T$ or it can play reverse-tit-for-tat, which means picking the opposite of what the opponent chooses, call this $s_R$. 

Suppose seller $A$ chooses $s_C$, that is, always pick the competitive price. Then, upon their turn to set an algorithm, seller $B$ can also pick $s_C$ so that the competitive price is selected by both. In a standard repeated game this would result in repetition of static Nash, which is of course an equilibrium outcome. But in this set up, is it optimal for $B$ to choose $s_C$? It is not: if instead $B$ chose $s_T$, then the competitive outcome will be played till $A$ has an opportunity to update their algorithm from $s_C$. However, once $A$ has a chance to revise their algorithm, the best response is also to choose $s_T$ and start off the new phase with the monopoly price. Since both are playing $s_T$, the monopoly price is selected repeatedly, which generates a higher payoff. This points towards why repetition of the competitive price cannot be an equilibrium outcome. The repeated choice of $s_C$ does not satisfy the perfectness criterion in Markov perfect equilibrium.

Building on this logic further by comparing 4x4 matches of all algorithms, one can in fact prove that the repeatedly play of the monopoly price emerges globally (i.e. for all parameters) as an equilibrium outcome. Moreover, for many reasonable specifications of payoffs, it is the unique outcome, hence the claim of extreme collusion being commonplace. 
 
We then take these insights to a general model with any finite  grid of prices (Section \ref{section multiple}). This of course complicates the analysis (at least theoretically) by orders of magnitude. Even with three possible prices there are 27 algorithms to choose from and with four there are 256 and so on. Despite this complexity we draw a general theoretical lesson. It says that payoffs in all Markov perfect equilibrium are such that: (i) both sellers receive a payoff higher than the competitive outcome, and (ii) at least one sellers' payoff is bounded from below by a constant that approximates monopoly profit. 

At a high level, the intuition for this result is similar to the 2x2 model. The low price equilibria are ``learnt away" by following a chain of temporary commitments to a higher price. Within each commitment phase of set algorithms, prices converge quickly because of the smallness of response times, so only the convergent outcomes are taken to be payoff relevant. And while setting algorithms, sellers take into account future opportunities for the competitor to revise their algorithms and so in the larger repeated game, algorithms too eventually settle on to a stable pair or cycle of pairs. 

In the 2x2 model, a two step chain took us from one seller choosing $s_C$ to both eventually choosing $s_T$, which sustained the monopoly outcome. With multiple possible prices, more  complex chains can be formed. We consider a subset of these to show that in any equilibrium at least one seller will end up guaranteeing themselves a payoff which is bounded from below by a high enough constant. 

To prove the main results, we make simplifying assumptions on how payoffs are calculated and how cycles (as opposed to convergent prices) are evaluated for a fixed set of algorithms. These are relaxed in extensions (Section \ref{sec:extensions}) to show that the basic insights carry through. We also extend equilibrium notion to sugbame perfect and show that a much larger class of equilibria emerge where both collusion and conflict can be sustained. Moreover, the specific sequence of algorithms required to sustain these payoffs are quite complex. 

In terms of techniques, the paper draws on the literature on  asynchronous repeated games, especially \citet{maskin_tirole1, maskin_tirole2}. These set of papers explore temporary commitment by fixing a player's strategy for some finite amount of time and rotating the opportunity to revise them, hence the term `asynchronous' (see also \citet{asyn_rg}). The technical analysis is also related to repeated games with renegotiation (see for example \citet{david_renego}, \citet{miller_watson_renego}, \citet{contest_norms}) in that opportunities to set algorithms can be seen as renegotiation constraints. In addition, algorithms carry the flavor of finite automaton in repeated games, as for example in \citet{finite_automaton}. Most of this literature however has players setting their automaton at the beginning with no recourse for revisions. Finally, in allowing for random opportunities to revise strategies, we also relate to the literature on revision games, see for example \citet{revision_games}. 

We of course speak to a now burgeoning literature on the foundations and implications of algorithmic pricing. \citet{competiton_algo} document and then model asymmetries in price dispersion arising out of algorithmic pricing. \citet*{algo_pakes} study the impact of different learning protocols used by artificial intelligence on algorithmic pricing. \citet*{platform_algo} explore experimentally the ability of a platform to increase consumer welfare by reducing prices in differentiated product models where sellers use learning algorithms. \citet{harrington_algo} presents a model in which sellers outsource pricing to a third party that uses an algorithm and finds that the algorithm does not reduce competition but it makes prices more sensitive to underlying variation in demand. \citet*{mallesh_bhai_ka_paper} argue that when competing sellers are using algorithms to learn some features of demand, collusion emerges because the sellers end up running correlated experiments. 

The study of collusion (or more broadly cooperation) has been a central question in much of game theory (as far back as \citet{aumann_folk} and \citet{old_coop}). \citet{green_porter} argued that simply observing the competitors' prices, as opposed to entire strategies, is sufficient to sustain collusion. More recently, \citet{awaya_krishna} show that communication can in fact help sustain collusion even if firms may not be able to track others' prices. \citet{sugaya_folk} proves a meta folk theorem for repeated games with private monitoring. In comparison to these, we keep the structure of information simple in that all actions are observable, but emphasize that quick reaction and temporary commitment offered by algorithms makes collusion not just possible; in fact some level of collusion is inevitable and extreme collusion is  commonplace.\footnote{There is also a significant applied and empirical literature on the economics of detecting and understanding mechanics of collusion, especially in auctions. See \citet{asker_collusion} and \citet*{chassang_collusion} for some recent work, and \citet{mm_collusion_book} for a textbook treatment. Collusion due to the use of algorithms though is a relatively new addition to this cannon. In recent work, \citet{AI-auctions} explore implications of simple learning algorithms for the sustenance of collusion in auctions.} 

While the model is stylized, it captures some essence of why automated price settings can lead to sustained markups that regulators can no longer ignore. Indeed, the endeavor here is to tease out sustained monopoly as the unique (in the two-price model) outcome or at least extreme collusion as the frequent (in the multi-price model) outcome, so that the theoretical point can be made starkly.

The challenge for anti-trust authorities assumes great significance for old methods of defining, detecting and hence describing collusion in a court of law are limitedly effective. If collusion can be sustained without explicit communication, how do we proceed? In fact, \citet{harrington_ai} argues that ``collusion by software programs which choose pricing rules without any human intervention is not a violation" of standard anti-trust laws used in the United States. This is a puzzle that several antitrust and competition commissions across the world are now encountering or will do in the near future, which is an invitation for many more theoretical and empirical studies to inform law and policy. 

\section{Model}

\label{section model}

\subsection{Market and prices\label{subsec:market}}

Two sellers $i=A,B$ are selling a product. Each sets a price $p^{A}$ and $p^{B}$. The marginal costs are normalized to $0$. Customers arrive at a Poisson rate $\lambda$, so the probability of having a customer in the interval $\Delta t$ is approximately $\lambda\cdot\Delta t$. Each customer purchases at most one unit of the product. 

Demand is defined probabilistically: At current prices $p$ and $q$, the probability that a customer purchases the product from the firm with price $p$ is given by $d(p,q)$. So, total demand is given by $d(p,q) + d(q,p)$. Fix the payoff function to be 
\begin{equation}
\pi(p,q)=p\cdot d(p,q).
\end{equation}
Then, the expected payoff from one customer for each seller is given by 
\begin{equation}
\begin{array}{c}
\pi^{A}=\pi(p^{A},p^{B})\\
\pi^{B}=\pi(p^{B},p^{A})
\end{array}
\end{equation}
We will impose the following assumptions om the payoff function $\pi(p,q)$: \\ \vspace{-3mm}

\noindent {\bf Assumption 0}. (i) $\pi(p,q)$ is increasing in $q$ and for each $q$ there is a unique global maximum of $\pi(p,q)$ in $p$ which is increasing $q$; (ii) there is a unique static Nash equilibrium with prices $(p_{C},p_{C}$) (where $C$ stands for ``competitive'' price); (iii) the joint profit $\pi(p,q)+\pi(q,p)$ is maximized at $p=q=p_{M}>p_{C}$ (where $M$ stands for ``monopoly'' price); (iv) for any $p\geqslant \text{arg}\max_{p'} \pi(p', p_M)$, the point $(\pi(p,p_M), \pi(p_M,p))$ is on the boundary of the convex hull of the payoffs set $\{(\pi(p,q), \pi(q,p))\ for\  p,q\in[p_C,p_M]\}$.   \\ \vspace{-3mm}


Parts (i) to (iii) are standard assumptions that correspondent $\pi$ to a standard Bertrand duopoly. Part (iv) is a useful technical assumption that helps us narrow down the number of deviations to consider when stating the general result (Theorem \ref{prop:multiple_prices}). Several commonly used market demand functions (including the linear demand and discrete choice demand mentioned in Corollary \ref{cor:PD_monopoly_condition}) satisfy these assumptions. 

We also restrict the set of possible prices for each seller to a discrete subset of equally spaced numbers between the competitive $p_{C}$ and monopoly prices $p_{M}$: 
\begin{equation}
p_{C}=p_{1}<\dots<p_{K}=p_{M}.
\end{equation}
Fix $\Delta=p_{i}-p_{i-1}$ for $i=1,\dots,K$ to be the difference between two consecutive prices. In a slight abuse of notation,
we denote:
\begin{equation}
\begin{array}{l}
p_{M-\Delta}=p_{M}-\Delta=p_{K-1}\\
p_{M-2\Delta}=p_{M}-2\cdot\Delta=p_{K-2}\\
\dots\\
p_{C+\Delta}=p_{C}+\Delta=p_{1}
\end{array}
\end{equation}

\subsection{Algorithms}

\label{algos}

An algorithm, in its simplest form, is a function $s:$ $p^{-i}\rightarrow p^{i}$ that determines which price $p^{i}$ is chosen by seller $i$ in response to the competitor's price  $p^{-i}$.\footnote{Algorithms broadly are a set of instructions that mechanically map some input information to an pre-specified set of outputs. Different notions of algorithms have been adopted thus far in the economics literature. The definition used here mostly follows \citet{competiton_algo}. In \citet{algo_pakes} and \citet{platform_algo}, sellers adopt the Q-learning algorithm, often used in practice. In \citet{harrington_algo}, an algorithm maps demand shocks not observable to the sellers to a prescribed price. A generalized version of our notion, also used in \citet{bruno_tacit}, defines an algorithm to be a mapping $s:\omega^{i},p^{-i}\rightarrow\nu^{i},p^{i}$, where $\omega^{i}$ represents the current state of the algorithm of seller $i$ and $\nu^{i}$ is the subsequent state. As discussed later in Section \ref{final remarks}, if the cardinatlity of these states is not too large, our results can largely be extended to such algorithms.} Prices are observable but sales of the competitor are not, or equivalently algorithms react only to prevailing prices, which we think is a reasonable assumption, especially for online sellers.

We assume algorithms can change prices once per small interval
of time $\Delta t$. Suppose, the prices react at the following calendar times:
$0,\Delta t,\ 2\cdot\Delta t,\ 3\cdot\Delta t,\dots$ If $s^{A}$
and $s^{B}$ are the corresponding algorithms of the two sellers, then
the prices will evolve according to
\begin{equation}
\begin{array}{c}
p_{t+\Delta t}^{A}=s^{A}\left(p_{t}^{B}\right)\\
p_{t+\Delta t}^{B}=s^{B}\left(p_{t}^{A}\right)
\end{array}\label{eq:prices}
\end{equation}

If the algorithms of the two sellers are temporarily fixed, then there
are only two possible outcome for the price sequences generated from (\ref{eq:prices}).
\begin{enumerate}
\item They converge to fixed price pair ($p^{A},p^{B})$ such that:
\begin{equation}
p^{A}=s^{A}(p^{B})\qquad p^{B}=s^{B}(p^{A})\label{eq:final_pair}
\end{equation}
\item They converge to a cycle of length $T$:
\begin{equation}
\begin{array}{ccccc}
p_{1}^{A}\rightarrow & p_{2}^{A}\rightarrow & \dots\rightarrow & p_{T}^{A}\rightarrow & p_{1}^{A}\rightarrow\dots\\
p_{1}^{B}\rightarrow & p_{2}^{B}\rightarrow & \dots\rightarrow & p_{T}^{B}\rightarrow & p_{1}^{B}\rightarrow\dots
\end{array}\label{eq:cycle}
\end{equation}
\end{enumerate}

The sellers can also periodically adjust their algorithms. The possibility to adjust the algorithms arrives at an exogenous Poisson process with intensity $\mu$; so the probability of the event happening in the interval $\Delta t$ is approximately $\ensuremath{\mu}\ensuremath{\cdot}\ensuremath{\Delta t}$. In addition, whenever one of the sellers adjusts their algorithm, they can pick an initial short sequences of prices. So, for a few subsequent intervals of length $\Delta t$, prices can be chosen by the seller intentionally, after which the price setting is delegated to the new algorithm.

This initial set of prices can be employed for two objectives. First, sellers experiment and learn the opponent's algorithm before setting their own. It takes at most $K$ prices adjustments ($K$ is the cardinality of price set) to fully understand the opponent's algorithm. The cost of this is proportional to $K\cdot\Delta t$, which is linear in $K$ and unit of time, and hence small for a small $\Delta t$. Second, upon deciphering the opponent's algorithm, the seller adjusting the algorithm can choose initial few prices to select their most preferred price sequence that satisfies (\ref{eq:prices}) for the prevailing algorithms $s^{A}$ and $s^{B}$. Think of this as the initial condition for the new system that is then delegated to determine prices. Assumption 1 below ensures that the initial prices for both these objectives, while important in setting algorithms, are not payoff relevant.\footnote{A different interpretation of similar modeling assumptions is used in \citet{bruno_tacit} (and to some extent in \citet{competiton_algo} as well). There the time between revision opportunities is used by the seller to decipher the opponent's algorithm. So, learning is captured exogenously through the random arrival times to set algorithms. The same interpretation can be applied in our setting. The experimentation interpretation, which we slightly prefer, is facilitated by the smallness of time required for price adjustments in our model. In Appendix E of the paper, we discuss the issue of experimentation further.}

\subsection{Simplifying assumptions \label{subsec:assumptions}}

In what follows, we make three simplifying assumptions. These make the overall analysis tractable and allow us to focus on the ``big picture'' of how quickness of response and temporary commitment offered by algorithms leads to collusion. In section \ref{sec:extensions} we relax two of these and show that our key results continue to hold.
\begin{itemize}
\item \textbf{Assumption 1. }The interval $\Delta t$ is sufficiently small (in particular, we need $r\Delta t\ll 1$ and $\mu\Delta t \ll 1$) so that initial price adjustments governed by (\ref{eq:prices}) can be ignored and only convergent prices, given by (\ref{eq:final_pair}) or \eqref{eq:cycle} are payoff relevant.
\item \textbf{Assumption 2. }Cycles are ruled out. A seller changing their
algorithms is not allowed to choose a new algorithm (and an initial price sequence)
such that together with the existing algorithm of the opponent the pair
results in a cycle (\ref{eq:cycle}).
\item \textbf{Assumption 3. }If a seller adjusting their algorithm is indifferent
between two options (in terms of continuation payoffs), the seller
would choose the one that is preferred by their opponent.
\end{itemize}

\textbf{Assumption 1} represents the two key features of algorithms that we attempt to model, that they respond quickly, and they offer temporary commitment to sellers. These are ensured by $r\Delta t\ll1$ and  $\frac1{\mu}\gg\Delta t$. As a consequence, only the convergent prices matter for payoffs. We view the smallness of $\Delta t$ as an important aspect of algorithmic pricing and thus a critical modeling choice for our analysis. The assumption that adjustments driven by (\ref{eq:prices}) pre-convergence can be ignored is made for simplicity and will be relaxed in Section \ref{sec:extensions} to show that all the basic insights carry through.

Our motivation for \textbf{assumption 2} is that you normally would not see cycles in reality. A cycle with frequently changing prices (for
example, every minute or even every hour) would be confusing for
consumers. Moreover, when consumers see frequently changing prices
they can time their purchases and wait for the best prices from each
of the sellers, which is not in the sellers' interest either.\footnote{Formally, disallowing cycles amounts to restricting the strategy space of the underlying stage game, that is, for a fixed strategy of the opponent, a restricted set of strategies is available to a seller. Since the probability of both sellers receiving opportunities to revise the algorithm at the same time is zero, \textbf{Assumption 2} still leads to a well defined dynamic game. One can rationalize this through a high cost of entering a cycle.} In Section \ref{sec:extensions} we show
that our results for the case of two possible prices 
do not change if we do indeed allow cycles.


\textbf{Assumption 3} is typically never binding. If 
two different algorithms generate different subsequent paths in terms
of prices, then the continuation payoffs would only be identical for
certain very special discount factors $\beta$ or for certain very
special payoff matrices $\{\pi(p,q)|p,q=p_{C},p_{C+\Delta},\dots,p_{M}\}$.
We make this assumption so that even in such special cases we do not have to deals with the multiplicity of optimal responses.

\subsection{Expected payoffs and reduction to an asynchronous repeated game}

Suppose that the current algorithm of seller $i$ is $s^{i}$ and
the current algorithm of their competitor is $s^{-i}$ and seller
$i$ was the last to adjust their algorithm. Denote by
\begin{itemize}
\item $\xi(s^{-i},s^{i})=(p^{-i},p^{i})$: the price pair to which algorithms would converge; it would be a price pair consistent with these two algorithms and since $i$ was the last to adjust their algorithm, amongst all such price pairs it will be the one most preferred by $i$.
\item $\pi(s^{i},s^{-i})=\pi(p^{i},p^{-i})$: the expected payoff from one customer for the seller who was the last to adjust their algorithm after we reach the convergent price pair. 
\item $\bar{\pi}(s^{-i},s^{i})=\pi(p_{-i},p_{i})$: the expected payoff from one customer for the seller who was {\it not} the last to adjust their algorithm after we reach the convergent price pair. 
\end{itemize}

Index all the moments when one of the sellers is adjusting
their algorithm with $k=0,1,2,\dots$. Suppose the seller adjusting
their algorithm at moment $k$ chooses the algorithm $s_{k}$, thus
we end up with the following sequence of algorithms $s_{0},s_{1},s_{2},\dots$. In this sequence all ``even'' algorithms correspond to one seller and all ``odd'' algorithms correspond to the other.\footnote{Suppose seller $A$ gets the first random draw to set their algorithm. Then even if they get a second random draw to adjust their algorithm before seller $B$'s first turn, it is never optimal to deviate from the current algorithm.}

Now consider some arbitrary adjustment moment $k$. Denote by $U_{k}$
the expected continuation payoff of the seller adjusting their algorithm,
and by $V_{k}$ the expected continuation payoff of the competitor.
With Assumption 1, we can calculate the continuation payoffs
recursively:

{\footnotesize{}
\begin{equation}
U_{k}=\int_{0}^{\infty}e^{-rt}\cdot e^{-\mu t}\left[\lambda\cdot dt\cdot\pi(s_{k},s_{k-1})+\mu\cdot dt\cdot V_{k+1}\right]=\frac{\lambda}{r+\mu}\pi(s_{k},s_{k-1})+\frac{\mu}{r+\mu}V_{k+1}\label{eq:U}
\end{equation}
\begin{equation}
V_{k}=\int_{0}^{\infty}e^{-rt}\cdot e^{-\mu t}\left[\lambda\cdot dt\cdot\bar{\pi}(s_{k-1},s_{k})+\mu\cdot dt\cdot U_{k+1}\right]=\frac{\lambda}{r+\mu}\bar{\pi}(s_{k-1},s_{k})+\frac{\mu}{r+\mu}U_{k+1}\label{eq:V}
\end{equation}
}

\noindent If we normalize the continuation payoffs as \begin{equation}
u_{k}=\frac{U_{k}}{\lambda/r} \text{ and } 
v_{k}=\frac{V_{k}}{\lambda/r},
\end{equation}
and denote normalized discount factor as
\begin{equation}
\beta=\frac{\mu}{r+\mu}\label{eq:beta}
\end{equation}
then we get the following two simple equations describing the  evolution of payoffs: 
\begin{equation}
\begin{array}{l}
u_{k}=(1-\beta)\cdot\pi(s_{k},s_{k-1})+\beta\cdot v_{k+1}\\
v_{k}=(1-\beta)\cdot\bar{\pi}(s_{k-1},s_{k})+\beta\cdot u_{k+1}
\end{array}\label{eq:uv}
\end{equation}
Thus, we have essentially converted our more complicated `continuous time
game' into an {\it asynchronous repeated game} in discrete time with discount factor $\beta$,
in which one seller adjusts their action (which is an algorithm) in
``even'' periods and the other in ``odd'' periods and the payoffs
in each in period are determined by finite matrices $\pi(s_{k},s_{k-1})$
and $\bar\pi(s_{k-1},s_{k})$.

A comment on the normalization above is in order. Note that the normalized expected payoffs from equations \eqref{eq:uv} do not directly depend on customer arrival rate $\lambda$. This happens because the rate is essentially a scaling factor that multiplies profits of each seller, and, thus does not affect their optimal behavior. The effective discount fact $\beta$ is determined by the ratio between the frequency of algorithm revision opportunities $\mu$ and the interest rate $r$ faced by the sellers. In realistic scenarios $r\ll\mu$, and, thus, $\beta$ is typically close to 1, but still bounced away from it. For example, if the interest rate is 5\% per year and each seller has (on average) 5 opportunities to revise their algorithm in a year, then $\beta=\frac{5}{0.05+5}\approx 0.99$.

\subsection{Markov perfect equilibria}

In our analysis we focus mainly on Markov Perfect Equilibria. In
such equilibria when one of the sellers is adjusting their algorithm, their best response depends only on the current algorithms used by their opponent. This means that if the seller would face the same algorithms of the opponent in the future, they would respond in the same manner. In Section \ref{sec:extensions} we discuss the more general class of subgame perfect equilibria.

Two quick things to note in motivating MPE: First, while equilibria with richer intermediate levels of collusion are possible with subgame as opposed to Markov perfection, they are typically rather complicated. They require each seller to have very specific beliefs about the behavior of their competitor, which may not hold muster in reality. Second, we want to point out that since the space of possible algorithms is typically quite large, even Markov Perfect Equilibria are not necessarily very simple. The reader will shortly learn these equilibira still cover a wide range of possibilities.\footnote{\citet{maskin_tirole2} (in Section 10 of the paper) provide  four reasons for studying MPE as a benchmark of sorts in repeated oligopoly models, and we refer the reader to their persuasive pitch for our choice of MPE. To be clear, this is not to say that characterizing SPE is not important, just that MPE is a good benchmark to work with. Moreover, it is perhaps more surprising that extreme collusion is widespread in MPE, that varying levels of collusion can be sustained in SPE for high discounting is to be expected.}

In a Markov Perfect Equilibrium, the current algorithm of the competitor, say $s$, is regarded as the state variable. Then, for each seller ($i=A,B)$ we can define:
\begin{itemize}
\item $f^{i}(s)$: best response of seller $i$ to the strategy $s$ of the opponent when $i$ is adjusting their algorithm.
\item $u^{i}(s)$: the continuation payoff of $i$ if $i$ is changing their algorithms and the current algorithms of the opponent is $s$.
\item $v^{-i}(s)$: the continuation payoff of $-i$ if $i$ is changing
their algorithm and the current algorithm of $-i$ is $s$.
\end{itemize}
In equilibrium the best response $f^{i}(s)$ and the continuation
payoffs $u^{i}$ and $v^{-i}$ should satisfy:
\begin{equation}
\begin{array}{l}
u^{i}(s)=\max_{s'}(1-\beta)\cdot\pi(s',s)+\beta\cdot v^{i}(s')\\
\ \\
f^{i}(s)\in\arg\max_{s'}(1-\beta)\cdot\pi(s',s)+\beta\cdot v^{i}(s')\\
\ \\
v^{-i}(s)=(1-\beta)\cdot\bar{\pi}(s,f^{i}(s))+\beta\cdot u^{-i}(f^{i}(s))
\end{array}\label{eq:MPE}
\end{equation}
The sequence of algorithms $s_{0},s_{1},s_{2},\dots$ in which each
algorithm is a best response to the previous one (i.e. for
any $k$, $s_{2k+1}=f^{i}(s_{2k})$ and $s_{2k+2}=f^{-i}(s_{2k+1})$)
will be termed an {\it equilibrium sequence} starting with $s_{0}$.

\section{Two prices and reduction to a Prisoner's dilemma game \label{sec:PD}}

\subsection{Expected payoff from a single customer}

We start our analysis with a simple case in which each seller $i=A,B$
can choose one of the two prices: $p_{M}$ (monopoly) or $p_{C}$ (competitive).
The expected payoffs from one customer can be described with a simple
two-by-two matrix (see Table \ref{tab:PD_payoffs}). Assumption 0 then reduces to the following conditions.

The payoff matrix has the Prisoners' dilemma type structure, that
is
\begin{equation}
\pi(p_{M},p_{C})<\pi(p_{C},p_{C})<\pi(p_{M},p_{M})<\pi(p_{C},p_{M}),\label{eq:PD_conditions}
\end{equation}
In addition, the monopoly outcome $(p_{M},p_{M})$
is superior from a utilitarian perspective
\begin{equation}
2\cdot\pi(p_{M},p_{M})>\pi(p_{C},p_{M})+\pi(p_{M},p_{C}).\label{eq:PD_so}
\end{equation}

In order to describe how possible equilibria depend on the payoff
matrix, it would be convenient to define
\begin{equation}
x=\frac{\pi(p_{C},p_{M})-\pi(p_{M},p_{M})}{\pi(p_{M},p_{M})-\pi(p_{C},p_{C})}\qquad y=\frac{\pi(p_{C},p_{C})-\pi(p_{M},p_{C})}{\pi(p_{M},p_{M})-\pi(p_{C},p_{C})}.\label{eq:PD_xy}
\end{equation}
Then, the Prisoners' dilemma game in Table \ref{tab:PD_equivalent}
is ``equivalent" to the original game in the sense that set of equilibria in both Tables \ref{tab:PD_payoffs} and \ref{tab:PD_equivalent} are exactly the same. So, we can describe all possible equilibria in the static game in terms of $x$ and $y$, and in the dynamic game in terms of $x, y$ and $\beta$.

The parameter $x$ shows by how
much the deviation from monopoly, $\pi(p_{C},p_{M})$, exceeds the monopoly
payoff, $\pi(p_{M},p_{M})$, and $y$ shows by how much $\pi(p_{M},p_{C}$)
is below the competitive payoff, $\pi(p_{C},p_{C})$. Both are measured relative to the total distance $\pi(p_{M},p_{M})-\pi(p_{C},p_{C})$. Conditions
\eqref{eq:PD_conditions} and \eqref{eq:PD_so} are now equivalent to
\begin{equation}
\begin{array}{l}
x,y>0\\
y>x-1
\end{array}
\end{equation}

\begin{table}
\centering{}%
\begin{tabular}{|c|c|c|}
\hline 
$A\backslash B$ & $p_{M}$ & $p_{C}$\tabularnewline
\hline 
$p_{M}$ & $\pi(p_{M},p_{M}),\,\pi(p_{M},p_{M})$ & $\pi(p_{M},p_{C}),\,\pi(p_{C},p_{M})$\tabularnewline
\hline 
$p_{C}$ & $\pi(p_{C},p_{M}),\,\pi(p_{M},p_{C})$ & $\pi(p_{C},p_{C}),\,\pi(p_{C},p_{C})$\tabularnewline
\hline 
\end{tabular}\caption{\label{tab:PD_payoffs}Expected payoffs from one customer for the two prices model}
\end{table}
\begin{table}
\centering{}%
\begin{tabular}{|c|c|c|}
\hline 
$A\backslash B$ & $p_{M}$ & $p_{C}$\tabularnewline
\hline 
$p_{M}$ & $1,\,1$ & $-y,\,1+x$\tabularnewline
\hline 
$p_{C}$ & $1+x,\,-y$ & $0,\,0$\tabularnewline
\hline 
\end{tabular}\caption{\label{tab:PD_equivalent} Normalizing and re-writing the payoff matrix in Table \ref{tab:PD_payoffs} in terms of $x$ and $y$.}
\end{table}
\medskip{}

\subsection{Algorithms and equilibria}

With two prices we can easily describe all possible algorithms that
the sellers could use, see Table \ref{tab:PD_algorithms}. Here, $s_C$ represents the algorithm that chooses price $p_{C}$ no matter what the competitor does, and similarly $s_{M}$ is the algorithm that always chooses $p_M$. Further, $s_{T}$ is the tit-for-tat algorithm that replicates the competitor's action, and $s_{R}$ is the reverse-tit-for-tat that chooses the exact opposite of the competitor's action.

\begin{table}
\centering{}%
\begin{tabular}{|c|l|l|}
\hline 
$s_{C}$ & $p_{C},p_{M}\rightarrow p_{C}$ & always ``competitive'' price\tabularnewline
\hline 
$s_{M}$ & $p_{C},p_{M}\rightarrow p_{M}$ & always ``monopoly'' price\tabularnewline
\hline 
$s_{T}$ & $p_{C}\rightarrow p_{C},p_{M}\rightarrow p_{M}$ & tit-for-tat\tabularnewline
\hline 
$s_{R}$ & $p_{C}\rightarrow p_{M},p_{M}\rightarrow p_{C}$ & ``reverse'' tit-for-tat\tabularnewline
\hline 
\end{tabular}\caption{\label{tab:PD_algorithms}Possible algorithms for two prices}
\end{table}

Given four possible choices of algorithms, we can completely describe convergent outcomes, this is reported in Table \ref{tab:PD_outcomes}. The entries in the table report the possible final prices for any pairs of algorithms, $\xi(s^{-i},s^{i})$. For example, when one of the algorithms is $s_{T}$ and the other is $s_{R}$ we
end up with a cycle (which is ruled out as a final outcome by assumption). When both
algorithms are $s_{R}$, the seller who was the last to adjust their
algorithm can the choose initial price sequence and, thus, out of the two possible
final prices pairs $(p_{C},p_{M})$ and $(p_{M},p_{C})$ we choose
the one that is preferred by this seller.

\begin{table}
\centering{}{\scriptsize{}}%
\begin{tabular}{|c|c|c|c|c|}
\hline 
 & {\scriptsize{}$s_{M}$} & {\scriptsize{}$s_{C}$} & {\scriptsize{}$s_{T}$} & {\scriptsize{}$s_{R}$}\tabularnewline
\hline 
{\scriptsize{}$s_{M}$} & {\scriptsize{}$p_{M},p_{M}$} & {\scriptsize{}$p_{M},p_{C}$} & {\scriptsize{}$p_{M},p_{M}$} & {\scriptsize{}$p_{M},p_{C}$}\tabularnewline
\hline 
{\scriptsize{}$s_{C}$} & {\scriptsize{}$p_{C},p_{M}$} & {\scriptsize{}$p_{C},p_{C}$} & {\scriptsize{}$p_{C},p_{C}$} & {\scriptsize{}$p_{C},p_{M}$}\tabularnewline
\hline 
{\scriptsize{}$s_{T}$} & {\scriptsize{}$p_{M},p_{M}$} & {\scriptsize{}$p_{C},p_{C}$} & {\scriptsize{}$p_{M},p_{M}$} & {\tiny{}$\begin{array}{l}
p_{C},p_{C}\rightarrow p_{C},p_{M}\rightarrow\\
p_{M},p_{M}\rightarrow p_{M},p_{C}
\end{array}$}\tabularnewline
\hline 
{\scriptsize{}$s_{R}$} & {\scriptsize{}$p_{C},p_{M}$} & {\scriptsize{}$p_{M},p_{C}$} & {\tiny{}$\begin{array}{l}
p_{C},p_{C}\rightarrow p_{M},p_{C}\rightarrow\\
p_{M},p_{M}\rightarrow p_{C},p_{M}
\end{array}$} & {\scriptsize{}$p_{C},p_{M}$}\tabularnewline
\hline 
\end{tabular}\caption{\label{tab:PD_outcomes}Final outcomes for different pairs of algorithms assuming row player was the last to choose algorithm}
\end{table}

Now, we can state the main result that completely characterizes equilibria for this two-price game. For starters, an equilibrium always exists. In addition, the universe of equilibria takes a simple form in that it all boils down to three parameters: $x$, $y$ and $\beta$, and we can characterize this set precisely. To simplify notation, we write $s^{-i}\rightarrow s_{i}$ to mean $f(s^{-i}) = s^{i}$, that is the strategy $s_{i}$ optimally chosen by seller $i$ when the competitor is playing $s^{-i}$. 

\begin{theorem}
\label{th:PD_equilibria}$\ $
\begin{enumerate}
\item A Markov equilibrium always exists.  
\item If $x\le\beta$ only the following Markov equilibrium is possible:
\[
\begin{array}{l}
s_{R}\rightarrow s_{C}\rightarrow s_{T}\\
s_{T}\rightarrow\{s_{T},s_{M}\}\\
s_{M}\rightarrow\{s_{T},s_{M}\}
\end{array}\qquad(type\ I)
\]
\item If $x>\beta$ then 
\begin{enumerate}
\item the following equilibrium is always possible: 
\[
s_{R},s_{M}\rightarrow s_{C}\rightarrow s_{T}\rightarrow s_{T}\qquad(type\ II)
\]
\item in addition if $y<\beta\cdot(x-\beta)$ the following equilibrium
is also possible: 
\[
\begin{array}{l}
s_{M},s_{R}\rightarrow s_{R}\\
s_{C},s_{T}\rightarrow s_{T}
\end{array}\qquad(type\ III)
\]
\end{enumerate}
\end{enumerate}
\end{theorem}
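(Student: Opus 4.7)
The plan is to work directly with the MPE recursion in \eqref{eq:MPE}. First I would tabulate $\pi(s',s)$ and $\bar\pi(s,s')$ for all non-cyclic algorithm pairs by combining Tables \ref{tab:PD_outcomes} and \ref{tab:PD_equivalent}: the only values that appear are $1$, $0$, $-y$, and $1+x$, depending on which convergent price pair is selected.

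For each of the three candidate equilibrium types I would posit the prescribed $f$, solve the linear recursion for $u(s)$ and $v(s)$ starting from the terminal cycle, and then check the best-response inequalities at every state. For types I and II the terminal cycle is $f(s_T)=s_T$, giving $u(s_T)=v(s_T)=1$, and the remaining values follow in at most two substitutions (notably $u(s_C)=v(s_C)=\beta$). For type III the terminal cycle is $f(s_R)=s_R$, in which the convergent price pair flips roles at each algorithm revision; solving the resulting two-equation linear system in $u(s_R),v(s_R)$ yields $u(s_R)=\frac{1+x-\beta y}{1+\beta}$ and $v(s_R)=\frac{\beta(1+x)-y}{1+\beta}$. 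The decisive comparison in every case is at $s=s_M$: the deviation $s_T$ yields $1$, the deviation $s_C$ yields $(1-\beta)(1+x)+\beta^2$, and (under type III) the deviation $s_R$ yields $(1-\beta)(1+x)+\beta v(s_R)$. The inequality $1\ge(1-\beta)(1+x)+\beta^2$ reduces to $x\le\beta$ and characterises type I; its reverse characterises type II; and $\beta v(s_R)\ge\beta^2$ reduces, via the closed form for $v(s_R)$, to $y\le\beta(x-\beta)$, which characterises type III. Existence of an MPE then follows because every parameter configuration lies in either the $x\le\beta$ regime (admitting type I) or the $x>\beta$ regime (admitting type II).

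The main obstacle is necessity---showing that every MPE fits one of these three templates. I would argue this through a structural lemma: since the state space is finite and $f$ deterministic, the equilibrium sequence eventually enters a cycle of $f\circ f$, and a short direct computation shows that any two-step cycle $f(s_1)=s_2, f(s_2)=s_1$ with $s_1\neq s_2$ is destabilised by a deviation. For instance the candidate $f(s_M)=s_C$, $f(s_C)=s_M$ yields $u(s_C)=-y$ while the deviation $s_C$ at $s=s_C$ yields $\beta v(s_C)=\beta(1+x)$, a clear profitable deviation. Hence the terminal cycle has length one. One can then rule out $f(s_C)=s_C$ (it forces $u(s_C)=0$, which the deviation $s_T$ strictly beats whenever $v(s_T)>0$, and a short induction shows the latter always holds in any consistent MPE) and $f(s_M)=s_M$ (dominated by $s_T$ or $s_C$ for every $\beta<1$), leaving only $s_T$ and $s_R$ as possible terminal fixed points. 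Given the terminal pattern, the non-terminal responses at $s_C$, $s_M$, and $s_R$ are pinned down by direct comparisons whose outcomes depend only on the signs of $x-\beta$ and, in the $x>\beta$ regime, of $y-\beta(x-\beta)$. The main technical task is the bookkeeping for this case analysis; Assumption~2 (cycle exclusion) and the sparsity of distinct payoff values keep the enumeration manageable, and Assumption~3 takes care of the boundary ties at $s_T$ and $s_M$ in the type-I characterisation.
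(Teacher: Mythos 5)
Your verification calculations are correct and match the paper's: the closed forms $u(s_C)=v(s_C)=\beta$, $u(s_R)=\frac{1+x-\beta y}{1+\beta}$, $v(s_R)=\frac{\beta(1+x)-y}{1+\beta}$, and the thresholds $x\le\beta$ and $y<\beta(x-\beta)$ obtained from the comparisons at $s_M$ and $s_R$ are exactly the inequalities the paper derives. The gap is in your necessity argument, and it is not merely bookkeeping. Your structural lemma claims that every two-step cycle $f(s_1)=s_2$, $f(s_2)=s_1$ with $s_1\ne s_2$ is destabilised and that $f(s_M)=s_M$ is dominated, so the terminal behaviour must be a fixed point at $s_T$ or $s_R$. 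Both claims are false, and they contradict the very statement you are proving: type I explicitly allows $s_T\rightarrow\{s_T,s_M\}$ and $s_M\rightarrow\{s_T,s_M\}$, so $f^i(s_T)=s_M$, $f^{-i}(s_M)=s_T$ is a legitimate terminal two-cycle (every step yields $(p_M,p_M)$ and payoff $1$ forever, so no deviation is profitable), and $f(s_M)=s_M$ exactly ties with $f(s_M)=s_T$ --- Assumption 3 does not break this tie because the opponent is also indifferent. Your destabilisation computation works for the $(s_M,s_C)$ cycle precisely because the two states generate different price pairs; it does not generalise to cycles whose states all induce $(p_M,p_M)$.

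Two further points need attention. First, restricting to cycles of length at most two is not justified: with four states the alternating dynamics could a priori settle into a longer cycle (e.g.\ a three- or four-state loop through $s_C$, $s_M$, $s_R$), and these must be excluded. The paper avoids this by pinning down $f(s)$ state by state --- first showing $\xi(s,f^i(s))\ne(p_C,p_M)$ (the opponent would freeze their algorithm), then enumerating the handful of equilibrium sequences consistent with $f(s_T)=s_C$ or $f(s_C)=s_C$ and exhibiting a profitable deviation in each --- which forecloses long cycles without ever classifying them. Second, your evaluation of the deviation to $s_T$ at state $s_C$ requires $v(s_T)>0$, which in turn requires first establishing $f(s_T)\ne s_C$; the ``short induction'' you defer is precisely the paper's step showing that the sequences $s_T,s_C,s_T,\dots$; $s_T,s_C,s_C,\dots$; $s_T,s_C,s_C,s_T,\dots$ each admit a deviation to $s_T$ at position one, and it must come before, not after, the analysis at $s_C$. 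With the terminal classification corrected to ``all terminal states generate $(p_M,p_M)$, or the terminal state is $s_R$'' and the order of the state-by-state eliminations fixed, your argument would coincide with the paper's.
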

\begin{figure}
\begin{centering}
\includegraphics[height=10cm]{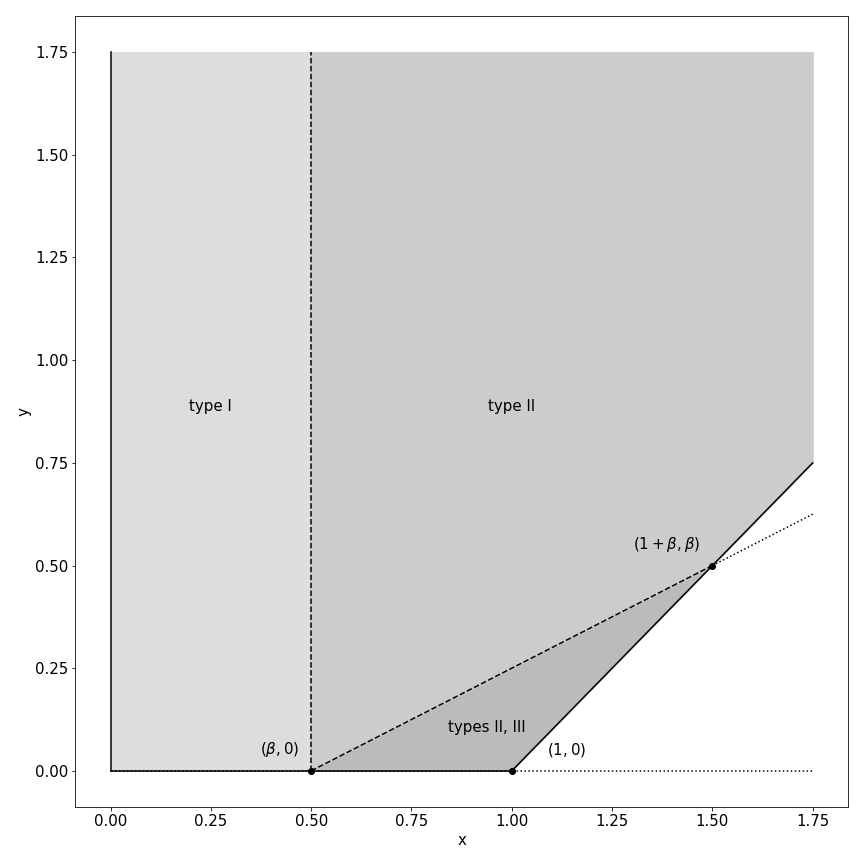}
\par\end{centering}
\caption{\label{fig:PD_equilibria}Possible equilibria for varying values of $x$ and $y$, $\beta$ fixed at 0.5.}
\end{figure}

Theorem \ref{th:PD_equilibria} describes which equilibria are
possible depending on structure of the payoff matrix (i.e., $x$ and $y$) and effective discount factor (i.e. $\beta$). For example, if $\beta$ is lower than $x$, then in the type II equilibrium, for algorithm $s_{R}$ or $s_{M}$ chosen by the opponent, it is optimal for the seller to set the algorithm $s_{C}$. This is because for a low enough discounting, the player choosing the algorithm cares about the immediate payoff and thus wants prices to converge to $(p_C,p_M)$ in the short run. However, going further, when it is the opponent's turn to react to $s_C$, they choose $s_T$ for reasons explained in the introduction; and so on. Figure \ref{fig:PD_equilibria} illustrates the result graphically---fixing $\beta$ it partitions the $(x,y)$ space into the three types of equilibria reported in Theorem \ref{th:PD_equilibria}.

The proof of Theorem \ref{th:PD_equilibria} takes a general approach as it treads through the set of best responses for each chosen algorithm of the opponent. We first show that it is always optimal to respond to $s_C$ with $s_T$. To do that we consider alternative responses and equilibrium sequences of algorithms that such responses can generate and show that in each of them it would be optimal for at least one of the sellers to deviate from the current strategy. Then, we show after $s_T$, the best response is either $s_T$ or $s_M$ and  we can only observe monopoly prices $(p_M,p_M)$ even after subsequent opportunities to revise algorithms. Finally, we find the best responses to $s_R$ and $s_M$ by comparing the expected payoffs for each of the alternatives.

What are the  pricing outcomes corresponding to these equilibria? The next result documents that the answer to this follows immediately from matching the three types of equilibria reported in Proposition \ref{th:PD_equilibria} with possible final outcomes reported in Table \ref{tab:PD_outcomes}. It states that for equilibria for types I and II, the unique outcome is the monopoly price $(p_{M},p_{M})$, and for type III equilibrium prices oscillate between $(p_{C},p_{M})$ and $(p_{M},p_{C})$.

\begin{corollary}
\label{cor:PD_prices}$\ $
\begin{enumerate}
\item In any equilibrium we either converge to a monopoly price pair $(p_{M},p_{M})$
(for equilibria of types I or II) or end up with an alternating sequence
of prices
\begin{equation}
(p_{C},p_{M})\rightarrow(p_{M},p_{C})\rightarrow(p_{C},p_{M})\rightarrow\dots
\end{equation}
in which prices change every time one of the sellers gets the opportunity
to adjust their algorithm (for equilibrium of type III).
\item Type III equilibrium (and the corresponding alternating price sequence) is only possible if
\[
x<2,\qquad y<\frac{x^{2}}{4}, \text{ and} \qquad \frac{x}{2}-\sqrt{\frac{x^{2}}{4}-y}<\beta<\frac{x}{2}+\sqrt{\frac{x^{2}}{4}-y}. 
\]
\end{enumerate}
\end{corollary}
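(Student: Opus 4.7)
The plan for part (1) is to read the price outcomes directly off Theorem \ref{th:PD_equilibria} together with the convergence table (Table \ref{tab:PD_outcomes}). In a type I equilibrium the best-response chains $s_R\to s_C\to s_T$ together with the self-loops on $\{s_T,s_M\}$ show that after at most two adjustment rounds the algorithm pair lies in the absorbing block $\{s_T,s_M\}\times\{s_T,s_M\}$, and every entry of this $2\times 2$ block in Table \ref{tab:PD_outcomes} equals $(p_M,p_M)$. Type II is even more immediate, since the chain $\{s_R,s_M\}\to s_C\to s_T\to s_T$ absorbs at $(s_T,s_T)$ with outcome $(p_M,p_M)$. For type III, the branch $\{s_C,s_T\}\to s_T$ absorbs at $(s_T,s_T)$ and still yields $(p_M,p_M)$, but on the branch $\{s_M,s_R\}\to s_R$ the pair absorbs at $(s_R,s_R)$; the $(s_R,s_R)$ dynamics has the two fixed points $(p_C,p_M)$ and $(p_M,p_C)$, and by the Prisoner's Dilemma ordering in \eqref{eq:PD_conditions} the seller who last adjusted strictly prefers the fixed point in which they are the undercutter, so each adjustment round flips the undercutting role and prices alternate between $(p_C,p_M)$ and $(p_M,p_C)$.

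For part (2) I would start from the two parameter conditions that Theorem \ref{th:PD_equilibria} singles out for type III existence, $x>\beta$ and $y<\beta(x-\beta)$. The second inequality is the quadratic $\beta^2-x\beta+y<0$, whose discriminant condition is $y<x^2/4$; the admissible window is then $\beta\in(\beta_-,\beta_+)$ with $\beta_\pm=x/2\pm\sqrt{x^2/4-y}$. The first condition $x>\beta$ is automatic because $\beta_+<x$ whenever $y>0$, so it contributes nothing new. What remains is to intersect $(\beta_-,\beta_+)$ with $(0,1)$: positivity $\beta_->0$ is equivalent to $y>0$, and the upper endpoint satisfies $\beta_+<1$ iff $\sqrt{x^2/4-y}<1-x/2$, which upon squaring (legitimate only once $x<2$ so that the right-hand side is positive) returns the condition $y>x-1$. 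Since Assumption 0 already supplies both $y>0$ and $y>x-1$ --- the latter being the utilitarian inequality \eqref{eq:PD_so} rewritten in $(x,y)$-coordinates --- the only fresh constraint produced by this intersection is $x<2$. Collecting the inequalities $x<2$, $y<x^2/4$, and $\beta\in(\beta_-,\beta_+)$ gives exactly the statement of the corollary.

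The main delicate step is ruling out $x\geq 2$: one needs to notice that in that regime the algebraic requirement $\beta_+<1$ would force $y<x-1$, which directly contradicts the utilitarian assumption \eqref{eq:PD_so}, so type III is simply unavailable when $x\geq 2$. Everything else is bookkeeping: making sure each of the ambient Assumption 0 inequalities is invoked in the correct place and that the quadratic window for $\beta$ is translated cleanly into the stated interval.
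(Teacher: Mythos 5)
Your argument follows the same route as the paper's: part (1) is read directly off Theorem \ref{th:PD_equilibria} together with Table \ref{tab:PD_outcomes} (including the observation that in type III the alternating sequence arises only on the $s_M,s_R$ branch, while the $s_C,s_T$ branch still absorbs at $(p_M,p_M)$), and part (2) is the algebraic repackaging of the type-III conditions $x>\beta$ and $y<\beta(x-\beta)$ from Theorem \ref{th:PD_equilibria} into the quadratic window $\beta\in(\beta_-,\beta_+)$. Your treatment is simply more explicit than the paper's one-line verification, and the observation that $x>\beta$ is redundant because $\beta_+<x$ whenever $y>0$ is correct.

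The one step that is not right as written is the derivation of $x<2$. Nothing requires $\beta_+<1$: type III at a given $\beta\in(0,1)$ only requires $\beta\in(\beta_-,\beta_+)$, so the endpoint condition that matters for the window to meet $(0,1)$ is $\beta_-<1$, not $\beta_+<1$. Moreover, for $x\ge 2$ the inequality $\beta_+<1$ is simply unsatisfiable (since $\beta_+\ge x/2\ge 1$); it does not ``force $y<x-1$'' as you assert --- it is $\beta_-<1$ that, for $x\ge 2$, is equivalent to $y<x-1$ and hence contradicts \eqref{eq:PD_so}. Your conclusion survives only because, under the maintained assumptions, both endpoints happen to lie on the same side of $1$, but the premise that $\beta_+<1$ is ``required'' is never established. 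The clean argument avoids endpoints altogether: combining $y>x-1$ from \eqref{eq:PD_so} with $y<\beta(x-\beta)$ gives $0<\beta x-\beta^{2}-(x-1)=(1-\beta)(1+\beta-x)$, and since $\beta<1$ this yields $x<1+\beta<2$. With that substitution the proof is complete and matches the paper's intended argument.
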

\begin{figure}
\begin{centering}
\includegraphics[height=10cm]{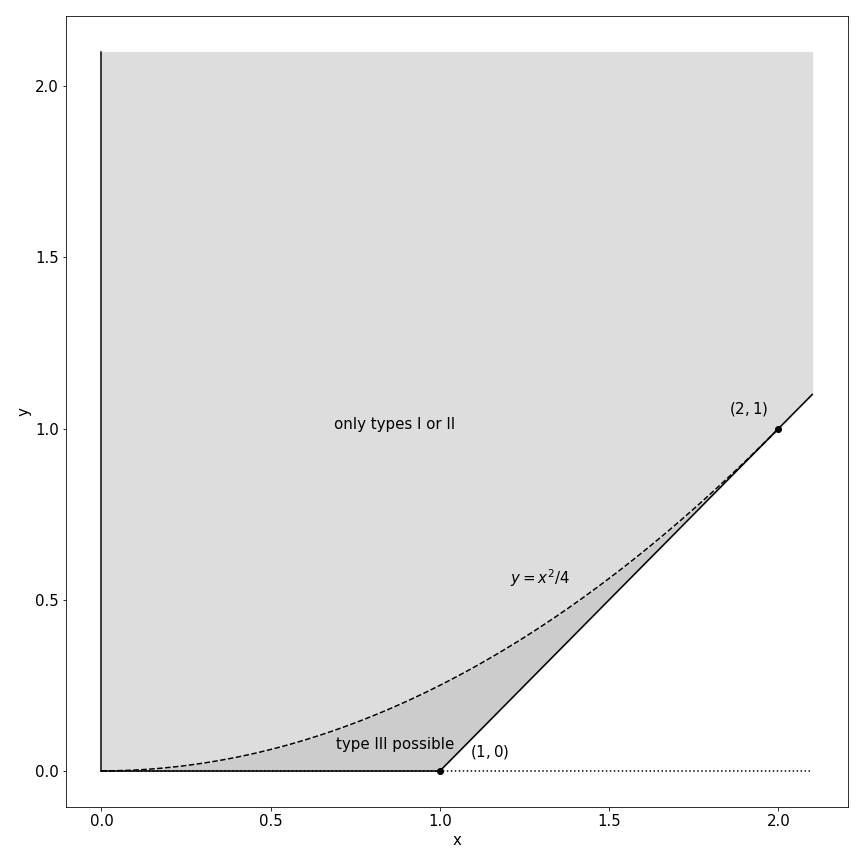}
\par\end{centering}
\caption{\label{fig:PD_alternating prices}When type III equilibrium with alternating prices can arise for some value of $\beta$.}
\end{figure}

The main message in Corollary \ref{cor:PD_prices} is that some level of market power is inevitable, for $(p_C,p_C)$ is never a convergent equilibrium outcome. Moreover, in {\it most} cases, the repeatedly play of the pure monopoly price is the unique equilibrium outcome. This claim can be visualized in Figure \ref{fig:PD_alternating prices}, which shades the values of $x$ and $y$ for which type III equilibrium is possible at least for some values of the discount factor $\beta$. Note that ``for some $\beta$" is a fairly weak criterion, and still the region is quite small. This leads us to conclude that equilibria of types I and II dominate the parameter space and hence the unique monopoly outcome is to expected in such two-price games. 

As a final step, we provide some simple sufficient conditions on the payoff matrix and the profit function under which repeated play of the monopoly price is the unique equilibrium outcome. 
\begin{corollary}
\label{cor:PD_monopoly_condition}$\ $
\begin{enumerate}
\item If
\[
y>x\quad\text{ that is } \quad \pi(p_{M},p_{C})+\pi(p_{C},p_{M})<\pi(p_{M},p_{M})+\pi(p_{C},p_{C}) \]
then only equilibria of type I or II are possible, which means that
prices always converge to a monopoly outcome $(p_{M},p_{M})$.
\item If payoff function $\pi(p,q)$ is quadratic (demand is linear):
\[
\pi(p,q)=p\cdot\frac{D-p+\alpha(q-p)}{2D}
\]
or if $\pi$ is a ``discrete choice'' profit function:
\[
\pi(p,q)=p\cdot\frac{e^{-bp}}{a+e^{-bp}+e^{-bq}}
\]
then again in any equilibrium, prices always converge to a monopoly outcome
($p_{M},p_{M})$.
\end{enumerate}
\end{corollary}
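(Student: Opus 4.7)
My plan is to derive both parts from the equilibrium classification in Corollary \ref{cor:PD_prices}, which states that type III equilibria (the only ones that fail to converge to the monopoly pair) require $x<2$ and $y<x^2/4$. Since types I and II both deliver convergence to $(p_M,p_M)$, the whole task reduces to ruling out type III under each of the hypotheses of the corollary.

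For part 1, the argument will be a one-line comparison. Assuming $y>x$, if $x\geqslant 2$ type III is already impossible by Corollary \ref{cor:PD_prices}; if instead $x<2$, then $x^2/4<x<y$, again contradicting $y<x^2/4$. Either way only types I or II survive, so prices converge to $(p_M,p_M)$.

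For part 2 my plan is to reduce each of the two functional forms to the hypothesis $y>x$ and then invoke part 1. From the definitions of $x$ and $y$,
\[
(\pi(p_M,p_M)-\pi(p_C,p_C))(y-x) = \pi(p_M,p_M)+\pi(p_C,p_C)-\pi(p_M,p_C)-\pi(p_C,p_M),
\]
so I need to show the right-hand side, call it $N$, is positive. I will rewrite it as
\[
N = p_M\bigl[d(p_M,p_M)-d(p_M,p_C)\bigr] - p_C\bigl[d(p_C,p_M)-d(p_C,p_C)\bigr]
\]
and verify positivity for each specification in turn. For the quadratic case, $d(p,p_M)-d(p,p_C)=\alpha(p_M-p_C)/(2D)$ is independent of $p$, so $N=\alpha(p_M-p_C)^2/(2D)>0$ immediately. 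For the discrete choice case, with $u_M=e^{-bp_M}$ and $u_C=e^{-bp_C}$, a short manipulation gives
\[
d(p,p_M)-d(p,p_C)=\frac{e^{-bp}(u_C-u_M)}{(a+e^{-bp}+u_M)(a+e^{-bp}+u_C)},
\]
and substituting $p=p_M$ and $p=p_C$ into the expression for $N$ I expect the common factor $(u_C-u_M)/(a+u_M+u_C)$ to pull out cleanly, leaving the bracket equal to $p_Mu_M/(a+2u_M)-p_Cu_C/(a+2u_C)=\pi(p_M,p_M)-\pi(p_C,p_C)$ (using $\pi(p,p)=pu/(a+2u)$). That will give
\[
N=\frac{(u_C-u_M)\bigl(\pi(p_M,p_M)-\pi(p_C,p_C)\bigr)}{a+u_M+u_C}>0,
\]
since $p_M>p_C$ forces $u_C>u_M$ and the Prisoner's Dilemma structure forces $\pi(p_M,p_M)>\pi(p_C,p_C)$. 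Then $y>x$ in both cases and part 1 closes the argument.

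The part I expect to be trickiest is spotting the clean factorization for the discrete choice profit. The observation I will lean on is that the two terms of $N$ both carry the factor $a+u_M+u_C$ in their denominators, while the remaining denominators $a+2u_M$ and $a+2u_C$ are exactly the ones needed to convert $p_Mu_M$ and $p_Cu_C$ into the diagonal payoffs $\pi(p_M,p_M)$ and $\pi(p_C,p_C)$. Without recognizing this, one would otherwise be forced into a direct comparison of $p_Mu_M/(a+2u_M)$ with $p_Cu_C/(a+2u_C)$ that requires the explicit first-order conditions for $p_M$ and $p_C$ and becomes considerably messier.
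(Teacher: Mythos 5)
Your proposal is correct and follows essentially the same route as the paper: rule out type III equilibria via the classification, reduce both functional forms to the condition $y>x$ by showing $N=\pi(p_M,p_M)+\pi(p_C,p_C)-\pi(p_M,p_C)-\pi(p_C,p_M)>0$, and in each case the sign comes down to the product of $p_M-p_C>0$ (equivalently $e^{-bp_C}-e^{-bp_M}>0$) with $\pi(p_M,p_M)-\pi(p_C,p_C)>0$. The only cosmetic difference is in the discrete-choice step, where the paper reaches the same factorization through the mediant-type identity $\frac{A}{B}+\frac{C}{D}>2\frac{A+C}{B+D}\Leftrightarrow(B-D)\left(\frac{A}{B}-\frac{C}{D}\right)>0$ rather than your direct expansion of $d(p,p_M)-d(p,p_C)$.
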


That $y>x$ leads uniquely to the monopoly outcome is already visible in Figure \ref{fig:PD_alternating prices}. Conceptually, a quick look at Table \ref{tab:PD_equivalent} tells us that if the gain from ``deviating" from $(p_M,p_M)$ is smaller than the loss the opponent would incur from this deviation, then the opponent cannot be incentivized with continuation value for any value of $\beta$. 

Part 2 of the result shows that for two widely used profit functions---the linear demand case invoked in various textbooks and the discrete choice model used often in applied work, repeatedly play of the monopoly outcome is again the unique equilibrium outcome. The proof essentially shows that for these profit functions the condition $y>x$ is always satisfied, hence only type I or II equilibria arise. 


\section{Multiple prices} \label{section multiple}

\subsection{Main result}

In this section we consider the case in which for each seller there
are more than two possible prices: $p_{C},p_{C+\Delta},\dots,p_{M-\Delta,}p_{M}$.
With multiple prices the number of possible algorithms explodes (there
are $3^{3}=27$ algorithms for 3 prices, and $4^{4}=256$ algorithms
for 4 prices and so on). Thus, it is now difficult to ``brute force" our way through all best response functions to describe Markov equilibria. 

With multiple prices there also exist {\it unequal} equilibria, in which one seller makes a profit even higher than the monopoly outcome.
Suppose $(p,q)$ is a pair of prices such that payoffs of the sellers at these prices $(\pi(p,q),\pi(q,p)$)
are on the convex hull of the Pareto Frontier and in addition that
\begin{equation}
\pi(p_{C},p_{C})<\pi(q,p)<\pi(p_{M},p_{M})<\pi(p,q).
\end{equation}
So, the payoff of one seller is above the monopoly outcome and the other is between the monopoly and competitive outcome. Now consider the following algorithms
\begin{equation}
\begin{array}{l}
s^{i}=\{q\rightarrow p,x\rightarrow p_{C}\,for\,x\ne q\}\\
s^{-i}=\{p\rightarrow q,x\rightarrow p_{C}\,for\,x\ne p\}
\end{array}
\end{equation}
This is a grim trigger sort of strategy that aims to sustain the choice of $(p,q)$ in equilibrium. If the discount factor $\beta$ is sufficiently low these two algorithms
form an equilibrium pair (for seller $i$ it is optimal
to respond with $s^{i}$ to $s^{-i}$ and the vice-versa). For
sufficiently low $\beta$: $\xi(s^{i},f^{-i}(s^{i}))=(p,q)$ and $\xi(s^{-i},f^{i}(s^{-i}))=(q,p)$ in any subgame perfect (and hence Markov perfect) equilibrium. Thus, each seller can guarantee a payoff of at least $\pi(p,q)$ or $\pi(q,p)$ in response to 
$s^{-i}$ and $s^i$ correspondingly. Finally, since the continuation payoffs $(\pi(p,q),\pi(q,p)$) are
on the convex hull of the Pareto frontier, none of the sellers would
be able to get a higher payoff by deviating from the current strategy.

Interestingly, such unequal equilibria can also be constructed for arbitrary high discount factors $\beta$. If seller $-i$ believes that for any of their response seller $i$ will keep their current strategy $s^{i}$, then they also do not have any incentives to deviate from $s^{-i}$. And for seller $i$ it would indeed be optimal to do that if $\pi(p,q)$ is sufficiently high. To illustrate these ideas, we explicitly construct such an equilibrium with 5 different prices (see Part D of the appendix). As in repeated games, it is however easier to operate in the payoff space and characterize key properties of the set of equilibrium payoffs, which is what we do in this section. 

The obvious question to then ask is this: Can we bound the equilibrium payoffs to illustrate the ideas of the inevitability of some collusion and commonality of high collusion? The main result shows that for any Markov equilibrium the payoff of both sellers is bounded from below by the competitive outcome $\pi(p_{C},p_{C})$, and in any equilibrium at least one seller gets a payoff that is eventually bounded from below by a number that is approximately equal to the monopoly profit $\pi(p_{M},p_{M})$.

\begin{theorem}
\label{prop:multiple_prices}Suppose $s_{1},s_{2},s_{3},\dots$ is
a path of algorithms in a Markov equilibrium. Then:
\begin{enumerate}
    \item payoff of both sellers is higher than the competitive outcome: 
    \[\forall\text{ } k\geqslant 1, \{u^{i}_{k}, v^{i}_{k+1}\}\geqslant \pi(p_{C},p_{C}) \text{ } \forall \text{ } i \in \{A,B\}, \text{ and}\]
    \item payoff of at least one seller is eventually bounded below approximately by the monopoly outcome: 
    \[\exists \text{ } k \text{ s.t. } \max_{m\geqslant k}\{u^{i}_{m}, v^{i}_{m+1}\}\geqslant (1-\beta)\cdot\pi(p_{M-\Delta},p_{M-\Delta})+\beta\cdot \pi(p_{M},p_{M}) \text{ for $i=A$ or $i=B$}.\]
\end{enumerate}
\end{theorem}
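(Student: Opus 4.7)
Plan: The two claims require different arguments; for Claim 1 I rely on the existence of a ``safe'' deviation algorithm, and for Claim 2 on a climbing-chain construction analogous to the two-step $s_C \to s_T$ argument from Section 3.

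\textbf{Claim 1 (competitive bound).} The plan is to show each seller has a security deviation to $s_C$ (always play $p_C$) that guarantees a per-period payoff of at least $\pi(p_C, p_C)$. Specifically, if seller $i$ holds $s_C$ while the opponent holds algorithm $s$, then the convergent price pair satisfies $p^i = p_C$ and $p^{-i} = s(p_C) \geq p_C$, so by Assumption 0(i), $\pi(p_C, s(p_C)) \geq \pi(p_C, p_C)$. If $i$ commits to $s_C$ at every future adjustment opportunity, every per-period payoff that $i$ collects is at least $\pi(p_C, p_C)$, so the discounted total of this deviation is also at least $\pi(p_C, p_C)$; equilibrium optimality then delivers $u^i_k \geq \pi(p_C, p_C)$ whenever $i$ is the one adjusting at $k$. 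The matching bound on $v^i_{k+1}$ I would obtain by combining the Bellman recursion $v^i_{k+1} = (1-\beta)\bar\pi(s_k, s_{k+1}) + \beta u^i_{k+2}$ with the just-established $u^i_{k+2} \geq \pi(p_C, p_C)$ and a revealed-preference inequality that forces $s_k$ to weakly dominate the $s_C$-deviation at moment $k$; algebraic rearrangement then yields $v^i_{k+1} \geq \pi(p_C, p_C)$.

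\textbf{Claim 2 (monopoly-like bound).} Here the plan is to construct a climbing deviation that any adjusting seller can invoke to push the equilibrium path toward the monopoly outcome. I would define a grim-trigger-style algorithm $s^\dagger$ that rewards cooperation at $p_M$ and punishes any deviation below: $s^\dagger(p_M) = p_M$ and $s^\dagger(q) = p_C$ for all $q < p_M$. When seller $i$ plays $s^\dagger$, I would analyze the opponent's best response: sustaining $(p_M, p_M)$ forever pays $\pi(p_M, p_M)$ each period, whereas any undercut to $s_{M-m\Delta}$ forces convergent prices $(p_C, p_M - m\Delta)$, yielding a single-period payoff $\pi(p_M - m\Delta, p_C)$ before the trigger collapses subsequent payoffs to the competitive level. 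Assumption 0(iv), by placing the relevant deviation points on the Pareto boundary of the achievable payoff set, reduces this comparison to a finite set of candidates and shows cooperation dominates for all $m \geq 1$. The climbing argument then asserts: along any equilibrium sequence, there is a moment at which a seller finds $s^\dagger$ (or an algorithm at least as good) optimal, at which point that seller's continuation value is at least $(1-\beta)\pi(p_{M-\Delta}, p_{M-\Delta}) + \beta\pi(p_M, p_M)$. The $(1-\beta)$ term captures a single customer arriving during the transition, served at the neighboring grid pair $(p_{M-\Delta}, p_{M-\Delta})$ before the system locks in at $(p_M, p_M)$.

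\textbf{Main obstacle.} The bulk of the technical work lies in verifying the chain construction in full generality: with $K$ prices there are $K^K$ Markov algorithms, so brute-force enumeration of best responses is infeasible. Assumption 0(iv) is the workhorse tool that prunes the candidate set by restricting attention to Pareto-boundary deviations, but the subtle bookkeeping is in showing that no equilibrium path can indefinitely sidestep the climbing step via cycling among intermediate algorithms or via aggressive undercut patterns. I also anticipate extra care for small $\beta$, where short-run deviation incentives can dominate long-run cooperation benefits, and for tightening the $v^i_{k+1}$ argument in Claim 1, whose derivation is less immediate than the corresponding $u^i_k$ bound because the passive seller cannot adjust their own algorithm mid-stream.
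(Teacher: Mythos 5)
Your Claim 1 argument for $u^i_k$ is the paper's (deviate to always-$p_C$ and use Assumption 0(i)), but your route to the $v^i_{k+1}$ bound does not close. Writing $v^i_{k+1}=(1-\beta)\bar\pi(s_k,s_{k+1})+\beta u^i_{k+2}$ and invoking $u^i_{k+2}\ge\pi(p_C,p_C)$ only helps if the flow term $\bar\pi(s_k,s_{k+1})$ is itself at least $\pi(p_C,p_C)$, which fails exactly when seller $i$ is being undercut. The working argument runs backwards instead: from $u^i_k=(1-\beta)\pi(s_k,s_{k-1})+\beta v^i_{k+1}\ge\pi(p_C,p_C)$ the bound is immediate when the adjusting seller's flow payoff $\pi(s_k,s_{k-1})$ is at most $\pi(p_C,p_C)$, and in the remaining case one must exhibit a safety algorithm for seller $i$ (map the opponent's current price to the currently profitable response, tit-for-tat elsewhere) that guarantees at least $\pi(p_C,p_C)$ in every period no matter what the opponent later does. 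Your ``revealed preference plus rearrangement'' covers only the first case.

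Claim 2 is where the real gap is. Your grim-trigger deviation $s^\dagger$ analyzes the opponent's response as if both algorithms were then frozen forever, but in a Markov equilibrium seller $i$ re-optimizes at their next revision opportunity, so ``the trigger collapses subsequent payoffs to the competitive level'' is not a statement you are entitled to: the opponent's best response is governed by equilibrium continuation values, not by the one-shot comparison. More importantly, the sentence ``along any equilibrium sequence there is a moment at which a seller finds $s^\dagger$ (or something at least as good) optimal'' is precisely the theorem and is asserted rather than proved; it cannot even be true for both sellers, since the paper constructs unequal equilibria in which one seller is held strictly below the monopoly payoff forever. That is why the paper's proof has a different architecture: reduce the state to convergent price pairs, characterize which transitions $(p,q)\to(p',q')$ are feasible in equilibrium (Proposition \ref{prop:before_after}: the passive seller must receive at least both their security payoff $\underline{u}^{-i}$ and what they get by keeping their old algorithm), define first- and second-best continuation payoffs $v_*^i,v_{**}^i$, show continuation payoffs are nondecreasing along the path except when they jump to $v_*^i$ or $v_{**}^i$ (Lemma \ref{lem:uv}), dispose of the absorbing constant-payoff case separately (Lemma \ref{lem:constant_payoffs}), and finally prove $\max(v_{**}^i,v_{**}^{-i})$ exceeds the stated bound via a three-way case analysis on whether each seller's first-best pair has the opponent at $p_M$ (Lemma \ref{lem:v_SB} and Lemmas A1--A3). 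The constant $(1-\beta)\pi(p_{M-\Delta},p_{M-\Delta})+\beta\pi(p_M,p_M)$ arises there because $(p_{M-\Delta},p_{M-\Delta})$ is a feasible stepping-stone pair when $(p_M,p_M)$ itself violates the passive seller's participation constraint, not from a customer arriving mid-transition; and Assumption 0(iv) is used to place the first-best pair on the Pareto boundary so its continuation must be stationary, not to prune the best responses to $s^\dagger$. None of this machinery appears in your proposal, and your own closing paragraph concedes the climbing step is unverified.
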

While the first part is straightforward, the second one can be interpreted as follows: If the price grid is dense enough or the effective discount factor is high enough, then in any Markov equilibrium, the payoff of at least one seller is bounded from below by the monopoly profit. Both point to the exercise of some market power by the sellers and taken together they establish this market power is significant.  Notice, also, that even for the weakest case of $\beta\approx 0$ and only 3 possible prices in the grid the lower bound it at least $\pi\left(\frac{p_C+p_M}2,\frac{p_C+p_M}2\right)$ which is non-trivially higher than the competitive payoff.\footnote{The reason why the lower bound is not exactly the monopoly payoff, is that for general payoff matrices with multiple prices there could also exist alternating price equilibria (similar to type III equilibria from Theorem \ref{th:PD_equilibria}). We believe that for specific payoffs functions (e.g. "linear" or "discrete choice") such equilibria might be also ruled out and the lower bound could be improved.}

\subsection{Proof of Theorem \ref{prop:multiple_prices}}

The proof consists of several parts. In sections \ref{subsec:state_space}-\ref{subsec:best_payoffs} we develop the necessary techniques. In \ref{subsec:state_space} we show that the state space can be contracted from the set of algorithms to the matrix of price pairs,  which significantly reduces the complexity of the analysis. Then in \ref{subsec:payoff_matrices} we derive the conditions that the continuation payoff matrices should satisfy in order for the them to correspond to an equilibrium. Section \ref{subsec:best_payoffs} then defines a notion of `best'' and ``second best'' algorithms that each seller could have whenever their competitor is adjusting their algorithm. 

Finally, we prove the theorem in section \ref{subsec:continuation}. We consider an arbitrary equilibrium path, and show that the expected continuation payoffs for both sellers tend to increase on this path and if not the corresponding seller can guarantee at worst the ``second best'' continuation payoff. Lemma \ref{lem:v_SB} then shows that such payoff is above the limit stated in Theorem \ref{prop:multiple_prices} for at least one of the sellers. 

\subsubsection{State Space Reduction \label{subsec:state_space}} 

Whenever a seller responds with an algorithms that leads to a price
pair $(p,q)$ they could choose always the best algorithm consistent
with this price pair (that maps $q$  into $p$). This means that whenever we observe a convergent
prices pair $(p,q)$ after seller $i$ adjusted their algorithm, we
should observe the same algorithm for seller $i$. Thus, the price
pair $(p,q)$ becomes essentially the state variable, and since there
are much fewer prices pairs than possible algorithms this allows us
to significantly reduce the state space. 

We can define
\begin{equation}
v^{i}(p,q)=\max_{s:q\rightarrow p}v^{i}(s)
\end{equation}
\begin{equation}
s^{i}(p,q)=\arg\max_{s:q\rightarrow p}v^{i}(s)
\end{equation}
\begin{equation}
u^{-i}(q,p)=u^{-i}(s^{i}(p,q))
\end{equation}
If we know the payoff matrices $v^{i}(p,q)$ and $v^{-i}(q,p)$ then
for any algorithm $s$ of seller $-i$ we can calculate the
highest possible continuation payoff that seller $i$ could get:
\begin{equation}
u^{i}(s)=\max_{p}(1-\beta)\cdot\pi(p,s(p))+\beta\cdot v^{i}(p,s(p))\label{eq:u}
\end{equation}
Even though this equation does not tell us with which
exact algorithms seller $i$ should respond, it pins down the highest payoff they can ensure while responding to an algorithm $s$ by $-i$. 

\subsubsection{Payoff Matrices \label{subsec:payoff_matrices} in Equilibrium}

In this section we will derive the conditions that matrices
$v^{i}(p,q)$ and $u^{-i}(q,p)$ for $i=A,B$ should satisfy if they
correspond to payoffs in a certain Markov equilibrium. Also, we will
show how to construct the best response to any algorithm if the values
of the payoff matrices are known. 

First, let us define the worst possible algorithm for seller $i$
to respond to, and the lowest continuation payoff that they can get when
responding to some algorithm:
\begin{definition}
The worst algorithm of seller $-i$ for seller $i$ is defined as:
\begin{equation}
\underline{s}^{-i}(p)=\arg\min_{q}(1-\beta)\cdot\pi(p,q)+\beta\cdot v^{i}(p,q)
\end{equation}
\end{definition}
\begin{definition}
Then the lowest possible continuation payoff of $i$ when $i$ is
adjusting their algorithm is:
\begin{equation}
\underline{u}^{i}=\max_{p}\left[\min_{q}(1-\beta)\cdot\pi(p,q)+\beta\cdot v^{i}(p,q)\right]
\end{equation}
\end{definition}
From equation (\ref{eq:u}) it follows that for any algorithm $s$:
\begin{equation}
u^{i}(s)\ge\underline{u}^{i}=u^{i}(\underline{s}^{-i})
\end{equation}
Now let's calculate $v^{i}(p,q)$. Suppose $i$ chooses
an algorithm such the next pair of prices (after $-i$ adjusts their
algorithm) is $(p',q')$. Then for this price pair we can define
the ``before'' continuation payoffs $V^{i}(p',q')$ and $U^{-i}(q',p')$,
which include this price pair and assume optimal behavior afterwards:
\begin{definition}
``Before'' continuation payoffs for price pair ($p',q')$ are defined
as
\begin{equation}
\begin{array}{l}
V^{i}(p',q')=(1-\beta)\cdot\pi(p',q')+\beta\cdot u^{i}(p',q')\\
U^{-i}(q',p')=(1-\beta)\cdot\pi(q',p')+\beta\cdot v^{-i}(q',p')
\end{array}
\end{equation}
\end{definition}

\noindent Then, if price pair $(p',q')$ follows $(p,q)$ the continuation payoffs
are linked as:
\begin{equation}
v^{i}(p,q)=V^{i}(p',q')\qquad u^{-i}(q,p)=U^{-i}(q',p')
\end{equation}
Note that the values of the ``before'' continuation payoffs $V^{i}(p',q')$
and $U^{-i}(q',p')$ are directly determined by the ``after'' continuation
payoff matrices $u^{i}(p',q')$ and $v^{-i}(q',p')$. Finally, the
following proposition links the ``before'' and ``after'' continuation
payoffs, thus establishing the recursive relationship that the continuation
payoff matrices $u^{i}(p',q')$ should satisfy.

\begin{proposition}
\label{prop:before_after}For any price pair $(p,q)$:
\begin{equation}
\label{eq:before_after}
v^{i}(p,q)=\max V^{i}(p',q')\qquad s.t.\ \begin{array}{l}
U^{-i}(q',p')\ge\underline{u}^{-i}\\
U^{-i}(q',p')\ge U^{-i}(q,p)\\
either\,(p',q')=(p,q)\,or\,q'\ne q
\end{array}
\end{equation}
Moreover, if the price pair $(p',q')$ maximizes the above expression
then $u^{-i}(q,p)=U^{-i}(q',p')$.
\end{proposition}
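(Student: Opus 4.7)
The plan is to prove the equality by establishing both inequalities ($\leq$ and $\geq$) and then deduce the auxiliary claim about $u^{-i}(q,p)$ as a byproduct of the construction.

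For the easy direction $v^{i}(p,q)\le\max V^{i}(p',q')$, let $(p^{*},q^{*})$ be the next price pair that actually arises after $-i$'s optimal adjustment in equilibrium. By the linkage $v^{i}(p,q)=V^{i}(p^{*},q^{*})$ stated just before the proposition, it suffices to verify that $(p^{*},q^{*})$ lies in the feasible set. Constraint~1, $U^{-i}(q^{*},p^{*})\ge\underline u^{-i}$, follows because the equilibrium value $u^{-i}(q,p)$ at the adjustment equals $U^{-i}(q^{*},p^{*})$ and must dominate the minimax $\underline u^{-i}$ by the guarantee developed in Section~\ref{subsec:payoff_matrices}. Constraint~2, $U^{-i}(q^{*},p^{*})\ge U^{-i}(q,p)$, holds because $-i$ can always ``do nothing'' by choosing any algorithm mapping $p$ to $q$, which preserves the current state $(p,q)$ and yields $U^{-i}(q,p)$; optimality then forces $-i$'s actual choice to do at least as well. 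Constraint~3 is automatic: if $-i$'s new algorithm produces $q^{*}=q$, then seller $i$'s fixed algorithm $s^{i}(p,q)$ maps $q$ back to $p$, so $(p^{*},q^{*})=(p,q)$.

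For the harder direction $v^{i}(p,q)\ge\max V^{i}(p',q')$, I would show that for every feasible $(p',q')$ seller $i$ can pick an algorithm in $\{s:q\to p\}$ that forces $-i$'s optimal response to land on $(p',q')$. The construction is explicit: set $s(q)=p$, $s(q')=p'$, and for every other price $q''$, set $s(q'')=\arg\min_{p''}\bigl[(1-\beta)\pi(q'',p'')+\beta v^{-i}(q'',p'')\bigr]$, i.e.\ the response that minimizes $-i$'s continuation value conditional on their new price being $q''$. With this ``punishment'' of non-target deviations, $-i$'s value at any $q''\notin\{q,q'\}$ is at most $\min_{p''}V^{-i}(q'',p'')\le\underline u^{-i}\le U^{-i}(q',p')$ (using constraint~1), while $-i$'s value from staying at $q$ is $U^{-i}(q,p)\le U^{-i}(q',p')$ by constraint~2. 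Hence $-i$ weakly prefers $q'$ over every alternative. Ties are handled by Assumption~3: in case of indifference between $q'$ and some other option, $-i$ picks whichever $i$ prefers, and a short case analysis (if $i$ prefers $q'$ the target is achieved; if $i$ prefers the alternative, then $v^{i}(p,q)$ is weakly larger still) confirms $v^{i}(p,q)\ge V^{i}(p',q')$ either way. Taking the supremum over feasible $(p',q')$ gives the desired inequality.

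The equality now follows. For the auxiliary claim, observe that under the construction inducing the maximizer $(p',q')$, seller $-i$'s optimal response yields continuation value $U^{-i}(q',p')$, which equals $u^{-i}(q,p)$ by the companion linkage asserted just before the proposition. I expect the main technical obstacle to be the careful treatment of tie-breaking in the $\ge$ direction, since the $\min$ in the punishment construction may coincide with $\underline u^{-i}$ only for certain $q''$; once Assumption~3 is invoked properly, the remainder is a routine application of the Bellman equations in the reduced state space from Section~\ref{subsec:state_space}.
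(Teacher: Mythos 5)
Your proof is correct and follows essentially the same route as the paper's: you establish necessity of the three constraints via the ``keep the current pair'' and minimax arguments, and achievability via the same target-plus-punishment algorithm $\{q\to p,\ q'\to p',\ x\to\underline{s}^{i}(x)\}$, with the auxiliary claim falling out of the construction. The only deviations are cosmetic --- a notational slip writing $V^{-i}$ where $U^{-i}$ is meant in the punishment bound, and a slightly more explicit treatment of tie-breaking under Assumption 3 than the paper bothers with.
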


\medskip

\noindent The idea of the proof is as follows. If price pair $(p',q')$ goes after $(p,q)$ then all conditions stated in equation (\ref{eq:before_after}) should be satisfied (the payoff of seller $-i$ can never be below $\underline{u}^{-i}$ and it cannot be below $U^{-i}(q,p)$ since $-i$ can keep the strategy from the previous period). In addition (see appendix for the details), by using the following algorithm:
\begin{equation}
\label{eq:transition_algorithm}
s_{*}^{i}(p,q)=\{q\rightarrow p,q'\rightarrow p',x\rightarrow\underline{s}^{i}(x)\,for\,x\ne q\,and\,x\ne q'\}
\end{equation}
seller $i$ can make sure that the subsequent prices will actually be $(p',q')$.

Proposition \ref{prop:before_after} gives us a powerful tool showing which price transitions are possible in an equilibrium. In addition, it gives us precise conditions that payoff matrices $v^i(p,q)$ and $u^{-i}(q,p)$ should satisfy in order for them to correspond to an equilibrium.

\subsubsection{``First best'' and ``Second best'' Continuation Payoffs \label{subsec:best_payoffs}}

The final piece of notation that we would need in order to prove Theorem \ref{prop:multiple_prices} is the first best and the second best continuation payoffs and the corresponding algorithms that would allow sellers to achieve such payoffs. 

\begin{definition}
\label{def:first_best}
The ``best" continuation payoff of $i$ when $-i$ is adjusting their
algorithm is defined as:
\begin{equation}
v_{*}^{i}=\max V^{i}(p,q)\qquad s.t.\ U^{-i}(q,p)\ge\underline{u}^{-i}\label{eq:first_best}
\end{equation}
Also if $(p_{*}^{i},q_{*}^{i})$ maximize the above expression then we define the best continuation algorithm of $i$ as
\footnote{In Definitions \ref{def:first_best} and \ref{def:second_best} if several feasible price pairs maximize $V^i(p,q)$ then in accordance with assumption 3 from Section \ref{subsec:assumptions} we select the one maximizing $U^{-i}(q,p)$.}:
\begin{equation}
s_{*}^{i}=\{q_{*}^{i}\rightarrow p_{*}^{i},x\rightarrow\underline{s}^{i}(x)\,for\,x\ne q_{*}^{i}\}
\end{equation}
\end{definition}
\begin{definition}
\label{def:second_best}
If $(p_{*}^{i},q_{*}^{i})$ are the ``first best" prices from definition \ref{def:first_best}
then we define the ``second best'' continuation payoff of seller
$i$ when $-i$ is adjusting their algorithm as:
\begin{equation}
v_{**}^{i}=\max V^{i}(p,q)\qquad s.t.\ \begin{array}{l}
U^{-i}(q,p)\ge\underline{u}^{-i}\\
q\ne q_{*}^{i}
\end{array}\label{eq:second_best}
\end{equation}
Also if $(p_{**}^{i},q_{**}^{i})$ maximizes the above expression
(\ref{eq:second_best}) then we can define the ``second best'' continuation
algorithm of $i$ as:
\begin{equation}
s_{**}^{i}=\{q_{**}^{i}\rightarrow p_{**}^{i},x\rightarrow\underline{s}^{i}(x)\,for\,x\ne q_{**}^{i}\}
\end{equation}
\end{definition}

Note that seller $i$ cannot get a continuation payoff $v^i(s)$ higher than $v_*^i$ for any $s$. Moreover, the payoff $v_*^i$ is exactly achievable for $s = s_{*}^{i}$ (if  seller $-i$ does not continue with prices $(p_*^i,q_*^i)$ their payoff is at most $\underline{u}^{-i}$). The second best payoff
$v_{**}^{i}$ corresponds to the case when seller $i$ cannot use the opponent's price $q_{*}^{i}$ from the first best pair $(p_*^i,q_*^i)$.\footnote{If none of the price pairs satisfy the stated conditions for the ``second-best" (this means seller $-i$ always ends up with price $q_*$ after adjusting their algorithm), it is fairly easy to show that the continuation payoffs for at least one of the sellers in any equilibrium sequence will exceed $\pi(p_M,p_M)$. See part B of the appendix for details. So, in what follows we will assume that the second-best always exists.}

\subsubsection{Evolution of Equilibrium Continuation Payoffs \label{subsec:continuation}}

Consider some arbitrary equilibrium sequence of algorithms $s_{0},s_{1},s_{2},\dots$
Note that ``even'' algorithms in the sequence $s_{0},s_{2},\dots$ correspond
to one seller, and ``odd'' algorithms to the other. Below
when we write expressions like $u^{i}(s_{k})$ we assume that $i$ and $k$
agree with each other (e.g. in this case it means that $s_{k}$ is
the algorithm of seller $i$).

Lemma \ref{lem:uv_pC} shows that each seller can guarantee that their the continuation payoffs stay above the competitive payoff $\pi(p_C,p_C)$ which proves the first part of Theorem \ref{prop:multiple_prices}.

\begin{lemma}
\label{lem:uv_pC} $u^{i}(s_{k})\ge\pi(p_{C},p_{C})$ for $k\ge0$ and $v^{-i}(s_{k})\ge\pi(p_{C},p_{C})$ for $k\ge1$.
\end{lemma}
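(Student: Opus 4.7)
The plan is a one-shot deviation argument built around the ``safe'' algorithm $\bar{s}\colon q\mapsto p_{C}$ that ignores the opponent's price and always charges $p_{C}$. The crucial property of $\bar{s}$ is immediate from Assumption~0(i): since $\pi(p_{C},q)$ is (weakly) increasing in $q$ and every admissible price satisfies $q\geq p_{C}$, whenever a seller's current algorithm is $\bar{s}$ its convergent price is $p_{C}$ and its flow payoff from any opponent is at least $\pi(p_{C},p_{C})$. I carry this observation through three steps.

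First, I would establish $v^{i}(\bar{s})\geq \pi(p_{C},p_{C})$ for either seller $i$. Starting from $-i$'s algorithm being $\bar{s}$, the convergent pair after $i$'s best response $f^{i}(\bar{s})$ is $(f^{i}(\bar{s})(p_{C}),p_{C})$, so $-i$'s flow is at least $\pi(p_{C},p_{C})$. At $-i$'s next revision they may again select $\bar{s}$; iterating this stationary deviation and combining via the recursion $v^{-i}(\bar{s})=(1-\beta)\bar{\pi}(\bar{s},f^{i}(\bar{s}))+\beta\,u^{-i}(f^{i}(\bar{s}))$ gives $v^{-i}(\bar{s})\geq \pi(p_{C},p_{C})$.

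Second, part~(1) follows by a one-shot deviation at the revising seller's current step. For any opponent's algorithm $s$, the revising seller $i$ can pick $\bar{s}$ and obtain
\[
u^{i}(s)\;\geq\;(1-\beta)\,\pi(p_{C},s(p_{C}))+\beta\,v^{i}(\bar{s})\;\geq\;(1-\beta)\pi(p_{C},p_{C})+\beta\pi(p_{C},p_{C})\;=\;\pi(p_{C},p_{C}),
\]
which holds in particular for $s=s_{k}$, establishing $u^{i}(s_{k})\geq \pi(p_{C},p_{C})$ for every $k\geq 0$.

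Third, for part~(2) I would use the recursion $v^{-i}(s_{k})=(1-\beta)\bar{\pi}(s_{k},f^{i}(s_{k}))+\beta\,u^{-i}(f^{i}(s_{k}))$. The continuation term is at least $\beta\,\pi(p_{C},p_{C})$ by the bound from part~(1). To handle the flow term I would exploit that $s_{k}$ (for $k\geq 1$) was chosen in equilibrium by $-i$ at step $k$: the alternative choice $\bar{s}$ would have delivered a revised continuation of at least $\pi(p_{C},p_{C})$, so by optimality the equilibrium $s_{k}$ must do no worse. Combined with the monotonicity statement from Proposition~\ref{prop:before_after}---specifically that $U^{-i}$ cannot decrease at $-i$'s own revision, because $-i$ can always replicate the current price pair---this anchors $-i$'s before-continuation $U^{-i}$ at the equilibrium state above $\pi(p_{C},p_{C})$, and translating back through the identity $U^{-i}=(1-\beta)\pi+\beta v^{-i}$ delivers $v^{-i}(s_{k})\geq \pi(p_{C},p_{C})$.

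\paragraph{Main obstacle.} The clean piece is the $\bar{s}$-deviation, which trivially yields the $u^{i}$ bound. The delicate piece is the $v^{-i}$ bound in step~3: a naive revealed-preference inequality only controls the weighted sum $(1-\beta)\pi(s_{k},s_{k-1})+\beta v^{-i}(s_{k})$ and does not separate the continuation $v^{-i}(s_{k})$ from the immediate flow $\pi(s_{k},s_{k-1})$, which could be strictly above $\pi(p_{C},p_{C})$. The hard part is therefore to close this gap by leaning on the structural constraint from Proposition~\ref{prop:before_after}---that in any MPE the non-revising seller's before-continuation is non-decreasing across their own revisions, anchored below by the $\bar{s}$-deviation value---so that the lower bound $\pi(p_{C},p_{C})$ propagates to every equilibrium state visited from step $k\geq 1$ onward.
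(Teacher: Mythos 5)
Your part (1) and the auxiliary bound $v^{i}(\bar{s})\ge\pi(p_{C},p_{C})$ are correct and essentially identical to the paper's first step (the paper phrases it as ``choose $p_{C}$ and keep it forever''). The genuine gap is exactly where you flag it: in part (2) you never separate $v^{-i}(s_{k})$ from the accompanying flow term, and the device you propose for closing the gap does not do the job. Proposition~\ref{prop:before_after} only constrains the \emph{before}-continuation $U^{-i}(q,p)=(1-\beta)\pi(q,p)+\beta v^{-i}(q,p)$, which is again a convex combination of a one-period flow and the very continuation value you are trying to bound; anchoring $U^{-i}$ above $\pi(p_{C},p_{C})$, or above its value at the previous state, still says nothing about $v^{-i}$ whenever the flow in that combination exceeds $\pi(p_{C},p_{C})$. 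So the monotonicity of $U^{-i}$ is not the missing ingredient.

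The paper closes the gap with a short case split on the convergent pair $(q,p)=\xi(s_{k-1},s_{k})$. If $\pi(p,q)\le\pi(p_{C},p_{C})$, then the ``naive'' inequality $u^{-i}(s_{k-1})=(1-\beta)\pi(p,q)+\beta v^{-i}(s_{k})\ge\pi(p_{C},p_{C})$, which follows from part (1), already isolates $v^{-i}(s_{k})\ge\pi(p_{C},p_{C})$ because the flow term is small. If $\pi(p,q)>\pi(p_{C},p_{C})$, the trick is to compare $s_{k}$ not with $\bar{s}$ but with a deviation having the \emph{same} first-period flow: $s_{k}'=\{q\rightarrow p,\ x\rightarrow x\ \text{for}\ x\ne q\}$, kept at every future revision opportunity. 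This preserves the pair $(q,p)$ immediately, and thereafter every convergent pair is either $(q,p)$ again or a diagonal pair $(x,x)$, so each period's flow along the deviation path is at least $\pi(p_{C},p_{C})$. Optimality of $s_{k}$ at step $k$ then gives $(1-\beta)\pi(p,q)+\beta v^{-i}(s_{k})\ge(1-\beta)\pi(p,q)+\beta\,\pi(p_{C},p_{C})$, and the identical flow terms cancel, yielding $v^{-i}(s_{k})\ge\pi(p_{C},p_{C})$. That cancellation---engineered by matching the deviation's first-period price pair to the equilibrium one---is the idea missing from your write-up.
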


We can also show that the continuation payoffs tend to increase on the equilibrium path (and if not the corresponding seller can guarantee at least the second best continuation payoff).

\begin{lemma}
\label{lem:uv}For any $k$ and corresponding i: 
\begin{enumerate}
\item $u^{-i}(s_{k+1})\ge v^{-i}(s_{k})$
\item Either $v^{i}(s_{k+1})\ge u^{i}(s_{k})$ or $v^{i}(s_{k+1})=v_{*}^{i}$ or $v^{i}(s_{k+1})=v_{**}^{i}$. Moreover, the last case is only possible if  $\xi(s_{k},s_{k+1})=(q_{*}^{i},x)$
for some $x$ where $q_{*}^{i}$ is from definition \ref{def:first_best}. 
\end{enumerate}
\end{lemma}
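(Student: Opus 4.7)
The plan is to prove both parts by combining the ``maintain'' option available to each adjuster with Proposition \ref{prop:before_after}. Throughout, let $(p,q) = \xi(s_{k-1}, s_k)$ denote the convergent price pair at period $k$, with $p$ being $i$'s price and $q$ being $-i$'s.

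For Part 1, I would first apply constraint (II) of Proposition \ref{prop:before_after} to $-i$'s adjustment at $k+1$, yielding
\[
u^{-i}(s_{k+1}) \;\geq\; U^{-i}(q, p) \;=\; (1-\beta)\pi(q, p) + \beta v^{-i}(q, p).
\]
Next, I would establish the companion inequality $v^{-i}(q, p) \geq (1-\beta)\pi(q, p) + \beta u^{-i}(q, p)$ by considering a hypothetical adjustment by $i$ that maintains the state $(p, q)$ (now viewing $-i$ as last adjuster); this maintain is feasible because $U^i(p, q) \geq \underline{u}^i$ holds from $i$'s IC being satisfied at the actual adjustment at $k$. Adding the two inequalities and rearranging gives $u^{-i}(q, p) + v^{-i}(q, p) \geq 2\pi(q, p)$; subtracting shows $u^{-i}(q, p) \geq v^{-i}(q, p)$. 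Together these yield $u^{-i}(q, p) \geq \pi(q, p)$, and using $v^{-i}(s_k) = (1-\beta)\pi(q, p) + \beta u^{-i}(s_{k+1})$ delivers Part 1.

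For Part 2, I would invoke Proposition \ref{prop:before_after} at period $k$: $v^i(s_{k+1}) = v^i(p, q) = \max V^i(p', q')$ subject to constraints (I), (II), (III). The argument splits on whether (II) binds at the optimum. If (II) does not bind, the maximum coincides with the problem subject only to (I) and (III): when $q \neq q_*^i$ the first-best pair $(p_*^i, q_*^i)$ remains feasible under (III), giving $v^i(s_{k+1}) = v_*^i$ (Case (b)); when $q = q_*^i$, constraint (III) forces $q' \ne q_*^i$, reducing the maximum to $v_{**}^i$ (Case (c)), in which case the transition $\xi(s_k, s_{k+1})$ takes the asserted form $(q_*^i, x)$. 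If (II) binds at the optimum, the maintain option $(p', q') = (p, q)$ is also in the feasible set, giving $v^i(p, q) \geq V^i(p, q) = (1-\beta)\pi(p, q) + \beta u^i(p, q)$; combining with the identity $u^i(s_k) = (1-\beta)\pi(p, q) + \beta v^i(s_{k+1})$ and rearranging (analogously to Part 1) yields $v^i(s_{k+1}) \geq u^i(s_k)$, i.e., Case (a).

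The main obstacle is in Part 2: verifying that the binding/non-binding case analysis genuinely partitions all MPE configurations. The subtle point is that in the (II)-binding case, the maintain option $(p, q)$ requires $U^{-i}(q, p) \geq \underline{u}^{-i}$ to satisfy constraint (I), which is not automatically guaranteed after $i$'s adjustment to $(p, q)$ at $k$ (since Proposition \ref{prop:before_after} applied at $k$ constrains $i$'s IC, not $-i$'s). When $U^{-i}(q, p) < \underline{u}^{-i}$, maintain is infeasible and constraint (II) becomes subsumed by (I), pushing the optimum into the non-binding regime treated by Cases (b) or (c); threading this observation through the case analysis cleanly will be the delicate step. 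A secondary point is reconciling the $\xi$ ordering convention from Section \ref{subsec:assumptions} with the labels $(p_*^i, q_*^i)$ used in Definition \ref{def:first_best} so that the statement $\xi(s_k, s_{k+1}) = (q_*^i, x)$ in Case (c) precisely identifies which seller's price equals $q_*^i$ in the transition.
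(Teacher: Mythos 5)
Your Part 1 is correct but takes a longer road than the paper's: the paper simply notes that the non-adjuster can keep $s_{k}$ forever, which (by Markov perfection) guarantees the constant flow $\bar{\pi}(s_{k},s_{k+1})$, so $u^{-i}(s_{k+1})\ge\bar{\pi}(s_{k},s_{k+1})$ and substitution into $v^{-i}(s_{k})=(1-\beta)\cdot\bar{\pi}(s_{k},s_{k+1})+\beta\cdot u^{-i}(s_{k+1})$ finishes it; your two-inequality loop through constraint (II) of Proposition \ref{prop:before_after} and the companion maintain bound reaches the same place. For Part 2 the routes genuinely differ: the paper never optimizes over the feasible set of Proposition \ref{prop:before_after}; it exhibits the single explicit deviation $\{s_{*}^{i}\mid q\rightarrow p\}$ (resp.\ $\{s_{**}^{i}\mid q\rightarrow p\}$ when $q=q_{*}^{i}$), observes that seller $-i$ facing it either jumps to the first-best (second-best) pair or maintains $(p,q)$, so the deviation is worth either $v_{*}^{i}$ (resp.\ $v_{**}^{i}$) or at least $\pi(p,q)$, and then invokes optimality of $s_{k+1}$. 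Your ``dual'' view through the constrained maximization is legitimate and arguably more systematic, but it is where the one real problem sits.

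The step I cannot accept as written is the dichotomy on whether constraint (II) ``binds at the optimum.'' Slackness of (II) at the optimizer of the full problem does not imply that the value equals the value with (II) dropped: it can happen that $\underline{u}^{-i}\le U^{-i}(q_{*}^{i},p_{*}^{i})<U^{-i}(q,p)$, so (II) excludes the first-best pair, while the actual optimum sits at some third pair where (II) holds strictly. Your Case (b) would then assert $v^{i}(s_{k+1})=v_{*}^{i}$, which is false in that configuration; the correct conclusion there is branch (a), since $U^{-i}(q,p)>\underline{u}^{-i}$ makes the maintain pair feasible and $v^{i}(p,q)\ge V^{i}(p,q)\ge\pi(p,q)$ forces $v^{i}(s_{k+1})\ge u^{i}(s_{k})$. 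The clean split is therefore not on whether (II) binds but on whether $(p,q)$ itself satisfies constraint (I): if $U^{-i}(q,p)\ge\underline{u}^{-i}$, the maintain bound delivers branch (a) no matter where the optimum lies; if $U^{-i}(q,p)<\underline{u}^{-i}$, then (II) is implied by (I), the pair $(p,q)$ drops out of (III), and the program collapses exactly to the first-best problem when $q\ne q_{*}^{i}$ and to the second-best problem when $q=q_{*}^{i}$, which is also what pins down the ``moreover'' clause. Your ``obstacle'' paragraph patches only the binding branch with this observation, not the non-binding one; once the split is re-anchored on constraint (I) the argument closes. (There is also a harmless off-by-one in your timing --- the pair relevant to the lemma is $\xi(s_{k},s_{k+1})$, set up when $i$ adjusts at $k+1$, with $-i$ adjusting at $k+2$ --- which you should fix when reconciling the $\xi$ ordering convention as you note.)
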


Suppose the continuation payoffs of one seller becomes constant, then we can show that the continuation payoffs of the other seller will also become constant. Moreover, at least on of these constant payoffs should be above the monopoly profit.

\begin{lemma}
\label{lem:constant_payoffs}
If for some $i$ and $k$:
\begin{equation}
u^{i}(s_{k})=v^{i}(s_{k+1})=u^{i}(s_{k+2})=\dots
\end{equation}
 then for some $m$
\begin{equation}
u^{-i}(s_{m})=v^{-i}(s_{m+1})=u^{-i}(s_{m+2})=\dots
\end{equation}
Moreover, either $u^{i}(s_{k})\ge \pi(p_M,p_M)$ or $u^{-i}(s_m)\ge\pi(p_M,p_M).$
\end{lemma}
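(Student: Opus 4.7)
The plan is to leverage the rigidity of the recursion \eqref{eq:uv} and the Markov structure of the equilibrium, then establish the lower bound by a deviation argument based on Proposition \ref{prop:before_after}.

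First I would apply \eqref{eq:uv}: the hypothesis $u^i(s_k)=v^i(s_{k+1})=c$ together with the recursion immediately forces the per-customer flow to seller $i$ to equal $c$ at every step from $k$ onwards, so every convergent price pair realized on the path gives $i$ exactly $c$. Next, since the space of algorithms is finite and the best response is single-valued under Assumption 3, the sequence $s_k,s_{k+1},\ldots$ is deterministic and eventually periodic. Within the period I would analyze seller $-i$'s payoffs using Lemma \ref{lem:uv}: part 1 gives $u^{-i}(s_{m+1})\ge v^{-i}(s_m)$ across each $-i$-move, while part 2 gives either $v^{-i}(s_{m+2})\ge u^{-i}(s_{m+1})$ or a drop to one of the fixed values $v_*^{-i},v_{**}^{-i}$ across each $i$-move. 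Periodicity forces the algebraic sum of increments around one cycle to vanish, which together with the weak-monotonicity pattern forces the drops to have magnitude zero; hence the $-i$-sequence is constant from some index $m$ onwards, establishing the first conclusion.

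For the lower bound I argue by contradiction. Suppose both $c^i<\pi(p_M,p_M)$ and $c^{-i}<\pi(p_M,p_M)$, and let $(p,q)$ be the steady-state price pair, so $c^i=\pi(p,q)$ and $c^{-i}=\pi(q,p)$. Then $\pi(p,q)+\pi(q,p)<2\pi(p_M,p_M)$, which by Assumption 0(iii) forces $(p,q)\ne(p_M,p_M)$. Using Assumption 0(iv), I would locate a price pair $(p',q')$ on the boundary of the convex hull of the payoff set that strictly Pareto-dominates $(c^i,c^{-i})$, and invoke the transition algorithm $s_*^i$ from \eqref{eq:transition_algorithm} to realize the move to $(p',q')$. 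Feasibility in the sense of \eqref{eq:before_after} holds because Pareto-dominance yields $U^{-i}(q',p')\ge U^{-i}(q,p)$, and the strict improvement on $i$'s side yields $V^i(p',q')>c^i$. This contradicts the maximization characterization of $v^i(p,q)$ in Proposition \ref{prop:before_after}, so $\max\{c^i,c^{-i}\}\ge\pi(p_M,p_M)$.

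The hardest part will be the lower-bound step. Although $(p',q')$ Pareto-dominates $(p,q)$ in immediate payoffs, the quantity $U^{-i}$ bundles current flow with continuation value, so dominance in flows does not translate automatically into the required inequality on $U^{-i}$; I expect the cleanest route is to exploit Assumption 0(iv) to guarantee that the relevant boundary points of the payoff set are attainable, and to use the explicit transition construction from \eqref{eq:transition_algorithm} (which enforces an almost-worst punishment off-path) to lift the dominance up to the level of $U^{-i}$. A secondary obstacle is ruling out perpetual oscillation of $-i$'s payoff between a high value and one of $v_*^{-i},v_{**}^{-i}$ along the cycle, which I would handle with a telescoping argument that cancels positive and negative increments around one period.
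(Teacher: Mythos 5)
Your opening step (constant continuation payoffs for $i$ force every flow payoff of $i$ to equal $c$) matches the paper, and your overall architecture for the lower bound (contradiction via a profitable move toward $(p_M,p_M)$) is in the right spirit. But there are two genuine gaps. First, on the constancy of $-i$'s payoffs: periodicity plus ``the increments sum to zero around one cycle'' does not force each increment to be zero. The increments from Lemma \ref{lem:uv} are a mix of weakly positive steps and possible strict drops onto $v_*^{-i}$ or $v_{**}^{-i}$, and a nonconstant periodic orbit is \emph{exactly} a configuration in which positive and negative increments cancel over a period --- so the telescoping argument you invoke to rule out oscillation is consistent with the very oscillation you need to exclude. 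The paper closes this differently: seller $-i$, at the moment they receive the highest flow payoff appearing in their (eventually periodic) sequence, can simply keep the algorithm that generated it; since $i$'s payoff is $c$ no matter what, this locks in that maximal flow forever, and because any continuation payoff is a weighted average of future flows each bounded by that maximum, equality forces all subsequent flows of $-i$ to coincide with it.

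Second, and more seriously, the lower bound. The deviation to $(p_M,p_M)$ is only available if the opponent's participation constraint at the monopoly pair holds, i.e.\ $U^{-i}(p_M,p_M)\ge\underline{u}^{-i}$ or $U^{i}(p_M,p_M)\ge\underline{u}^{i}$, and if $(p_M,p_M)$ is absorbing once reached (so that $V^{i}(p_M,p_M)\ge\pi(p_M,p_M)>c^i$). Your claim that Pareto dominance of flow payoffs yields $U^{-i}(q',p')\ge U^{-i}(q,p)$ is unjustified: $U^{-i}(q',p')$ contains the endogenous continuation $v^{-i}(q',p')$, which could a priori be low, and Assumption 0(iv) does not by itself rescue this. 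The paper isolates precisely this feasibility statement as Lemma A\ref{lem:feasibility_monopoly}, whose proof is a substantial separate argument, and then sidesteps the constraint $U^{-i}(q',p')\ge U^{-i}(q,p)$ entirely by using the local modification $s_{k+1}'=\{s_{k+1}\mid p_M\rightarrow p_M\}$, which leaves the \emph{current} convergent pair (and hence the current payoffs) unchanged and only alters the response to an off-path price, so the deviation is payoff-improving purely through the monopoly continuation. You correctly flag this as the hardest part, but the proposal does not supply the missing lemma, and without it the contradiction does not go through.
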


Finally, if continuation payoffs for both sellers never become constant then eventually both of them will be above the corresponding second best payoffs $v_{**}^i$. Thus, to prove theorem \ref{prop:multiple_prices} we only need to show that at least one of these second best payoffs is above the bound stated in the theorem.

\begin{lemma}
\label{lem:v_SB}$\ $
\begin{equation}
\max\left(v_{**}^{i},v_{**}^{-i}\right)\ge(1-\beta)\cdot\pi(p_{M-\Delta},p_{M-\Delta})+\beta\cdot\pi(p_{M},p_{M})
\end{equation}
\end{lemma}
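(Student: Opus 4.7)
The plan is to exhibit, for at least one seller $i\in\{A,B\}$, a concrete price pair $(p,q)$ that is eligible for $i$'s second-best (meaning $q\neq q_*^i$ and $U^{-i}(q,p)\geq \underline{u}^{-i}$) and whose associated continuation value $V^i(p,q)$ dominates the bound. Since the bound $(1-\beta)\pi(p_{M-\Delta},p_{M-\Delta})+\beta\pi(p_M,p_M)$ is exactly the payoff of the transition ``first go to $(p_{M-\Delta},p_{M-\Delta})$, then to $(p_M,p_M)$,'' the natural candidates are the two symmetric Pareto-superior pairs $(p_M,p_M)$ and $(p_{M-\Delta},p_{M-\Delta})$.

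I would split the argument on whether the first-best of at least one seller avoids opponent price $p_M$. In Case 1, $q_*^i\neq p_M$ for some $i$, so $(p_M,p_M)$ is eligible for $i$'s second-best. Feasibility $U^{-i}(p_M,p_M)\geq\underline{u}^{-i}$ is immediate because $\pi(p_M,p_M)$ is the joint maximum by Assumption 0(iii), and any minimax value $\underline{u}^{-i}$ attained against an adversarial opponent cannot exceed the cooperative monopoly payoff. Since the monopoly pair is sustainable as a stationary continuation (the opponent's best algorithm at state $(p_M,p_M)$ keeps $p_M\to p_M$), one gets $u^i(p_M,p_M)\geq\pi(p_M,p_M)$, and hence $V^i(p_M,p_M)\geq\pi(p_M,p_M)\geq(1-\beta)\pi(p_{M-\Delta},p_{M-\Delta})+\beta\pi(p_M,p_M)$, so $v_{**}^i$ already hits the bound.

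Case 2 assumes $q_*^A=q_*^B=p_M$, so that $(p_M,p_M)$ is blocked for both second-bests. Here I turn to $(p_{M-\Delta},p_{M-\Delta})$, eligible for either seller because $p_{M-\Delta}\neq p_M$; the feasibility $U^{-i}(p_{M-\Delta},p_{M-\Delta})\geq\underline{u}^{-i}$ again follows from the symmetric pair being strictly Pareto-superior to the competitive outcome. The critical step is to show $u^i(p_{M-\Delta},p_{M-\Delta})\geq\pi(p_M,p_M)$ for some $i$. The idea is that the opponent, when picking an algorithm that sustains $(p_{M-\Delta},p_{M-\Delta})$ in equilibrium, chooses the mappings of prices other than $p_{M-\Delta}$ to maximize their own continuation value; because moving to $(p_M,p_M)$ strictly improves both sellers' payoffs by Assumption 0(iii), the opponent's optimal algorithm will include the mapping $p_M\to p_M$, so that at $i$'s next adjustment a transition to the monopoly pair is feasible. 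This yields $u^i(p_{M-\Delta},p_{M-\Delta})\geq (1-\beta)\pi(p_M,p_M)+\beta v^i(p_M,p_M)\geq\pi(p_M,p_M)$, and plugging back gives $V^i(p_{M-\Delta},p_{M-\Delta})\geq(1-\beta)\pi(p_{M-\Delta},p_{M-\Delta})+\beta\pi(p_M,p_M)$, matching the bound.

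The main obstacle is the equilibrium-consistency step in Case 2: rigorously justifying that the opponent's optimal algorithm at state $(p_{M-\Delta},p_{M-\Delta})$ does admit a transition to $(p_M,p_M)$. This requires an analog of Proposition \ref{prop:before_after} applied to $u^i$ (the ``$i$-adjusts'' recursion) together with a careful verification of the opponent's IC constraint; the strict Pareto improvement from $(p_{M-\Delta},p_{M-\Delta})$ to $(p_M,p_M)$ under Assumption 0(iii) does the main work, since neither seller can improve by blocking a transition that strictly raises both payoffs. A small ancillary issue is the degenerate case where no ``second-best'' pair exists, but this is already handled by the footnote accompanying Definition \ref{def:second_best}.
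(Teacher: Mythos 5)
Your overall plan---targeting $(p_M,p_M)$ and $(p_{M-\Delta},p_{M-\Delta})$ as the candidate second-best pairs---matches the paper's strategy, but the argument has a genuine gap at exactly the point where the paper works hardest: the feasibility constraints $U^{-i}(q,p)\ge\underline{u}^{-i}$. You treat these as immediate consequences of Pareto dominance, asserting in particular that a minimax value $\underline{u}^{-i}$ ``cannot exceed the cooperative monopoly payoff.'' That is false in this model: because of the unequal equilibria of Section \ref{section multiple}, a seller's guaranteed value can strictly exceed $\pi(p_M,p_M)$, and the paper's own proof explicitly runs through the branch $\underline{u}^{i}>U^{i}(p_M,p_M)\ge\pi(p_M,p_M)$ in its second case. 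What the paper actually proves (Lemmas A\ref{lem:feasibility} and A\ref{lem:feasibility_monopoly}, each with a nontrivial proof) is only that feasibility of a Pareto-good symmetric pair holds for \emph{at least one} of the two sellers, not both. This breaks your Case 1: knowing $q_*^i\ne p_M$ for some $i$ makes $(p_M,p_M)$ \emph{unblocked} for $i$'s second-best program, but the seller for whom it is unblocked need not be the seller for whom it is \emph{feasible}. The paper's intermediate case ($q_*=p_M$ for one seller but not the other) exists precisely to handle this mismatch, and it is in that case that only the weaker bound $(1-\beta)\cdot\pi(p_{M-\Delta},p_{M-\Delta})+\beta\cdot\pi(p_M,p_M)$---rather than $\pi(p_M,p_M)$ itself---can be extracted, which is why the lemma is stated with that constant.

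The second gap is the step you yourself flag as the ``main obstacle'' in your Case 2: showing $u^i(p_{M-\Delta},p_{M-\Delta})\ge\pi(p_M,p_M)$. An appeal to ``both sellers strictly prefer $(p_M,p_M)$'' does not suffice, because the opponent's optimal continuation from $(p_{M-\Delta},p_{M-\Delta})$ is pinned down by the constrained maximization in Proposition \ref{prop:before_after}, and the maximizer could in principle be an asymmetric pair that the opponent strictly prefers to $(p_M,p_M)$. The paper closes this by first proving (via Lemmas A\ref{lem:Pareto} and A\ref{lem:feasibility_monopoly}) that when $q_*=\dot p=p_M$ the monopoly pair is an absorbing state with all four value functions equal to $\pi(p_M,p_M)$, and then analyzing the successor pair $(p',q')$ of $(p_{M-\Delta},p_{M-\Delta})$ case by case---in particular ruling out $q'=p_M$ with $p'<p_M$ by deriving a contradiction between $v^{-i}(p_M,p')>\pi(p_M,p_M)$ and Lemma \ref{lem:uv}. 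None of this is replaced by the Pareto-improvement heuristic. So your skeleton is right, but the load-bearing steps---the one-sided feasibility lemmas and the absorbing-state analysis---are missing.
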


\section{Extensions and robustness considerations \label{sec:extensions}}

The main message of this section is that if the reader is worried how the main results would extend to scenarios where Assumptions 1 and 2 are relaxed; well, they shouldn't be too worried. If the reader ponders what would happen if we extend the equilibrium notion from Markov perfect to subgame perfect; the sharpness of results is tempered by the multitude of equilibria and quite complex strategies are required to sustain these. 

\subsection{Alternative Ways to Resolve Cycles } \label{resolving cycles}

For our main results we assumed that the sellers cannot respond with algorithms that generate cycles (Assumption 2 in Section \ref{subsec:assumptions}). This is essentially a restriction on the strategy space. In this section we relax this assumption for the two prices model, presented in Section \ref{sec:PD}. Recollect from Table \ref{tab:PD_outcomes} that whenever $s_{T}$ is chosen by one seller and $s_{R}$ by the other, prices go into a cycle. We consider two ways of resolving cycles:
\begin{enumerate}
\item A cycle is equivalent to having the lowest prices from each seller in the cycle. This is motivated by the fact that if prices change frequently then customers can learn about this and time their purchase so that they can get the lowest possible price:
\begin{equation}
\pi(s_{T},s_{R})=\pi(s_{R},s_{T})=\pi(p_{C},p_{C})=0
\end{equation}
\item A seller's payoff from a cycle is equivalent to the average payoff from all price pairs that are observed in the cycle. This means customers arrive continuously and do not optimize on timing:
\begin{equation}
\begin{array}{ll}
\pi(s_{T},s_{R})=\pi(s_{R},s_{T}) &=\frac{\pi(p_{C},p_{C})+\pi(p_{M},p_{C})+\pi(p_{M},p_{M})+\pi(p_{C},p_{M})}{4} \\
&=\frac{2+x-y}{4}<\frac{3}{4}
\end{array}
\end{equation}
\end{enumerate}
We can first show that for both options it is never optimal to respond to $s_T$ with $s_R$. 
\begin{lemma}
\label{lem:sTsR}$f^{i}(s_{T})\ne s_{R}$ $\forall$ $i\in\{A,B\}$.
\end{lemma}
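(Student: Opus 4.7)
The plan is to argue by contradiction. Suppose $f^i(s_T)=s_R$ for some seller $i$. The MPE optimality condition, applied against the alternative response $s'=s_T$, gives
\begin{equation*}
(1-\beta)\,\pi(s_R,s_T)+\beta\,v^i(s_R)\ \ge\ (1-\beta)\,\pi(s_T,s_T)+\beta\,v^i(s_T).
\end{equation*}
Under either cycle-resolution rule, $\pi(s_R,s_T)\le\tfrac{3}{4}<1=\pi(s_T,s_T)$, so the inequality rearranges to
\begin{equation*}
\beta\bigl(v^i(s_R)-v^i(s_T)\bigr)\ \ge\ (1-\beta)\bigl(1-\pi(s_R,s_T)\bigr)\ \ge\ (1-\beta)/4\ >\ 0.
\end{equation*}
The plan is therefore to derive a contradiction by establishing $v^i(s_T)\ge v^i(s_R)$ in every MPE.

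The comparison of the two continuation values would rest on the following reading of Table \ref{tab:PD_outcomes}. When opponent $-i$ adjusts while facing $s_T$, the immediate payoffs to $-i$ are $1$ (from $s_M$ or $s_T$), $0$ (from $s_C$), or at most $\tfrac{3}{4}$ (from $s_R$); when facing $s_R$, they are $1+x$ (from $s_C$ or $s_R$), $-y$ (from $s_M$), or at most $\tfrac{3}{4}$ (from $s_T$). In the ``generic'' case the immediate-best reply to $s_T$ is $s_M$ or $s_T$, producing convergent prices $(p_M,p_M)$ and giving $i$ the payoff $\bar\pi(s_T,\cdot)=1$; by contrast the immediate-best reply to $s_R$ is $s_C$ or $s_R$, producing $(p_C,p_M)$ and giving $i$ the payoff $\bar\pi(s_R,\cdot)=-y$. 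Intuitively, the ``cooperative'' state $s_T$ invites a cooperative reply benefiting $i$, whereas the ``aggressive'' state $s_R$ invites an aggressive reply that hurts $i$. I would turn this into a rigorous comparison by case-splitting on the pair $\bigl(f^{-i}(s_T),f^{-i}(s_R)\bigr)$, expanding both continuation values via \eqref{eq:uv}, and using the universal bounds $u^i\in[0,1+x]$ (the lower bound coming from Lemma \ref{lem:uv_pC} together with the normalization in Table \ref{tab:PD_equivalent}).

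The main obstacle will be the pathological subcases where $-i$ picks an immediately dominated algorithm---for example $s_C$ against $s_T$, or $s_M$ against $s_R$---because of a favorable continuation. For these I would apply $-i$'s own MPE optimality condition in parallel: the continuation $v^{-i}$ would need to compensate for $-i$'s immediate loss by at least $(1-\beta)/\beta$ times that loss, but bounding $v^{-i}\le 1+x$ yields constraints that, combined with the assumed strict inequality $v^i(s_R)>v^i(s_T)$, produce an inconsistency. Working through these parallel restrictions for both cycle-resolution rules (noting that the argument is strictly easier under rule 1, where $\pi(s_R,s_T)=0$ gives the largest immediate gap) completes the contradiction and yields the lemma.
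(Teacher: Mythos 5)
Your opening step is sound: if $f^{i}(s_T)=s_R$, the one-shot optimality comparison against the alternative $s'=s_T$ does force $v^{i}(s_R)-v^{i}(s_T)\ge\frac{1-\beta}{\beta}\cdot\frac14>0$ under either cycle-resolution rule (using $x-y<1$, which is what keeps the average-payoff cycle value below $\tfrac34$). But the entire burden of the proof then falls on the reverse inequality $v^{i}(s_T)\ge v^{i}(s_R)$, and this is precisely the part you have not proved. Both quantities are equilibrium continuation values of the very (hypothetical) equilibrium you are trying to rule out; they depend not only on the immediate replies $f^{-i}(s_T)$ and $f^{-i}(s_R)$ but on the whole chain of subsequent best responses. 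The shortcuts you propose do not close this. The universal bounds $u,v\in[0,1+x]$ are far too weak for the ``pathological'' branches: for instance, to exclude $f^{-i}(s_T)=s_C$ the parallel optimality condition only yields $\beta\bigl(v^{-i}(s_C)-v^{-i}(s_T)\bigr)\ge 1-\beta$, which the bound $v^{-i}(s_C)\le 1+x$ does not contradict. Likewise your ``generic case'' silently assumes $f^{-i}(s_T)\in\{s_T,s_M\}$, whereas under the hypothesis being refuted the natural candidate is $f^{-i}(s_T)=s_R$ as well, in which case $v^{i}(s_T)$ itself contains a cycle payoff and the comparison must be redone; and the branch in which every response to $s_R$ is $s_R$ gives the alternating value $v^{i}(s_R)=\frac{-y+\beta(1+x)}{1+\beta}$, which lies below $1$ only because $x-y<1$ --- a fact your sketch never invokes where it is needed.

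The paper avoids this difficulty by not comparing equilibrium value functions at all. It first recycles two structural facts from the proof of Theorem \ref{th:PD_equilibria} (no best response ever leaves the responder with the immediate payoff $\pi(p_M,p_C)$, and $f^{i}(s_T)\ne s_C$), which shrink the possible equilibrium sequences following $s_T,s_R$ to a short explicit list ($s_T,s_R,s_T,s_R,\dots$; $s_T,s_R,s_R,s_R,\dots$; $s_T,s_R,s_R,s_T,\dots$; $s_T,s_R,s_C,\dots$), and then exhibits for each a concrete multi-period deviation (e.g.\ ``switch to $s_T$ and keep it'') whose payoff stream can be computed directly, with no need to solve for the value functions. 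To salvage your value-comparison route you would have to carry out essentially the same enumeration anyway, and in addition compute the continuation values along each branch. As written, the proposal identifies the right contradiction to aim for but leaves the decisive step unproven.
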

\begin{proof}
See appendix.$\ $
\end{proof}

Lemma \ref{lem:sTsR} implies that the first half of the proof of Theorem \ref{th:PD_equilibria} is unchanged, which means the best response to $s_{C}$ can still only be $s_{T}$ and after $s_{T}$ we still observe only
monopoly prices. Now it is not difficult to describe all possible equilibria with cycles.

\begin{proposition}
\label{prop:PD_equilibria_cycles}$\ $
\begin{enumerate}
\item If cycles are equivalent to the lowest price (case 1) then all equilibria are exactly the same same as described in Theorem \ref{th:PD_equilibria}
\item If cycles are equivalent to the average payoff (case 2) then
\begin{enumerate}
\item for $x\le\beta$ and $y\ge4\beta-2-3x$ the following Markov equilibria are possible
\begin{equation}
\begin{array}{l}
s_{R}\rightarrow s_{C}\rightarrow s_{T}\\
s_{T}\rightarrow\{s_{T},s_{M}\}\\
s_{M}\rightarrow\{s_{T},s_{M}\}
\end{array}\qquad(type\,I)
\end{equation}
\item for $x\le\beta$ and $y\le4\beta-2-3x$ the following Markov equilibria are possible
\begin{equation}
\begin{array}{l}
s_{R}\rightarrow s_{T}\\
s_{T}\rightarrow\{s_{T},s_{M}\}\\
s_{M}\rightarrow\{s_{T},s_{M}\}
\end{array}\qquad(type\,I')
\end{equation}
\item for $x>\beta$ the equilibria are the same as described in Theorem \ref{th:PD_equilibria}.
\end{enumerate}
\end{enumerate}
\end{proposition}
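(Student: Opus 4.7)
The plan is to leverage Lemma \ref{lem:sTsR} together with the structure of the proof of Theorem \ref{th:PD_equilibria}. Because $f^i(s_T)\ne s_R$ in both cycle-resolution regimes, any equilibrium path that reaches state $s_T$ immediately exits the cycle-prone region, and the two key intermediate conclusions from the proof of Theorem \ref{th:PD_equilibria}---that the best response to $s_C$ is $s_T$ and that after $s_T$ only monopoly prices are played---transfer verbatim. The only best response that can depend on how cycles are valued is therefore the response to $s_R$.

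First I would enumerate the four candidate responses to $s_R$ and record their normalized continuation payoffs. The responses $s_C$, $s_M$, and $s_R$ do not involve any cycle-creating algorithm pair, so their payoffs coincide with those computed in the proof of Theorem \ref{th:PD_equilibria}. The only new candidate is $s_T$, which generates a cycle when paired with $s_R$; using that the subsequent revision sends us into $\{s_T,s_M\}$ and hence into repeated monopoly, the resulting payoff is $(1-\beta)\pi(s_T,s_R)+\beta$, with $\pi(s_T,s_R)=0$ in case~1 and $\pi(s_T,s_R)=(2+x-y)/4$ in case~2.

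For part 1 (case 1), comparing the type-I payoff $(1-\beta)(1+x)+\beta^{2}$ against the cycle payoff $\beta$ reduces to $(1-\beta)(1+x-\beta)>0$, so $s_T$ is strictly dominated for all admissible parameters. The remaining candidates are exactly the ones analyzed in Theorem \ref{th:PD_equilibria}, yielding an identical classification.

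For part 2 (case 2), the same comparison rearranges to $y \gtrless 4\beta - 2 - 3x$, which is precisely the threshold appearing in the statement. In the region $x\le\beta$, the argument of Theorem \ref{th:PD_equilibria} already rules out $s_M$ and $s_R$ as optimal responses to $s_R$, so only $s_C$ versus $s_T$ matters: type I when $y \ge 4\beta - 2 - 3x$ and type I$'$ when $y \le 4\beta - 2 - 3x$. In the region $x>\beta$, the bound $4\beta - 2 - 3x < \beta - 2 < 0 < y$ automatically makes $s_C$ beat $s_T$, so $s_T$ drops out and types II and III reappear unchanged. I expect the main bookkeeping obstacle to be verifying that in the type-III subregion $y<\beta(x-\beta)$ the cycle payoff from $s_R\to s_T$ likewise fails to dominate the alternating payoff from $s_R\to s_R$; this is a closed-form inequality check that should produce no surprises.
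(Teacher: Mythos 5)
Your proposal is correct and follows essentially the same route as the paper's proof: invoke Lemma \ref{lem:sTsR} to carry over the first part of the proof of Theorem \ref{th:PD_equilibria}, then compare the payoff $(1-\beta)(1+x)+\beta^{2}$ from answering $s_{R}$ with $s_{C}$ against the cycle payoff $(1-\beta)\pi(s_{T},s_{R})+\beta$ from answering with $s_{T}$, which yields the threshold $y\lessgtr 4\beta-2-3x$ and shows it can only bind in the average-payoff case with $x<\beta$. Your final worry about the type-III subregion is already subsumed by transitivity (there $s_{T}$ is weakly worse than $s_{C}$, which is in turn worse than $s_{R}$ when $y<\beta(x-\beta)$), so no extra check is needed.
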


Thus, the only difference from Theorem \ref{th:PD_equilibria}  is that in case 2 for some values of parameters it might be optimal to switch directly to $s_{T}$ from $s_{R}$ without the intermediate step of $s_{C}$ in an equilibrium
of type I. Type II and III equilibria, where the latter can lead to alternating price sequences, exist for exactly the same values of parameters $x$ and $y$. And, as before, for majority of the parameters monopoly price is the unique convergent outcome in equilibrium.

\subsection{Precise Calculations of Expected Payoffs} \label{sec:precise}

In this extension we will relax Assumption 1 from Section \ref{subsec:assumptions} and
show that our results generally hold (for sufficiently small interval
between price adjustments $\Delta t$) even if we calculate payoffs
precisely without ignoring initial convergence and discreteness of
price adjustments. First we rewrite the precise model incorporating pre-convergent prices and calculate expected payoffs within this model. Then we study the Markov equilibria of this updated framework.

Since the updated model and corresponding results require the introduce of new notation, we relegate the formalization to Part B of the appendix. The main insight can be (informally) stated thus:

\begin{claim*}
For sufficiently small $\Delta t$, every Markov equilibrium in the ``precise model" generically corresponds a Markov equilibrium of the ``main model" we study by resolving the dependence on the initial prices arbitrarily.
\end{claim*}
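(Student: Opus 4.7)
The plan is to establish the correspondence by exploiting the smallness of $\Delta t$ to show that continuation payoffs in the precise and main models differ by at most $O(\Delta t)$, and that strict best responses in the main model remain best responses in the precise model. First, I would set up the precise model explicitly: the state now has to be augmented to $(s^{-i}, p^i, p^{-i})$, where $p^i, p^{-i}$ are the current prices (since before convergence the payoff stream depends on them), and the precise continuation payoffs $\tilde{u}^i(s^{-i}, p^i, p^{-i})$ are defined by a Bellman equation that integrates over discrete increments of length $\Delta t$ in which (a) a customer arrives with probability $\lambda \Delta t$, (b) the seller gets a revision opportunity with probability $\mu \Delta t$, and (c) prices evolve mechanically via \eqref{eq:prices}. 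The main-model payoff $u^i(s^{-i})$ from \eqref{eq:MPE} is what you get by ignoring (a) and (c) during the transient.

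The technical heart of the argument is the following uniform approximation lemma: for any pair of algorithms $(s^i, s^{-i})$ and any initial prices, the trajectory converges to a fixed point in at most $K$ steps of length $\Delta t$; during this transient the expected number of customers is $O(\lambda K \Delta t)$ and the probability of any algorithm revision is $O(\mu K \Delta t)$; so after normalization by $\lambda/r$ the transient contributes $O(r K \Delta t) = O(\Delta t)$ to the continuation payoff. This gives
\[
\bigl|\tilde{u}^i(s^{-i}, p^i, p^{-i}) - u^i(s^{-i})\bigr| = O(\Delta t)
\]
uniformly in the state. With this in hand, if $s^i$ is strictly optimal in the main model with strict margin $\epsilon > 0$ over every alternative, then for $\Delta t$ small enough that the $O(\Delta t)$ error is smaller than $\epsilon/3$, the same $s^i$ (combined with whatever initial prices the seller may choose upon revising) remains optimal in the precise model. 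The word \emph{generically} precisely covers the complementary possibility---that two algorithms tie in the main model for a knife-edge choice of payoff matrix or $\beta$, the same non-generic situation excluded by Assumption 3.

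To close the loop, I would argue in both directions via a contraction/fixed-point argument. The main-model Bellman operator defined by \eqref{eq:MPE} is a contraction of modulus $\beta$ on the value functions; the precise-model Bellman operator is an $O(\Delta t)$-perturbation of it in sup norm. Any fixed point of the latter therefore lies within $O(\Delta t / (1-\beta))$ of a fixed point of the former, and the induced best-response correspondences agree wherever the main-model argmax is unique. Restricting an arbitrary precise MPE to post-convergence subgames (i.e.\ evaluating the policy once the initial transient is over) yields a map on the smaller state space of opponent algorithms which, by the approximation plus uniqueness, satisfies the main-model equilibrium conditions \eqref{eq:MPE} exactly in the limit $\Delta t \to 0$; the resolution of initial-price dependence is arbitrary precisely because those choices influence payoffs only to order $\Delta t$.

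The hard part will be the simultaneous fixed-point consistency: the precise and main continuation values are each defined as fixed points of their own Bellman operators, so the $O(\Delta t)$ bound has to be established self-consistently rather than taking one side as given. A related delicate point is that during the transient an algorithm revision can occur with probability $O(\Delta t)$, and such a revision splices together what the main model treats as disjoint continuation games; one must check that this coupling affects incentives only at order $\Delta t$, which again reduces to the strict-optimality (generic) hypothesis. Beyond this, verifying that the approximation lemma is uniform over the full (finite) algorithm space and that the implicit-function-type argument can be closed when $\beta$ is bounded away from $1$ should be routine.
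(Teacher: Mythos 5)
Your proposal is correct and follows essentially the same route as the paper's Part C (Proposition \ref{prop:precise} and its corollary): a uniform $O(\Delta t)$ bound on the transient's contribution to normalized continuation payoffs (the paper's explicit bound $\Delta\pi_{max}(1-e^{-(r+\mu)N_k\Delta t})$ together with $\tilde\beta\to\beta$), finiteness of the algorithm space to turn strict limiting-payoff margins into preserved best responses for small $\Delta t$, and genericity in $\beta$ (plus distinct payoff-matrix entries) to dispose of ties and of the initial-price dependence. The only cosmetic difference is your Bellman-contraction packaging --- the equilibrium conditions \eqref{eq:MPE} are a two-player fixed point rather than a single-agent contraction, so the perturbation bound cannot be invoked off the shelf --- but you correctly flag exactly this as the delicate step, and it is resolved the same way the paper resolves it, by strictness of the argmax over a finite set outside a non-generic parameter set.
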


\subsection{Subgame Perfect Equilibria \label{subsec:SPE}}

In this extension we consider the more general subgame perfect equilibria (SPE) and try to give a sense of the possible outcomes. We restrict our attention to the simplest two price case. The sellers still adjust their algorithms asynchronously, however, now the response of each might depend not only on the current algorithm of the opponent, but on the overall history of algorithms up to this
point.

Suppose one of the sellers is adjusting their
algorithm and suppose the current algorithm of their opponent is $s$.
Define the set of all possible continuation payoffs ($u,v$)
in a subgame perfect equilibrium depending on $s$ ($u$
corresponds to the seller adjusting their algorithm and $v$ to the
other one):
\begin{equation}
H(s)=\{(u,v):\ (u,v)\ is\ possible\ in\ SPE\ starting\ with\ s\}
\end{equation}

There is a recursive way to construct these sets in the spirit of \citet*{aps}. Suppose the seller adjusting their algorithm decides to respond to $s$ with $s'$, then
the set of possible continuation payoffs is as follows:
\begin{equation}
H(s,s')=\left\{ (u,v):\begin{array}{l}
u=(1-\beta)\cdot\pi(s',s)+\beta\cdot v'\\
v=(1-\beta)\cdot\bar{\pi}(s,s')+\beta\cdot u'\\
(u',v')\in H(s')
\end{array}\right\} 
\end{equation}
The lowest possible continuation payoff $u$ of the seller adjusting
their algorithms is then:
\begin{equation}
u_{min}(s)=\max_{s'}\min_{(u,v)\in H(s,s')}u
\end{equation}
Finally the set of possible SPE payoffs starting with
$s$ is:
\begin{equation}
H(s)=F(H)(s)=\bigcup_{s'}\left\{ (u,v):\begin{array}{l}
(u,v)\in H(s,s')\\
u\ge u_{min}(s)
\end{array}\right\} \label{eq:mapping}
\end{equation}
Any $(u,w)$ with $u\ge u_{min}(s)$ could be sustained in an SPE since for any deviation their exists a subgame which generates a payoff not higher than $u$.
\begin{figure}
    \centering
   \includegraphics[scale=0.35]{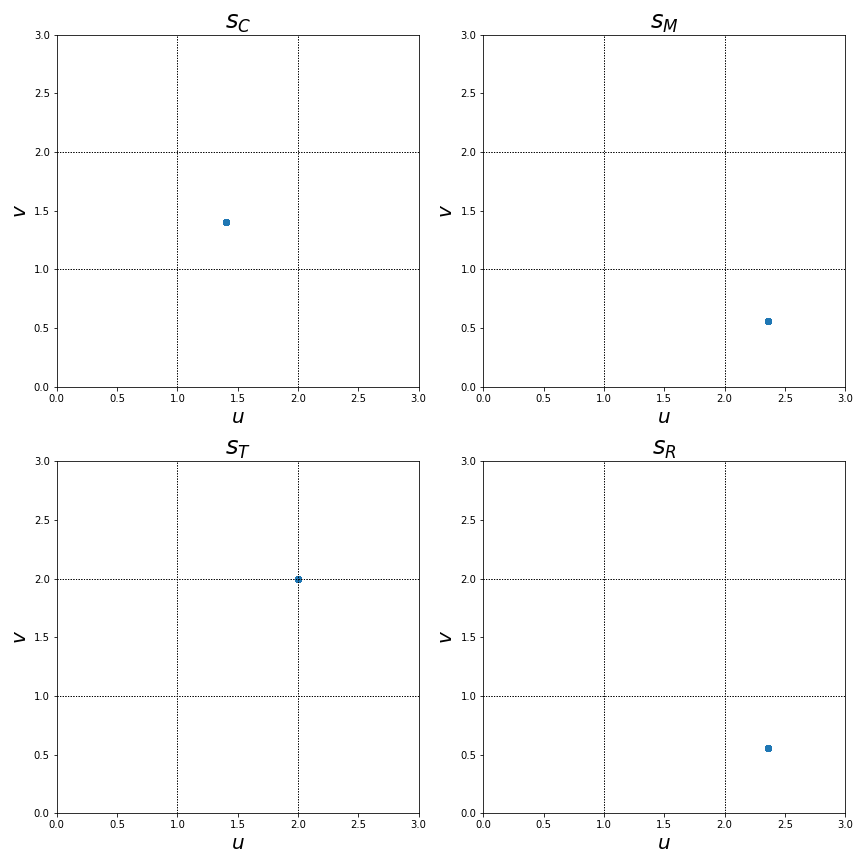}\\(a) $\beta=0.4$\\
   \includegraphics[scale=0.35]{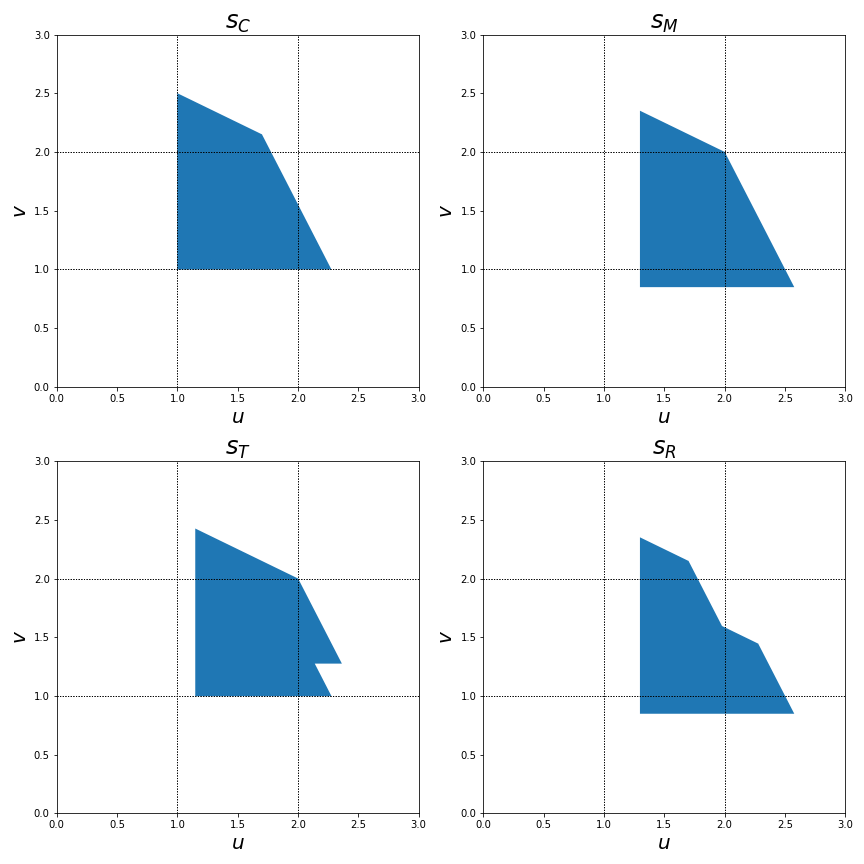}\\(b) $\beta=0.85$
    \caption{\label{fig:SPE_payoffs}Possible continuation payoffs in a subgame perfect equilibrium.}
\end{figure}

Now to construct the sets of possible continuation payoffs $H(s)$
we can start with some arbitrary payoff sets that are guaranteed to
contain $H(s)$, for example (here $\pi_{min}=\min_{p,q}\pi(p,q)$
and $\pi_{max}=\max_{p,q}\pi(p,q)$):
\begin{equation}
H_{0}(s)=[\pi_{min},\pi_{max}]\times[\pi_{min},\pi_{max}]
\end{equation}
And, to calculate the sets $H(s)$ we can use the mapping $F$
from equation (\ref{eq:mapping}) (which maps the set of payoffs
$H=\{H(s)\,for\,all\,s\}$ onto itself) and then construct the sets
of possible SPE payoffs as:
\begin{equation}
\begin{array}{l}
H_{n+1}=F(H_{n})\\
H(s)=\lim_{n\rightarrow\infty}H_{n}(s)
\end{array}
\end{equation}
Note that for any $n$: $H_{n+1}(s)\subset H_{n}(s)$, thus, the
limit exists. Moreover, the limit is exactly $H(s)$: for
any $(u,v)\in\lim_{n\rightarrow\infty}H_{n}(s)$ we can construct
a SPE by recursively calculating the possible continuation payoffs
in the next periods (which always stay within $\lim_{n\rightarrow\infty}H_{n}(s)$
and thus can always be extended further).
\begin{table}
\centering{}%
\begin{tabular}{|c|c|c|}
\hline 
$A\backslash B$ & $p_{M}$ & $p_{C}$\tabularnewline
\hline 
$p_{M}$ & $2,\,2$ & $0,\,3$\tabularnewline
\hline 
$p_{C}$ & $3,\,0$ & $1,\,1$\tabularnewline
\hline 
\end{tabular}\caption{\label{tab:PD_SPE} Payoffs for SPE example.}
\end{table}
\medskip{}

To illustrate the possible equilibria consider the two price setup with payoffs for each customer from Table \ref{tab:PD_SPE}. For given payoffs we can construct the the sets $H(s)$ numerically\footnote{We do that by splitting the initial set $H_{0}(s)$ into a fine grid of squares (10000 by 10000). Then for each $n$ the sets $H_{n+1}$  contain only the squares for which at least one point
$(u,v)$ is a feasible with continuation payoff from $H_{n}$. We
also add a small margin allowing for mistakes of numerical calculations.
Thus, the constructed sets are guaranteed to include $H(s)$ though
could be a bit larger.}. Figure \ref{fig:SPE_payoffs} show the sets of possible payoffs for each algorithms of the opponent for two values of the discount factor $\beta$. We make two points through these simulations. \\

{\it \color{blue} Pure monopoly for low discounting.} If the discount factor is low, then the only possible SPE leads to monopoly outcomes $(p_M,p_M)$. It is always optimal to respond with $s_T$ to $s_T$ or $s_C$. This confirms the result asserted in \citet{bruno_tacit} that as the effective discount factor converges to zero, repeated play of the monopoly outcome emerges as the unique SPE.

In fact our simulations suggest that the threshold on effective discount factor doesn't have to be in the neighborhood of zero. As can be seen in Figure \ref{fig:SPE_payoffs}, this holds for $\beta=0.4$ for any starting strategy. The theoretically interesting fact is that in a standard repeated games setting, low discounting would have led to the uniqueness of $(p_C,p_C)$ as the equilibrium outcome.\\

{\it \color{blue} High discounting leads to large and complex equilibrium payoff sets.} As is widely understood from folk theorems, for high discount factors equilibrium expands to maximal possible size. This continue to hold for the asynchronous repeated we study. Figure \ref{fig:SPE_payoffs} shows the set of equilibrium payoffs for $\beta=0.85$. Note that both competition, i.e. payoff of (1,1), and conflict, i.e. payoff even lower than 1 for one of the sellers, can be sustained. The strategies required to sustain these payoffs though are quite complex. 

Consider for example the competitive equilibrium in which each seller
always chooses $s_{C}$ which would generate continuation payoffs
of $(1,1)$ for the payoff matrix from Table \ref{tab:PD_SPE}. 
Suppose one of the sellers deviates to $s_{T}$ in one of the rounds, and subsequent continuation payoffs are $(u,v)\in H(s_T)$. In order for this deviation to be not profitable, $v$ should be equal to $1$. But note that $u$ is non-trivially different from $1$ ($u\ge1.15$), thus, we should expect a complicated sequence of price pairs and corresponding algorithms that would generate such payoffs. For example, consider the lowest point ($u=1.15$ and $v=1$). For this point we can construct the possible continuation sequences of algorithms that could generate such payoffs. We do this by choosing the algorithm $s'$ such that $(u,v)\in H(s_T,s')$ and then doing this recursively. In the end, we get the following sequence of algorithms ($"\times13"$ means $s_{C}$ is repeated 13 times):
\begin{equation}
s_{T},s_{C}\times13,s_{M},s_{C},s_{M},s_{C}\times2,s_{M}\times10,\dots,\footnote{Here the possible continuation algorithms were always reviewed in the following order: $s_{C}, s_{M}$, $s_{T}$, $s_{R}$. Thus, if there were several
possibilities with which algorithm could continue the earliest one in
the above sequence was always selected. This also means that other
sequences of algorithms are certainly possible. However, all such sequences
have similar irregular structure (we also constructed all possible
sequences up to 15 elements and manually checked them).}
\end{equation}
The above sequence of responses requires some very specific beliefs
about how the opponent should behave in equilibrium while choosing algorithms, which we find
unrealistic. This is the reason why we decided to focus only on Markov perfect
equilibria, which restrict the complexity of the possible strategies.\footnote{In terms of the repeated games literature if we allowed for public randomization, then a lot of the complexity in achieving certain payoffs can be pushed into the randomization devices. It is not clear, however, how to think about public randomization when it come to modeling algorithms.}

\section{A comparison to \citet{bruno_tacit} \label{sec:comparison_bruno}}

The above extension to subgame perfect equilibrium provides a good segue to compare our analysis with \citet{bruno_tacit}. They also look at a duopoly model of price-setting with probabilistic demand and algorithms as strategies, but use subgame perfectness as the equilibrium criterion. The basic idea of using such a set up to explore algorithmic collusion is elegant. We depart from their analysis in several ways.

First, we model the competition in algorithms differently. In \citet{bruno_tacit}, whenever a customer arrives, the price is set by a complex automaton. In our model customer arrivals are unobservable and the algorithms react only to prices set by the competitor. We believe this assumption lends a realism in how algorithms actually interact, especially when we think about online sellers. In addition, it allows us to separate the frequency of customer arrivals from the frequency of possible prices changes. 

Further, we assume that algorithms can react fast to possible prices changes by the competitor, again a reasonable assumption. This allows us to replace the complicated continuous time game with a simpler a asynchronous repeated game in which payoffs in each period are determined by the convergent prices for the current algorithms. This, in turn, allows to derive sharper results, which would be much more difficult to obtain otherwise.

The second key difference is that in \citet{bruno_tacit} the main result about inevitability of monopoly outcomes is stated only for the limiting case in which the frequency of algorithmic adjustments ($\mu$ in our model) converges to $0$.\footnote{\citet{bruno_tacit} also states a result for revision opportunity going to infinity. This means that sellers are changing algorithms faster than consumer arrivals, so before any payoffs accrue, and in the limit even faster than price changes, which is unrealistic.} Then the effective discount factor $\beta=\frac{\mu}{\mu+r}$ also converges to $0$. This limits the dynamic dimension of the strategic interaction between sellers, for they essentially behave "myopically," best responding only to the current algorithm of the opponent, not taking into account how the opponent would react to their choice. 


The case of $\beta\approx 0$ is also unrealistic. Suppose we fix the discount factor at $r=5\%$ per year. Then to generate $\beta=0.5$, which is not even that low, we would need $\mu=0.05$. In plain speak, this means the sellers have the opportunity to revise their algorithms, on average, once in 20 years. If $\beta=0.1$, then it is once in 180 years.  More realistically, suppose we fix $\mu=5$, which  means each seller has on average five opportunities per year to revise their algorithm. Then, $\beta\approx 0.99$. For $\mu=10$, $\beta\approx 0.995$. 

When $\beta$ is high, as we show in Section \ref{subsec:SPE}, the monopoly outcome is no longer inevitable in subgame perfect equilibria, and a wide range of outcomes are possible (as is typical for repeated games). Nevertheless, we are able to show that some collusion is inevitable and extreme collusion is very likely for Markov perfect equilibria. Thus, we can conclude that even for high $\beta$'s, which we argue are more realistic, algorithms will still facilitate collusion.

\section{Final remarks}

\label{final remarks}

The ancient treatise on statecraft, Arthshastra (written and then chiseled between 3rd century BCE and 2nd centruy AD), accredited in folklore to a writer Kautilya, makes several observations about the act of what we term collusive practices, their harm to consumers, and necessity of regulatory practices to reign them in: 
\begin{quote} \begin{small}
[T]raders, joining together and raising or lowering the (prices of) goods,
make a profit of one hundred {\it panas} on one {\it pana} or of hundred {\it kumbhas}
on one {\it kumbh}a. (VIII.4.36, emphasis added) (Pana, a silver coin, was
the basic unit of money, whereas kumbha refers to a measure of weight
(II.29.32))\end{small}
\end{quote}
\begin{quote}\begin{small}
    For traders who by {\it conspiring} together hold back wares or sell them
at a high price, the fine is one thousand {\it panas}. (IV.2.19, emphasis added)\end{small}\footnote{These are cited from \citet{arthsshastra_collusion}.}
\end{quote}
Closer to modern times, the Sherman Act of 1890 was the the first federal legislation in the United States to put down the promotion of free competition an a desired objective of the state. Organized societies has always faced the possibility of extraction of consumer surplus by cartels of producers beyond what is deemed legitimate. But, this line of legitimacy and the means used to cross it without incurring the wrath of the law evolve over time. In the twenty first century, rapid rise in computational technology has arguably equipped firms with a means to collude (almost inadvertently) that is outside the scope of the current antitrust cannon---by delegating pricing to algorithms. 

Algorithmic pricing and its various implications are now a rapidly burgeoning field of enquiry in economics. The purpose of this paper is to offer a simple enough theoretical model that shows why the quickness of response and temporary commitment offered by algorithms lead to inevitability of some collusion and commonality of extreme collusion. The intuition is communicated sharply first through a two-price model and then a generalization to multiple prices shows the basic insights carry through. 

Technically speaking, the analysis here is restricted to two firms and Markov perfect equilibrium. Generalizing to multiple firms and subgame perfect equilibrium with some selection criterion seems like a worthwhile exercise. We also restricted the algorithm to be a function of the current price of the opponent. It could potentially depend on a richer history. In the language of automaton, if the number of internal states of algorithms is not very large (eg. incorporating strategies like grim trigger), then we conjecture that the main results should still go through. This is because even though the initial converge period could be longer, if $\Delta t$ is small enough we can similarly focus on the final price pair. We also discuss in Appendix E why explicitly considering experimentation is easy to handle in our model, since the cost is linear in the set of prices and in unit of time, and hence small. However, if we allow for the aforementioned richer class of algorithms then these experimentation costs can become non-trivial and tackling this trade-off is also an interesting question for future work.


Finally, algorithmic pricing and information technology more generally is now widely recognized to be a challenge for anti-trust regulation. How to incorporate these issues into models of mergers and acquisitions is a fascinating topic for future research.  


\section{Appendix}

The appendix is divided into four parts. Part A contains the proofs missing in the main text. Part B shows that Theorem \ref{prop:multiple_prices} continues to hold even if the "second best" price pair from definition \ref{def:second_best} may not exist, in fact in that case the results is strengthened. Part C presents in detail the extension introduced in Section \ref{sec:precise}---accounting for precise calculations of payoffs which includes trades before prices converge. Part D constructs an unequal equilibrium for an example with five possible prices. Part E discusses the finer details of explicitly modeling experimentation in our framework.  

\medskip

\subsection*{\it \underline{Part A}: Proofs}

\subsection*{Proof of Theorem \ref{th:PD_equilibria}.}
\begin{enumerate}

\item For any seller $i$ and any $s$: $\xi(s,f^{i}(s))\ne(p_{C},p_{M})$.
Otherwise seller $-i$ can keep the strategy $s$, generating the
highest possible payoff $\pi(p_{C},p_{M})$ for them and the lowest
possible $\pi(p_{M},p_{C})$ for their competitor, which cannot be
an equilibrium. Thus, the best response to $s_{C}$ can only
be $s_{C}$ or $s_{T}$ and the the best response to $s_{R}$ cannot
be $s_{M}$.

\item $f^{i}(s_{T})\ne s_{C}$. 
\begin{itemize}
\item Assume the contrary, then the following equilibrium sequences $s_{0},s_{1},s_{2},\dots$
starting with $s_{T}$ are possible:
\begin{equation}
\begin{array}{l}
s_{T},s_{C},s_{T},s_{C},\dots\\
s_{T},s_{C},s_{C},s_{C},\dots\\
s_{T},s_{C},s_{C},s_{T},\dots
\end{array}
\end{equation}
In each case seller $i$ would find it optimal to deviate to $s_{1}=s_{T}$.
\end{itemize}
\item Since $f^{i}(s_{T})\in\{s_{T},s_{M}\}$, we have $\xi(s_{T},f^{i}(s_{T}))=(p_{M},p_{M})$.
Moreover, after we observe a price pair $(p_{M},p_{M})$, all subsequent
prices are $(p_{M},p_{M})$ as well (after $(p_{M},p_{M})$ each could
use $s_{T}$ and guarantee the discounted payoff of at least $(p_{M},p_{M})$
for them.
\item $f^{i}(s_{C})=s_{T}$.
\begin{itemize}
\item Assume the contrary, then $f^{i}(s_{C})=s_{C}$ . Then the following
equilibrium sequences $s_{0},s_{1},s_{2},\dots$ starting with $s_{C}$
are possible
\begin{equation}
\begin{array}{l}
s_{C},s_{C},s_{C},s_{C},\dots\\
s_{C},s_{C},s_{T},\dots
\end{array}
\end{equation}
 In each case seller $i$ would find it optimal to deviate to $s_{1}=s_{T}$.
\end{itemize}
\item If $f^{-i}(s_{R})=s_{C}$ then $f^{i}(s_{R})=s_{C}$.
\begin{itemize}
\item Assume the contrary, then $f^{i}(s_{R})=s_{R}$ and we will observe
the following equilibrium sequence starting with $s_{R}$ for seller
$i$:
\begin{equation}
s_{R},s_{R},s_{C},s_{T},\dots
\end{equation}
For such sequence seller $i$ would find it optimal to deviate to
$s_{1}=s_{C}$.
\end{itemize}
\item $f^{i}(s_{R})=s_{R}$ is possible only if $y<\beta\cdot(x-\beta)$
\begin{itemize}
\item Suppose $f^{i}(s_{R})=s_{R}$ then $f^{-i}(s_{R})=s_{R}$ as well.
\item The expected payoff from responding with $s_{R}$ if your competitor's
algorithm is $s_{R}$:
\begin{equation}
(1-\beta)\cdot\left[\pi(p_{C},p_{M})+\beta\cdot\pi(p_{M},p_{C})+\dots\right]=(1-\beta)\cdot\pi(p_{C},p_{M})+\beta\cdot\frac{\pi(p_{M},p_{C})+\pi(p_{M},p_{C})}{1+\beta}
\end{equation}
The expected payoff from responding with $s_{C}$ if your competitor's
algorithm is $s_{R}$ (the corresponding equilibrium sequence would
be $s_{R},s_{C},s_{T},\dots$):
\begin{equation}
(1-\beta)\cdot\pi(p_{C},p_{M})+\beta(1-\beta)\cdot\pi(p_{C},p_{C})+\beta^{2}\cdot\pi(p_{M},p_{M})
\end{equation}
Thus, the response $s_{R}$ is optimal when
\begin{equation}
\frac{\pi(p_{M},p_{C})+\pi(p_{M},p_{C})}{1+\beta}>(1-\beta)\cdot\pi(p_{C},p_{C})+\beta\cdot\pi(p_{M},p_{M})
\end{equation}
or equivalently when $y<\beta\cdot(x-\beta)$
\end{itemize}
\item $f^{i}(s_{M})=s_{C}$ if $x<\beta$. Otherwise $f^{i}(s_{M})\in\{s_{M},s_{T}\}$.
\begin{itemize}
\item The expected payoff from responding with $s_{C}$ if your competitor's
algorithm is $s_{M}$ (the corresponding equilibrium sequence would
be $s_{M},s_{C},s_{T},\dots$):
\begin{equation}
(1-\beta)\cdot\pi(p_{C},p_{M})+\beta(1-\beta)\cdot\pi(p_{C},p_{C})+\beta^{2}\cdot\pi(p_{M},p_{M})
\end{equation}
\item The expected payoff from responding with $s_{T}$ if your competitor's
algorithm is $s_{M}$ (the corresponding equilibrium sequence would
be $s_{M},s_{T},\dots$):
\begin{equation}
(1-\beta)\cdot\pi(p_{M},p_{M})+\beta(1-\beta)\cdot\pi(p_{M},p_{M})+\beta^{2}\cdot\pi(p_{M},p_{M})
\end{equation}
The first option is preferable if 
\begin{equation}
\pi(p_{C},p_{M})+\beta\cdot\pi(p_{C},p_{C})>(1+\beta)\cdot\pi(p_{M},p_{M})
\end{equation}
or equivalently if $x>\beta$
\item Note, also, that by responding with $s_{M}$ to $s_{M}$ you cannot
get a higher expected payoff than by responding with $s_{T}$ (and
you might only do that the expected payoffs are identical).
\end{itemize}
\end{enumerate}
These points all together prove the result. 

\subsection*{Proof of Corollary \ref{cor:PD_prices}}

We can directly verify that in type I or II equilibria we always end up with monopoly prices. In type III equilibrium if we start with $s_M$ or $s_R$ then the subsequent responses by each seller will always be $s_R$, which would generate the corresponding alternating price sequence. Type III equilibria are only possible if $x>\beta$ and $y<\beta\cdot(x-\beta)$, which is equivalent to the condition stated in the corollary.

\subsection*{Proof of Corollary \ref{cor:PD_monopoly_condition}}

\begin{enumerate}
    \item The first statement directly follows the definitions of $x$ and $y$ from equation (\ref{eq:PD_xy}).
    \item To prove the sufficient condition for ``linear'' demand we can simply calculate the following expression:
    \begin{equation}
    \begin{array}{c}
    \left[\pi(p_{M},p_{M})-\pi(p_{M},p_{M}-x)\right]+\left[\pi(p_{M}-x,p_{M}-x)-\pi(p_{M}-x,p_{M})\right]=\\
    p_{M}\cdot\frac{\alpha\cdot x}{2D}+(p_{M}-x)\frac{-\alpha\cdot x}{2D}=\frac{\alpha x^{2}}{2D}>0
    \end{array}
    \end{equation}
    \item Now consider the ``discrete choice'' profit function. Without loss
    of generality we can assume $b=1$ (we could always scale prices so
    that $b=1$) and replace $a$ with $a\cdot e^{-p_{M}}$:
    \begin{equation}
    \pi(p,q)=\frac{p\cdot e^{-(p-p_{M})}}{a+e^{-(p-p_{M})}+e^{-(q-p_{M})}}
    \end{equation}
    Then
    \begin{equation}
    \pi(p_{M},p_{M})+\pi(p_{M}-x,p_{M}-x)>\pi(p_{M},p_{M}-x)+\pi(p_{M}-x,p_{M})\qquad\Leftrightarrow
    \end{equation}
    \begin{equation}
    \frac{p_{M}}{a+2}+\frac{(p_{M}-x)e^{-x}}{a+2e^{-x}}>\frac{p_{M}+(p_{M}-x)e^{-x}}{a+1+e^{-x}}\label{eq:discrete_choice_inequality}
    \end{equation}
    Finally since for any positive $A,B,C,D$:
    \begin{equation}
    \frac{A}{B}+\frac{C}{D}>2\frac{A+C}{B+D}\qquad\Leftrightarrow\qquad(B-D)\cdot\left(\frac{A}{B}-\frac{C}{D}\right)>0
    \end{equation}
    the inequality (\ref{eq:discrete_choice_inequality}) is equivalent
    to
    \begin{equation}
    2(1-e^{-x})\cdot\left(\frac{p_{M}}{a+2}-\frac{(p_{M}-x)e^{x}}{a+2e^{-x}}\right)>0
    \end{equation}
    which holds since $p_{M}$ is the monopoly price.

\end{enumerate}

\subsection*{Proof of Proposition \ref{prop:before_after}}

\begin{enumerate}
\item First, let us show that all the conditions are necessary for any price
pair $(p',q')$ that follows after $(p,q)$. (This implies that $v^{i}(p,q)$
cannot be above the maximum.)
\begin{itemize}
\item Any algorithm $s$ consistent with $(p,q)$ should map $q\rightarrow p$
which gives us the third condition.
\item In addition, when seller $-i$ gets the opportunity to revise their
algorithm, they can always keep the pair $(p,q)$ (e.g. by not changing
their previous algorithms) and thus $U^{-i}(q',p')$ cannot be below
than $U^{-i}(q,p)$, which gives the second condition. 
\item Finally, since for any $s$: $u^{-i}(s)\ge\underline{u}^{-i}$ then
first condition should also also hold.
\end{itemize}
\item Now suppose $(p',q')$ maximizes $V^{i}(p',q')$ and satisfies all
the necessary conditions. Then $i$ could use the following algorithm
\begin{equation}
s_{*}^{i}(p,q)=\{q\rightarrow p,q'\rightarrow p',x\rightarrow\underline{s}^{i}(x)\,for\,x\ne q\,and\,x\ne q'\}
\end{equation}
to which $-i$ would respond with some  $s^{-i}$ such that $\xi(s_{*}^{i}(p,q),s^{-i})=(p',q')$
with the continuation payoff $u^{-i}(s_{*}^{i}(p,q))=U^{-i}(q',p')$.
Otherwise:
\begin{itemize}
\item If $\xi(s_{*}^{i}(p,q),s^{-i})=(p,q)$ then $u^{-i}(s_{*}^{i}(p,q))\le U^{-i}(q,p)\le U^{-i}(q',p')$
\item If $\xi(s_{*}^{i}(p,q),s^{-i})\ne(p,q)\,or\,(p',q')$ then $u^{-i}(s_{*}^{i}(p,q))\le\underline{u}^{-i}\le U^{-i}(q',p')$
\end{itemize}
\end{enumerate}

\subsection*{Proof of Lemma \ref{lem:uv_pC}}

\begin{enumerate}
\item Each seller when adjusting their algorithm always has an option of
choosing the competitive price $p_{C}$ and keep it forever. Thus
$u^{i}(s_{k})\ge\pi(p_{C},p_{C})$.
\item Consider some $k\ge1$ suppose $\xi(s_{k-1},s_{k})=(q,p)$. Then
\[
u^{-i}(s_{k-1})=(1-\beta)\cdot\pi(p,q)+\beta\cdot v^{-i}(s_{k})\ge\pi(p_C,p_C)
\]
\item If $\pi(p,q)\le\pi(p_{C},p_{C})$ then $v^{-i}(s_{k})\ge\pi(p_{C},p_{C})$. If $\pi(p,q)>\pi(p_{C},p_{C})$ then seller $-i$ can choose $s_{k}'=\{q\rightarrow p,x\rightarrow x\,for\,x\ne q\}$
and keep it forever, with the payoff of at least $\pi(p_C,p_C)$. Thus, $v^{-i}(s_{k})\ge\pi(p_{C},p_{C})$.
\end{enumerate}

\subsection*{Proof of Lemma \ref{lem:uv}}

\begin{enumerate}
\item Note that $v^{-i}(s_{k})=(1-\beta)\cdot\bar{\pi}(s_{k},s_{k+1})+\beta\cdot u^{-i}(s_{k+1})$.
Then, seller $-i$ can keep the strategy $s_{k}$ and generate the
continuation payoff of at least $\bar{\pi}(s_{k},s_{k+1})$, which
proves the first statement of the lemma. 

\item Note that $u^{i}(s_{k})=(1-\beta)\cdot\pi(s_{k+1},s_{k})+\beta\cdot v^{i}(s_{k+1})$.
Suppose $\xi(s_{k},s_{k+1})=(q,p)$.
\item If $q\ne q_{*}^{i}$ then seller $i$ can use the following algorithm\footnote{Here and below we use the following notation to create a new algorithm
$s'$ by slightly adjusting the existing one $s$:
\begin{equation}
s'=\{s|x\rightarrow y\}\qquad\Leftrightarrow\qquad s'(p)=\left\{ \begin{array}{ll}
s(p) & if\,p\ne x\\
y & if\,p=x
\end{array}\right.
\end{equation}
}
\begin{equation}
s^{i}=\{s_{*}^{i}|q\rightarrow p\}
\end{equation}
the continuation payoff for which is either $v_{*}^{i}$ or not lower
than $\pi(p,q)=\pi(s_{k+1},s_{k})$.
\item If $q=q_{*}^{i}$ then seller $i$ can use the following algorithm
\begin{equation}
s^{i}=\{s_{**}^{i}|q\rightarrow p\}
\end{equation}
the continuation payoff for which is either $v_{**}^{i}$ or not lower
than $\pi(p,q)=\pi(s_{k+1},s_{k})$.
\end{enumerate}

\subsection*{Proof of Lemma \ref{lem:constant_payoffs}}

\begin{enumerate}
\item Suppose $u^{i}(s_{k})=v^{i}(s_{k+1})=u^{i}(s_{k+2})=\dots=c$. This
means that each individual payoff of seller $i$ is equal to $c$:
\begin{equation}
\pi(s_{k+1},s_{k})=\bar{\pi}(s_{k+1},s_{k+2})=\pi(s_{k+3},s_{k+2})=\dots=c
\end{equation}
\item Now consider the highest individual payoff of seller $-i$ in the sequence:
\begin{equation}
\bar{\pi}(s_{k},s_{k+1}),\,\pi(s_{k+2},s_{k+1}),\,\bar{\pi}(s_{k+2},s_{k+3}),\,\dots
\end{equation}
Then the corresponding seller can keep the strategy, in response to
which this highest payoff was generated. This would lead to the same
continuation payoff $c$ for seller $i$ and the highest possible
constant continuation payoff for $-i$.

\item Assume the contrary. This implies that $\underline{u}^{-i}<\pi(p_{M},p_{M})$
and $\underline{u}^{i}<\pi(p_{M},p_{M})$. This together with Lemma
A\ref{lem:feasibility_monopoly} implies that an equilibrium with monopoly prices is possible and
after $(p_{M},p_{M})$ all the subsequent prices are $(p_{M},p_{M})$
as well.
\item Suppose for some $k$: $\xi(s_{k},s_{k+1})\ne(p_{M},x)$ for some
$x<p_{M}$ then instead of $s_{k+1}$ the corresponding seller could
use the following strategy
\begin{equation}
s_{k+1}'=\{s_{k+1}|p_{M}\rightarrow p_{M}\}
\end{equation}
which would lead to the same current payoff and monopoly equilibrium
afterwards.
\item If the above $k$ does not exist then for any $k$ for some $x,y<p_{M}$:
\begin{equation}
\xi(s_{k},s_{k+1})=(p_{M},x)\qquad\xi(s_{k+1},s_{k+2})=(p_{M},y)
\end{equation}
Then the corresponding seller instead of $s_{k+2}$ can use the following
strategy
\begin{equation}
s_{k+2}'=\{s_{k+2}|p_{M}\rightarrow p_{M}\}
\end{equation}
 which would generate the current payoff of at least 
\begin{equation}
\pi(s_{k+2}',s_{k+1})\ge\pi(x,p_{M})=\pi(p_{M},y)
\end{equation}
and again the subsequent monopoly equilibrium.
\end{enumerate}

\subsection*{Proof of Lemma \ref{lem:v_SB}}

We prove the lemma by considering possible cases in terms of the first-best prices pairs $(p_*^i,q_*^i)$ from Definition \ref{def:first_best}. Since we need both prices pairs at the same time to simplify exposition we use the following notation: 
\begin{equation}
\label{eq:first_best_consise}
\begin{array}{l}
(p_{*},q_{*})=\arg\max V^{i}(p,q)\qquad s.t.\ U^{-i}(q,p)\ge\underline{u}^{-i}\\
(\dot{q},\dot{p})=\arg\max V^{-i}(q,p)\qquad s.t.\ U^{i}(p,q)\ge\underline{u}^{i}
\end{array}
\end{equation}
Similarly, we will use $(p_{**},q_{**})$ and $(\ddot{q},\ddot{p})$
as the prices corresponding to the second-best maxima from Definition 
\ref{def:second_best}. In general, in this section all prices indexed with letter $p$ (except for $p_{C},p_{C+\Delta},\dots,p_{M-\Delta},p_{M}$)
would correspond to seller $i$ and indexed with letter $q$ would
correspond to seller $-i$.

For the proof we also need Lemmas A\ref{lem:feasibility}, A\ref{lem:feasibility_monopoly}, A\ref{lem:Pareto} stated below and proven after. Note that, if $U^{-i}(q,p)\ge\underline{u}^{-i}$ then seller $i$ can use an algorithm $s = \{q\rightarrow p, x\rightarrow \underline{s}^i(x)\, for\, x\ne q\}$
to generate the expected payoff of at least $\pi(p,q)$, and, thus: 
\begin{equation}
    \begin{array}{l}
        v^i(p,q)\ge v^i(s) \ge V^i(p,q)\ge\pi(p,q) \\
        u^i(p,q)\ge U^i(p,q) = (1-\beta)\cdot\pi(p,q)+\beta\cdot v^i(p,q) \ge \pi(p,q)
    \end{array}
\end{equation}
Similar inequalities holds for seller $-i$ if  $U^{i}(p,q)\ge\underline{u}^{i}$.

\begin{lemmaA}
\label{lem:feasibility}Suppose $\pi(p,q)\ge\pi(p_{C},p_{C})$ and
$\pi(q,p)\ge\pi(p_{C},p_{C})$ and either $(p,q)\ne(\dot{p},q_{*})$ then
either $U^{i}(p,q)\ge\underline{u}^{i}$ or $U^{-i}(q,p)\ge\underline{u}^{-i}$
(or both).
\end{lemmaA}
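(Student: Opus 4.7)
The plan is proof by contradiction: suppose the premises hold and that both $U^i(p,q) < \underline{u}^i$ and $U^{-i}(q,p) < \underline{u}^{-i}$, and drive this to the conclusion $(p,q) = (\dot p, q_*)$, contradicting the lemma's hypothesis.

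First I would translate the two strict inequalities into bounds on continuation values. Since $U^i(p,q) = (1-\beta)\pi(p,q) + \beta v^i(p,q)$ and $v^i(p,q) \geq \pi(p_C,p_C)$ by Lemma \ref{lem:uv_pC} (together with $\underline{u}^i \geq \pi(p_C,p_C)$, which also follows from the same competitive-price argument), the failure $U^i(p,q) < \underline{u}^i$ squeezes $v^i(p,q)$ strictly below what $i$ could otherwise guarantee; symmetrically for $-i$. In particular, $(p,q)$ is infeasible in both first-best problems from equation \eqref{eq:first_best_consise}: it is excluded from the optimization defining $(p_*,q_*)$ by $U^{-i}(q,p) < \underline{u}^{-i}$, and from the one defining $(\dot q, \dot p)$ by $U^i(p,q) < \underline{u}^i$.

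Next I would exploit the two natural ``swap'' candidates $(p, q_*)$ and $(\dot p, q)$. Using $U^{-i}(q_*, p_*) \geq \underline{u}^{-i}$ by definition of the first-best, together with the premise $\pi(q,p) \geq \pi(p_C,p_C)$ and the monotonicity in Assumption 0(i), I would argue via Proposition \ref{prop:before_after} (which tells us exactly which transitions are admissible) that unless $p = \dot p$, the candidate $(\dot p, q)$ is feasible in $-i$'s first-best problem with $V^{-i}$-value at least $V^{-i}(\dot q, \dot p)$, contradicting optimality; symmetrically, unless $q = q_*$, the candidate $(p, q_*)$ contradicts optimality of $V^i(p_*, q_*)$. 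Combining these two swaps forces $p = \dot p$ and $q = q_*$, precisely the excluded case, which completes the contradiction.

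The main obstacle will be justifying that each swap lies strictly inside the feasible set and dominates the putative first-best without creating spurious sub-cases that the lemma does not exclude. This is where Assumption 0(iv) (the Pareto-boundary condition on extreme deviations) becomes essential: it forces the first-best pairs to sit on the boundary of the convex hull of payoffs, so that any interior candidate $(p,q)$ with both flows above the competitive level can be strictly improved upon by the appropriate swap. I expect the preceding auxiliary results (Lemmas A2 and A3, alluded to in the lead-up to Lemma \ref{lem:v_SB} but not displayed in this excerpt) to codify exactly this Pareto/monotonicity structure, and my plan is to invoke them directly rather than reconstruct them in-line.
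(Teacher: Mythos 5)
Your overall framing (work with the contrapositive: assume both $U^{i}(p,q)<\underline{u}^{i}$ and $U^{-i}(q,p)<\underline{u}^{-i}$ and force $(p,q)=(\dot{p},q_{*})$, splitting coordinate by coordinate) matches the logical shape of the paper's argument, but the engine you propose for each coordinate does not work. You want to contradict the optimality of a first-best pair by exhibiting a ``swap'' candidate such as $(p,q_{*})$ or $(\dot{p},q)$ that is feasible in problem \eqref{eq:first_best} with value ``at least'' that of the optimum. Since $(p_{*},q_{*})$ is \emph{defined} as a maximizer, merely exhibiting another feasible point contradicts nothing, and there is no reason for $V^{i}(p,q_{*})$ to exceed $V^{i}(p_{*},q_{*})$: the hypotheses only control the flow payoffs $\pi(p,q)$ and $\pi(q,p)$ at the pair $(p,q)$ itself, and say nothing about the continuation value at the swapped pair. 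Assumption 0(iv) and Lemma A\ref{lem:Pareto} cannot rescue this step --- they concern the case $q_{*}=p_{M}$ and are deployed only later, inside the proof of Lemma \ref{lem:v_SB} --- and Lemma A\ref{lem:feasibility_monopoly} is itself proved \emph{from} Lemma A\ref{lem:feasibility}, so invoking it here would be circular.

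The missing idea is a direct algorithmic construction rather than a comparison of optima. Suppose $U^{-i}(q,p)<\underline{u}^{-i}$ and $q\ne q_{*}$. Seller $i$ can play $s^{i}=\{s_{*}^{i}\mid q\rightarrow p\}$, the first-best algorithm patched to also map $q\mapsto p$; the patch is legal precisely because $q\ne q_{*}$, so it does not overwrite the transition $q_{*}\rightarrow p_{*}$. Facing $s^{i}$, seller $-i$ earns strictly less than $\underline{u}^{-i}$ by remaining at $(p,q)$, at most $\underline{u}^{-i}$ at any pair other than $(p_{*},q_{*})$ (those prices are met with the punishment responses $\underline{s}^{i}(\cdot)$), and $U^{-i}(q_{*},p_{*})\ge\underline{u}^{-i}$ at $(p_{*},q_{*})$; hence they move there and $v^{i}(p,q)\ge v^{i}(s^{i})=v_{*}^{i}$. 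Then $U^{i}(p,q)=(1-\beta)\pi(p,q)+\beta v^{i}(p,q)\ge(1-\beta)\pi(p_{C},p_{C})+\beta v_{*}^{i}\ge u^{i}(\{x\rightarrow p_{C}\})\ge\underline{u}^{i}$, where the first inequality uses the hypothesis $\pi(p,q)\ge\pi(p_{C},p_{C})$ and the second uses that against the all-$p_{C}$ algorithm seller $i$ earns at most $\pi(p_{C},p_{C})$ per customer (Assumption 0(ii)) plus at most $v_{*}^{i}$ in continuation. The case $p\ne\dot{p}$ with $U^{i}(p,q)<\underline{u}^{i}$ is symmetric, using $\pi(q,p)\ge\pi(p_{C},p_{C})$. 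This is where the competitive-payoff hypotheses actually enter --- not through monotonicity or Pareto-boundary considerations.
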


\begin{lemmaA}
\label{lem:feasibility_monopoly}Either $U^{i}(p_{M},p_{M})\ge\underline{u}^{i}$
or $U^{-i}(p_{M},p_{M})\ge\underline{u}^{-i}$ (or both).
\end{lemmaA}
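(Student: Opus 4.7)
The plan is to argue by contradiction. Suppose both $U^i(p_M, p_M) < \underline{u}^i$ and $U^{-i}(p_M, p_M) < \underline{u}^{-i}$, and derive a violation of the joint-profit maximum at the monopoly pair guaranteed by Assumption 0(iii).

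The first step is a security-value upper bound on $\underline{u}^i + \underline{u}^{-i}$. The per-period flow $\pi(p, q) + \pi(q, p)$ is bounded by $2\pi(p_M, p_M)$ (Assumption 0(iii)), and an induction through the recursion \eqref{eq:uv} gives $v^i(s) + u^{-i}(s) \le 2\pi(p_M, p_M)$ for every algorithm $s$. Combined with $u^{-i}(s) \ge \underline{u}^{-i}$, this yields $v^i(p, q) \le 2\pi(p_M, p_M) - \underline{u}^{-i}$ uniformly in $(p, q)$. Substituting into $\underline{u}^i = \max_p \min_q [(1-\beta) \pi(p, q) + \beta v^i(p, q)]$, using monotonicity of $\pi$ in $q$ (Assumption 0(i)) to reduce the inner min to $q = p_C$, and using Assumption 0(ii) to get $\max_p \pi(p, p_C) = \pi(p_C, p_C)$, I obtain $\underline{u}^i \le (1-\beta) \pi(p_C, p_C) + 2\beta \pi(p_M, p_M) - \beta \underline{u}^{-i}$, and symmetrically for $-i$; adding the two and using $\pi(p_C, p_C) < \pi(p_M, p_M)$ yields $\underline{u}^i + \underline{u}^{-i} \le 2\pi(p_M, p_M)$.

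The second step is to show $U^i(p_M, p_M) + U^{-i}(p_M, p_M) \ge 2\pi(p_M, p_M)$, which combined with the contradiction hypothesis forces $\underline{u}^i + \underline{u}^{-i} > 2\pi(p_M, p_M)$ and closes the argument. In the generic case $(p_M, p_M) \ne (\dot p, q_*)$, Lemma A\ref{lem:feasibility} already delivers one of the two desired inequalities directly. In the degenerate symmetric case $\dot p = q_* = p_M$, both first-best pairs collapse to $(BR(p_M), p_M)$ and $(p_M, BR(p_M))$, whose feasibility supplies the anchor inequalities $U^{-i}(p_M, BR(p_M)) \ge \underline{u}^{-i}$ and $U^i(p_M, BR(p_M)) \ge \underline{u}^i$, and the main obstacle is transferring these to the monopoly pair: per-period flow $\pi(p_M, q)$ is increasing in $q$, but $v^i(p_M, q)$ depends on whether the constraint $U^{-i}(q', p') \ge U^{-i}(q, p_M)$ in Proposition \ref{prop:before_after} tightens or loosens as $q$ grows. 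Assumption 0(iv), which pins down the Pareto-boundary structure along the monopoly price ray, is what I plan to leverage to force that constraint to relax in the right direction, giving $v^i(p_M, p_M) \ge v^i(p_M, BR(p_M))$ and hence $U^i(p_M, p_M) > U^i(p_M, BR(p_M)) \ge \underline{u}^i$, as needed.
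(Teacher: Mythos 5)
Your first step (the bound $\underline{u}^{i}+\underline{u}^{-i}\le 2\pi(p_{M},p_{M})$) is correctly derived but ends up doing no work: in both of your cases you try to establish one of the two inequalities $U^{i}(p_{M},p_{M})\ge\underline{u}^{i}$ or $U^{-i}(p_{M},p_{M})\ge\underline{u}^{-i}$ directly, which is the lemma itself, so the sum comparison is never actually invoked (and the companion inequality $U^{i}(p_{M},p_{M})+U^{-i}(p_{M},p_{M})\ge 2\pi(p_{M},p_{M})$ is never established --- under the contradiction hypothesis the continuation from the monopoly pair leads to other price pairs, so there is no reason the joint value at $(p_{M},p_{M})$ stays above $2\pi(p_{M},p_{M})$). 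The generic case $(p_{M},p_{M})\ne(\dot p,q_{*})$ is indeed immediate from Lemma A\ref{lem:feasibility}, and this matches the paper's first reduction. The genuine gap is the degenerate case $q_{*}=\dot p=p_{M}$, which is where all the content of the lemma lives, and there your argument is only a plan, not a proof. Two things go wrong. First, the first-best pairs need not collapse to the myopic best response against $p_{M}$: Lemma A\ref{lem:Pareto} and Assumption 0(iv) only force $p_{*}$ to lie on the Pareto boundary along the ray $q=p_{M}$, i.e.\ $p_{*}\ge\arg\max_{p'}\pi(p',p_{M})$, not $p_{*}=\arg\max_{p'}\pi(p',p_{M})$. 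Second, and more importantly, the transfer $v^{i}(p_{M},p_{M})\ge v^{i}(p_{M},\dot q)$ does not follow from any available monotonicity: by Proposition \ref{prop:before_after} the feasible set defining $v^{i}(p,q)$ carries the constraint $U^{-i}(q',p')\ge U^{-i}(q,p)$, which can \emph{tighten} as $q$ increases to $p_{M}$ (the opponent's option of keeping their algorithm improves), and the exclusion constraint ``either $(p',q')=(p,q)$ or $q'\ne q$'' also changes which continuation pairs are admissible. Assumption 0(iv) controls the shape of the flow-payoff frontier, not the shape of these endogenous continuation-value constraints, so it cannot by itself force the constraint to relax in the right direction.

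The paper resolves exactly this degenerate case with a global argument about equilibrium sequences rather than a local comparison of continuation values. Assuming both inequalities fail, the monopoly pair can never appear on any equilibrium path; the paper then shows (i) that a path which has one seller at $p_{M}$ at one adjustment moment and neither seller at $p_{M}$ at the next would admit a profitable deviation to an algorithm of the form $\{s_{**}^{i}\mid p_{M}\rightarrow p_{M}\}$, so along the path starting from $s_{0}=\{x\rightarrow x\}$ every convergent pair after the first must have exactly one seller at $p_{M}$; and (ii) that such a sequence cannot be an equilibrium, because the seller enjoying the highest flow $\pi(x_{k},p_{M})$ would prefer to keep their algorithm forever. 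Completing your proof requires an argument of this non-local type; the local monotonicity step is the one that fails.
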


\begin{lemmaA}
\label{lem:Pareto} If $q_*=p_M$ then for any $p\ne p_*$: either $\pi(p_*,q_*)>\pi(p,q_*)$ or $\pi(q_*,p_*)>\pi(q_*,p)$. Moreover, $V^i(p_{*},q_{*})=\pi(p_{*},q_{*})$ and
$U^{-i}(q_{*},p_{*})=\pi(q_{*},p_{*})$.
\end{lemmaA}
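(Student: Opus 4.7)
The plan is to show that in any Markov equilibrium the state $(p_*,p_M)$ is absorbing---once reached, both sellers find it optimal to maintain it, forcing continuation payoffs to equal stage payoffs; the Pareto-dominance claim then follows from Assumption~0(iv) applied to the upper arc of the payoff frontier.

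\textbf{Step 1 (locking in $v^i$).} Apply Proposition~\ref{prop:before_after} at $(p_*,p_M)$ with $-i$ about to adjust. Staying is always a feasible transition, and $(p_*,p_M)$ attains the unconstrained first-best $v_*^i$, so the maximum is reached at $(p',q')=(p_*,p_M)$, giving $v^i(p_*,p_M)=V^i(p_*,p_M)=v_*^i$.

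\textbf{Step 2 (the equalities).} Unpack $u^i(p_*,p_M)$. Against $-i$'s incumbent $s^{-i}_*=\{p_*\mapsto p_M,\ x\mapsto\underline{s}^{-i}(x)\text{ for }x\ne p_*\}$, $i$'s options collapse to (i) maintaining $(p_*,p_M)$, worth $(1-\beta)\pi(p_*,p_M)+\beta v_*^i$; or (ii) deviating to some $p'\ne p_*$, producing state $(p',\underline{s}^{-i}(p'))$ with payoff at most $\underline{u}^i$. The bound $\underline{u}^i\le(1-\beta)\pi(p_C,p_C)+\beta v_*^i$ follows by choosing $q=p_C$ in the inner min (Assumption~0(ii) yields $\max_p\pi(p,p_C)=\pi(p_C,p_C)$) and using $v^i\le v_*^i$. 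Combined with $\pi(p_*,p_M)>\pi(p_C,p_C)$ (established in Step 3), maintenance strictly dominates. Hence $u^i(p_*,p_M)=(1-\beta)\pi(p_*,p_M)+\beta v_*^i$; plugging into the identity $V^i(p_*,p_M)=(1-\beta)\pi(p_*,p_M)+\beta u^i(p_*,p_M)=v_*^i$ and solving yields $v_*^i=\pi(p_*,p_M)$, so $V^i(p_*,p_M)=\pi(p_*,p_M)$. The mirror argument for $-i$ delivers $U^{-i}(p_M,p_*)=\pi(p_M,p_*)$.

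\textbf{Step 3 (Pareto dominance).} Set $p_{\mathrm{BR}}:=\arg\max_p\pi(p,p_M)$. I show $p_*\ge p_{\mathrm{BR}}$ by contradiction: if $p_*<p_{\mathrm{BR}}$, Assumption~0(i) gives $\pi(p_{\mathrm{BR}},p_M)>\pi(p_*,p_M)$ and $\pi(p_M,p_{\mathrm{BR}})>\pi(p_M,p_*)$; the maintenance algorithms at $(p_{\mathrm{BR}},p_M)$ produce $v^{-i}(p_M,p_{\mathrm{BR}})\ge\pi(p_M,p_{\mathrm{BR}})$, so $(p_{\mathrm{BR}},p_M)$ is first-best feasible, and $V^i(p_{\mathrm{BR}},p_M)\ge\pi(p_{\mathrm{BR}},p_M)>\pi(p_*,p_M)=V^i(p_*,p_M)$ (via Step 2), contradicting first-best. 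From $p_*\ge p_{\mathrm{BR}}$ and unimodality of $\pi(\cdot,p_M)$, $\pi(p_*,p_M)\in[\pi(p_M,p_M),\pi(p_{\mathrm{BR}},p_M)]$, which together with Assumption~0(iii) gives $\pi(p_*,p_M)>\pi(p_C,p_C)$, justifying the Step 2 reservation. Now Assumption~0(iv) places every $(\pi(p,p_M),\pi(p_M,p))$ with $p\in[p_{\mathrm{BR}},p_M]$ on a strictly monotone arc of the convex-hull boundary---first coordinate decreasing, second increasing by 0(i)---so two distinct such points are Pareto-incomparable, delivering the strict inequality in whichever coordinate depends on $p\lessgtr p_*$. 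For $p<p_{\mathrm{BR}}$, monotonicity gives $\pi(p_M,p)<\pi(p_M,p_{\mathrm{BR}})\le\pi(p_M,p_*)$, so the second disjunct holds directly.

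\textbf{Main obstacle.} The subtlest point is the apparent circularity in Step 3's argument for $p_*\ge p_{\mathrm{BR}}$, which invokes $V^i(p_*,p_M)=\pi(p_*,p_M)$ from Step 2 while Step 2's dominance argument used $\pi(p_*,p_M)>\pi(p_C,p_C)$ from Step 3. This is resolved by first establishing $p_*\ge p_{\mathrm{BR}}$ independently via the coarse maintenance bound $V^i\ge\pi$ alone---which needs only that both sellers can sustain the candidate state via maintenance-type algorithms, not full incentive-compatible play---and then executing Steps 2 and 3 in the stated order.
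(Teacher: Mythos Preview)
Your Step~1 is fine and recovers $v^i(p_*,q_*)=V^i(p_*,q_*)=v_*^i$ correctly. The real problem is Step~2. You compute $u^i(p_*,p_M)$ by declaring that $-i$'s incumbent algorithm is $\{p_*\mapsto p_M,\ x\mapsto\underline{s}^{-i}(x)\}$, but by definition $u^i(p_*,q_*)=u^i(s^{-i}(q_*,p_*))$ where $s^{-i}(q_*,p_*)=\arg\max_{s:p_*\to q_*}v^{-i}(s)$ is chosen by $-i$ to maximise \emph{$-i$'s} continuation, not to punish $i$. There is no reason for $-i$'s optimal algorithm to set the worst response $\underline{s}^{-i}(x)$ at every $x\ne p_*$; $-i$ may well prefer an algorithm that tempts $i$ to move to a price pair more favourable to $-i$. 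Without this assumption you have no upper bound on $u^i(p_*,q_*)$, and hence the chain $u^i=(1-\beta)\pi(p_*,p_M)+\beta v_*^i\Rightarrow v_*^i=\pi(p_*,p_M)$ breaks. The ``mirror argument'' for $U^{-i}(q_*,p_*)=\pi(q_*,p_*)$ is also not a mirror: $(p_*,q_*)$ is $i$'s first-best, not $-i$'s, so the symmetric reasoning that $-i$ finds staying optimal does not carry over.

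The paper's proof proceeds differently and avoids ever specifying $-i$'s algorithm. It follows the equilibrium sequence of price pairs $(p_*,q_*)\to(p',q')\to(p'',q'')\to\dots$, uses Lemma~\ref{lem:uv} to get the monotonicity chain $V^i(p_*,q_*)\le U^i(p',q')$ and $U^{-i}(q_*,p_*)\le V^{-i}(q',p')$, and then does a case split on whether the chain continues upward or drops to a first/second-best level. Each branch either yields the Pareto-dominance claim directly or contradicts first-bestness of $(p_*,q_*)$. The equalities $V^i(p_*,q_*)=\pi(p_*,q_*)$ and $U^{-i}(q_*,p_*)=\pi(q_*,p_*)$ are obtained only at the very end, by combining the lower bounds $V^i\ge\pi(p_*,q_*)$, $U^{-i}\ge\pi(q_*,p_*)$ with Assumption~0(iv): since $(\pi(p_*,q_*),\pi(q_*,p_*))$ lies on the boundary of the convex hull of payoffs, both weak inequalities must be equalities. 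This is the step you are missing---the convex-hull condition is what pins down the equalities, not a direct best-response computation.
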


\subsubsection*{Case  $q_{*}\protect\ne p_{M}$ and $\dot{p}\protect\ne p_{M}$}

From Lemma A\ref{lem:feasibility_monopoly} either $U^{i}(p_{M},p_{M})\ge \underline{u}^{i}$ or
$U^{-i}(p_{M},p_{M})\ge \underline{u}^{-i}$ . Suppose for example the latter,
then 
$v_{**}^{i}\ge V^{i}(p_{M},p_{M})\ge\pi(p_{M},p_{M})$.

\subsubsection*{Case $q_{*}=p_{M}$ and $\dot{p}\protect\ne p_{M}$}

\begin{enumerate}
    \item We need to consider only $U^{i}(p_{M},p_{M})<\underline{u}^{i}$. Otherwise, $v_{**}^{-i}\ge V^{-i}(p_{M},p_{M})\ge\pi(p_{M},p_{M})$.

    \item Then, by Lemma A\ref{lem:feasibility_monopoly}: $U^{-i}(p_{M},p_{M})\ge\underline{u}^{-i}$ and, thus,  $\underline{u}^{i}>U^{i}(p_{M},p_{M})\ge\pi(p_{M},p_{M})$.
    
    \item In addition, $p_*\ne p_M$. Otherwise, consider algorithm $s=\{x\rightarrow x\}$. For this algorithm $v^{-i}(s)\ge\pi(p_M,p_M)$ and, thus, $\underline{u}^{i}\le u^i(s)\le \pi(p_M,p_M)$.
    
    \item Let us show that $U^{-i}(p_{M-\Delta},p_{M-\Delta})\ge \underline{u}^{-i}$
    \begin{itemize}
        \item If not, then by Lemma A\ref{lem:feasibility}: $U^{i}(p_{M-\Delta},p_{M-\Delta})\ge\underline{u}^{i}$
        and, thus:
        \begin{equation}
        \begin{array}{ll}
            U^{-i}(p_{M-\Delta},p_{M-\Delta}) & \ge(1-\beta)\cdot\pi(p_{M-\Delta},p_{M-\Delta})+\beta\cdot v^{-i}(p_{M-\Delta},p_{M-\Delta}) \\
            & \ge\pi(p_{M-\Delta},p_{M-\Delta})
        \end{array}
        \end{equation}

        \item However, by Lemma A\ref{lem:Pareto}: $\underline{u}^{-i}\le U^{-i}(q_*,p_*)= \pi(p_{M},p_{*})<\pi(p_{M-\Delta},p_*)\le\pi(p_{M-\Delta},p_{M-\Delta})$.
    \end{itemize}
    
    \item Finally since $U^{-i}(p_{M-\Delta},p_{M-\Delta})\ge \underline{u}^{-i}$:
    \begin{equation}
        \begin{array}{ll}
             v^i_{**}\ge V^{i}(p_{M-\Delta},p_{M-\Delta})&\ge(1-\beta)\cdot\pi(p_{M-\Delta},p_{M-\Delta})+\beta\cdot\underline{u}^{i} \\
             &\ge (1-\beta)\cdot\pi(p_{M-\Delta},p_{M-\Delta})+\beta\cdot\pi(p_{M},p_{M})
        \end{array}
    \end{equation}

\end{enumerate}

\subsubsection*{Case $q_{*}=p_{M}$ and $\dot{p}=p_{M}$}

\begin{enumerate}

\item In this case the monopoly price pair $(p_M, p_M)$ is an absorbing state.

\begin{itemize}
    \item By Lemma A\ref{lem:Pareto}: $\underline{u}^{-i}\le U^{-i}(q_*,p_*)\le \pi(p_M,p_M)$. Similarly, $\underline{u}^i\le \pi(p_M,p_M)$. 
    \item By Lemma A\ref{lem:feasibility_monopoly} for one of the sellers (e.g. $-i$): $U^{-i}(p_M,p_M)\ge \underline{u}^{-i}$, which implies:
        $U^i(p_M,p_M)=(1-\beta)\cdot\pi(p_M,p_M)+\beta\cdot v^i(p_M,p_M)\ge \pi(p_M,p_M)\ge \underline{u}^{i}$.
    \item Finally, $v^{i,-i}(p_M,p_M)\ge\pi(p_M,p_M)$ and $u^{i,-i}(p_M,p_M)\ge U^{i,-i}(p_M,p_M)\ge\pi(p_M,p_M)$, which implies after $(p_M,p_M)$ are subsequent price pairs are also $(p_M,p_M)$.
\end{itemize}

\item Thus, $v^{i,-i}(p_M,p_M) = u^{i,-i}(p_M,p_M) = V^{i,-i}(p_M,p_M)=U^{i,-i}(p_M,p_M) = \pi(p_M,p_M)$.

\item Suppose $U^{i}(p_{M-\Delta},p_{M-\Delta})=(1-\beta)\cdot\pi(p_{M-\Delta},p_{M-\Delta})+\beta\cdot v^{i}(p_{M-\Delta},p_{M-\Delta})>\pi(p_{M},p_{M})$
\begin{itemize}
\item Since $v^{i}(p_{M-\Delta},p_{M-\Delta})\le V^i(p_*,q_*)$ then $p_*\ne p_M$.

\item If $U^{-i}(p_{M-\Delta},p_{M-\Delta})<\underline{u}^{-i}$ then by lemma \ref{lem:feasibility}: $U^{i}(p_{M-\Delta},p_{M-\Delta})\ge\underline{u}^{i}$. 
\begin{itemize}
    \item As a result, $U^{-i}(p_{M-\Delta},p_{M-\Delta})\ge \pi(p_{M-\Delta},p_{M-\Delta})$.
    \item However, by Lemma A\ref{lem:Pareto}:  $\underline{u}^{-i}\le\pi(q_*,p_*)<\pi(p_{M-\Delta},p_{M-\Delta})$.
\end{itemize}

\item Finally, if $U^{-i}(p_{M-\Delta},p_{M-\Delta})\ge\underline{u}^{-i}$:
\begin{equation}
\begin{array}{ll}
V^{i}(p_{M-\Delta},p_{M-\Delta})&=(1-\beta)\cdot\pi(p_{M-\Delta},p_{M-\Delta})+u^{i}(p_{M-\Delta},p_{M-\Delta}) \\
&\ge(1-\beta)\cdot\pi(p_{M-\Delta},p_{M-\Delta})+U^{i}(p_{M-\Delta},p_{M-\Delta}) \\
&\ge(1-\beta)\cdot\pi(p_{M-\Delta},p_{M-\Delta})+\beta\cdot\pi(p_{M},p_{M})
\end{array}
\end{equation}
\end{itemize}

\item Thus, we can assume that $U^{i,-i}(p_{M},p_{M})\ge U^{i,-i}(p_{M-\Delta},p_{M-\Delta})$, which also implies that $v^{i,-i}(p_{M-\Delta},p_{M-\Delta})\ge\pi(p_{M},p_{M})$.

\item Consider $v^{i}(p_{M-\Delta},p_{M-\Delta})=V^{i}(p',q')$:

\begin{itemize}
\item If $q'\ne p_{M}$ then $v_{**}^{i}\ge V^{i}(p',q')\ge\pi(p_{M},p_{M})$.

\item If $q'=p_{M}$ and $p'<p_{M}$ then
\begin{equation}
\begin{array}{l}
V^{i}(p',q')=(1-\beta)\cdot\pi(p',p_{M})+\beta\cdot u^{i}(p',p_{M})\\
U^{-i}(q',p')=(1-\beta)\cdot\pi(p_{M},p')+\beta\cdot v^{-i}(p_{M},p')
\end{array}
\end{equation}
\begin{itemize}
    \item Since $U^{-i}(q',p')\ge U^{-i}(p_{M-\Delta},p_{M-\Delta})\ge(1-\beta)\cdot\pi(p_{M-\Delta},p_{M-\Delta})+\beta\cdot\pi(p_{M},p_{M})$
and $\pi(p_{M-\Delta},p_{M-\Delta})>\pi(p_{M},p')$ then $v^{-i}(p_{M},p')>\pi(p_{M},p_{M})$.
    \item However, by Lemma \ref{lem:uv}: $u^{i}(p',p_{M})\ge V^{i}(p',q')\ge\pi(p_{M},p_{M})$, which implies $v^{-i}(p_{M},p')\le\pi(p_{M},p_{M})$.
\end{itemize}

\item Finally, if $v^{i,-i}(p_{M},p_{M})=V^{i,-i}(p_M,p_M)=\pi(p_M,p_M)$ then
\begin{equation}
    V^{i,-i}(p_{M-\Delta},p_{M-\Delta})=(1-\beta)\cdot\pi(p_{M-\Delta},p_{M-\Delta})+\beta\cdot\pi(p_{M},p_{M})  
\end{equation}
and by Lemma A\ref{lem:feasibility} either $U^{i}(p_{M-\Delta},p_{M-\Delta})\ge\underline{u}^{i}$
or $U^{-i}(p_{M-\Delta},p_{M-\Delta})\ge\underline{u}^{-i}$.
\end{itemize}

\end{enumerate}

Finally to complete the proof of Lemma \ref{lem:v_SB} and hence Theorem \ref{prop:multiple_prices}, we present the proofs of Lemmas A\ref{lem:feasibility}, A\ref{lem:feasibility_monopoly} and A\ref{lem:Pareto}.

\subsection*{Proof of Lemma A\ref{lem:feasibility}}

Suppose $U^{-i}(q,p)<\underline{u}^{-i}$ and $q\ne q_{*}$. If we consider
algorithm $s^{i}=\{s_{*}^{i}|q\rightarrow p\}$ for seller $i$ then $v^{i}(s^{i})=v_{*}^{i}$, which implies that:
\begin{equation}
U^{i}(p,q)=(1-\beta)\cdot\pi(p,q)+\beta\cdot v^{i}(p,q)\ge(1-\beta)\cdot\pi(p_{C},p_{C})+\beta\cdot v_{*}^{i}\ge\underline{u}^{i}.
\end{equation}

\subsection*{Proof of Lemma A\ref{lem:feasibility_monopoly}}
\begin{enumerate}

\item By Lemma A\ref{lem:feasibility} we only need to consider the
case in which $q_{*}=\dot{p}=p_{M}$.

\item Assume $U^{i}(p_{M},p_{M})<\underline{u}^{i}$ and $U^{-i}(p_{M},p_{M})<\underline{u}^{-i}$
then in any equilibrium sequence we should never observe a monopoly
price pair $(p_{M},p_{M})$.

\item Consider some arbitrary equilibrium sequence $s_{0},s_{1},s_{2},...$
Suppose ``odd'' algorithms correspond, for example, to seller $i$
and, in addition:
\begin{equation}
\begin{array}{l}
\xi(s_{0},s_{1})=(x,p_{M})\qquad x\ne p_M\\
\xi(s_{1},s_{2})=(y,z)\qquad z\ne p_M
\end{array}
\end{equation}
then the monopoly price pair $(p_{M},p_{M})$ is possible.

\begin{itemize}
\item To prove this first note that:
\begin{equation}
\underline{u}^{i}\le u^{i}(s_{0})=(1-\beta)\cdot\pi(p_{M},x)+\beta\cdot V^{i}(y,z)<(1-\beta)\cdot\pi(p_{M},p_{M})+\beta\cdot v_{**}^{i}
\end{equation}
\item Now consider the following two unrelated algorithms:
\begin{equation}
\begin{array}{l}
s^{-i}=\{\underline{s}^{-i}|p_{M}\rightarrow p_{M}\}\\
s^{i}=\{s_{**}^{i}|p_{M}\rightarrow p_{M}\}
\end{array}
\end{equation}
\item By responding with $s^{i}$ to $s^{-i}$ the seller $i$ would get
at least $(1-\beta)\cdot\pi(p_{M},p_{M})+\beta\cdot v_{**}^{i}>\underline{u}^{i}$
while if the next price pair is not $(p_{M},p_{M})$ they would get
at most $\underline{u}^{i}$.
\end{itemize}

\item Consider an equilibrium sequence $s_{0},s_{1},s_{2},\dots$ starting
with $s_{0}=s_{T}=\{x\rightarrow x\,for\,any\,x\}$. In this sequence:
\begin{itemize}
\item $\xi(s_{0},s_{1})=(x_{0},x_{0})\ne(p_{M},p_{M})$.
\item $\xi(s_{1},s_{2})=(x_{1},p_{M})$: otherwise we can replace $s_{1}$
with $s_{1}'=\{s_{1}|p_{M}\rightarrow p_{M}\}$ and improve payoffs
of both sellers in the previous period.
\item $\xi(s_{k},s_{k+1})=(x_{k},p_{M})$ for $k>1$: otherwise by previous
conjecture there is should exist a monopoly price pair.
\item But, then this sequence cannot be an equilibrium. The seller with
the highest $\pi(x_{k},p_{M})$ would find optimal to keep the algorithm
$s_{k}$.
\end{itemize}
\end{enumerate}

\subsection*{Proof of Lemma A\ref{lem:Pareto}}
\begin{enumerate}

\item Consider $V^{i}(p_{*},q_{*})$ and the sequence of price pairs that follows
$(p_{*},q_{*})$ in equilibrium:
\begin{equation}
(p_{*},q_{*})\rightarrow(p',q')\rightarrow(p'',q'')\rightarrow\dots
\end{equation}

\item By Lemma \ref{lem:uv}: $V^i(p_*,q_*)\le U^i(p',q') $ and $V^{-i}(q', p')\le U^{-i}(q'',p'')$. 

\item In addition for $q_{*}=p_{M}$: $U^{-i}(q_*,p_*)\le V^{-i}(q', p')$
\begin{itemize}
    \item $U^{i}(p_{*},q_{*})=(1-\beta)\cdot\pi(p_{*},q_{*})+\beta\cdot v_{*}^{i}\ge \underline{u}^{i}$ since $\underline{u}^{i}\le u^{i}(\{x\rightarrow p_c\})\le (1-\beta)\cdot\pi(p_C,p_C)+\beta\cdot v_{*}^{i}$.
    \item As a result, $V^{-i}(q',p')\ge V^{-i}(q_*,p_*) =(1-\beta)\cdot\pi(q_{*},p_{*})+\beta\cdot u^{-i}(q_*,p_*)\ge (1-\beta)\cdot\pi(q_{*},p_{*})+\beta\cdot U^{-i}(q_*,p_*)$.
    \item Finally, since $U^{-i}(q_*,p_*)=(1-\beta)\cdot \pi(q_*,p_*)+\beta\cdot V^{-i}(q',p')$ then $V^{-i}(q', p')\ge\pi(q_*,p_*)$ and, thus, $V^{-i}(q', p')\ge U^{-i}(q_*,p_*)$.

\end{itemize}

\item If $U^{i}(p',q')\le V^{i}(p'',q'')$ then since $(p_{*},q_{*})$
is the first-best price pair for $i$:
\begin{equation}
\begin{array}{l}
V(p_{*},q_{*})=U(p',q',)=V(p'',q'')=\pi(p_{*},q_{*})=\pi(p',q') \\
U(q_{*},p_{*})=U(q',p')=U(q'',p'')=\pi(q_{*},p_{*})=\pi(q',p')
\end{array}
\end{equation}

\begin{itemize}
\item Suppose, there exists $\tilde{p}$ such that $\pi(\tilde{p},q_{*})\ge\pi(p_*,q_*)$ and $\pi(q_{*},\tilde{p})>\pi(q_*, p_*)$.

\item Consider 
$U^{-i}(q_{*},\tilde{p})=(1-\beta)\cdot\pi(q_{*},\tilde{p})+\beta\cdot v^{-i}(q_{*},\tilde{p})$. Since $-i$ could use the following algorithm: 
\begin{equation}
s^{-i}=\{p_{*}\rightarrow q_{*},\tilde{p}\rightarrow q_{*},x\rightarrow\underline{s}^{-i}(x)\,for\,x\ne p_{*},\tilde{p}\}
\end{equation}
 then $v^{-i}(q_{*},\tilde{p})\ge v^{-i}(s^{-i})\ge\pi(q_{*},p_{*})\ge\underline{u}^{-i}$
and, thus, $U^{-i}(q_{*},\tilde{p})\ge\underline{u}^{-i}$. But, then
$(p_{*},q_{*})$ is not the first best price pair.
\end{itemize}

\item If $U^{i}(p',q')>V^{i}(p'',q'')$ the by lemma \ref{lem:uv} either $(p'',q'')=(p_*,q_*)$ or
$(p'',q'')=(p_{**},q_{**})$ and $q'=q_*$.

\item If $(p'',q'')=(p_{*},q_{*})$
then
\begin{equation}
\begin{array}{l}
U^{-i}(q_{*},p_{*})=U^{-i}(q',p')=\pi(q_{*},p_{*})=\pi(q',p')\\
\pi(p',q')>\pi(p_{*},q_{*})
\end{array}
\end{equation}
However, then $V^{i}(p',q')>V^{i}(p_{*},q_{*})$ and $U^{-i}(q',p')\ge U^{-i}(q_{*},p_{*})\ge\underline{u}^{-i}$
and, thus $(p_{*},q_{*})$ is not the first best price pair.

\item If $(p'',q'')=(p_{**},q_{'**})$ and $q'=q_{*}$ then:
\begin{itemize}

\item $U^{i}(p',q')>V^{i}(p'',q'')$ implies that $\pi(p_{*},q_{*})<\pi(p',q_{*})$.

\item If $\pi(q_{*},p')>\pi(q_{*},p_{*})$ then
\begin{equation}
\begin{array}{l}
U^{-i}(q_{*},p')\ge(1-\beta)\cdot\pi(q_{*},p')+\beta\cdot V(q_{*},p')>U^{-i}(q_{*},p_{*})\ge\underline{u}^{-i}\\
V^{i}(p',q_{*})\ge(1-\beta)\cdot\pi(q_{*},p')+\beta\cdot U(p',q_{*})>V^{i}(p_{*},q_{*})
\end{array}
\end{equation}
and, thus, again $(p_{*},q_{*})$ is not the first best price pair.
\end{itemize}

\item Finally, let us prove that $V^i(p_{*},q_{*})=\pi(p_{*},q_{*})$ and
$U^{-i}(q_{*},p_{*})=\pi(q_{*},p_{*})$.

\begin{itemize}
    \item Since $U^{i}(p_{*},q_{*})\ge\underline{u}^{i}$ then $V^{-i}(q',p')\ge V^{-i}(q_*,p_*)\ge\pi(q_*,p_*)$ which in turns implies that $U^{-i}(q_{*},p_{*})\ge\pi(q_{*},p_{*})$. In addition, $V^i(p_*,q_*)\ge \pi(p_*,q_*)$.
    \item By assumption 0--(iv) from section \ref{subsec:market}: $\left(\pi(p_*,q_*), \pi(q_*,p_*)\right)$ lies on the convex Hull of the payoffs set\footnote{Even if $p_*<\arg\max\tilde{p} \pi(\tilde{p},p_M)$ then $(p_*,q_*)$ is the point with the highest possible payoff on one seller.}, which means the above inequalities can only hold if all subsequent price pairs are $(p_*,q_*)$.
\end{itemize}
\end{enumerate}

\subsection*{Proof of Lemma \ref{lem:sTsR}}
\begin{enumerate}
\item Similarly, as in the proof of Proposition \ref{sec:PD} we can show it is never optimal to respond with an algorithm that would generate an immediate payoff of $\pi(p_M,p_C)$ and that $f^i(s_T)\ne s_C$.
\item Consider some equilibrium sequence: $s_{0},s_{1},s_{2},s_{3},...$
\item The following two equilibrium sequences are not possible:
\begin{itemize}
\item $s_{T},s_{R},s_{C},s_{3},\dots$: ``odd'' seller can deviate to
$s_{1}'=s_{T}$ and $s_{3}'=s_{3}$ (note that $s_{3}$is either $s_{C}$
or $s_{T}$) and improve their payoff
\item $s_{R},s_{R},s_{C},s_{3},\dots$: ``odd'' seller can deviate to
$s_{1}'=s_{C}$ and $s_{3}'=s_{3}$
\end{itemize}
\item Suppose $f^{i}(s_{T})=s_{R}$ then given the above only the following
equilibrium sequences starting with $s_{T}$ chosen by seller $-i$:
\begin{itemize}
\item $s_{T},s_{R},s_{T},s_{R},...$: ``odd'' seller can deviate to $s_{1}'=s_{T}$
and $s_{3}'=s_{3}=s_{R}$
\item $s_{T},s_{R},s_{R},s_{R},...$: ``odd'' seller can deviate to $s_{1}'=s_{T}$
and then keep this algorithm
\item $s_{T},s_{R},s_{R},s_{T},s_{R},s_{T},s_{R},...$: same as above starting
from $s_{3}$
\item $s_{T},s_{R},s_{R},s_{T},s_{T},...$ or $s_{T},s_{R},s_{R},s_{T},s_{M},...$:
``odd'' seller can deviate to $s_{1}'=s_{T}$
\end{itemize}
\end{enumerate}

\subsection*{Proof of Proposition \ref{prop:PD_equilibria_cycles}}
\begin{enumerate}
\item Because of Lemma \ref{lem:sTsR} the first five items in the proof of proposition
\ref{th:PD_equilibria} are still valid.
\item Consider one seller and suppose they are responding to algorithm $s_{R}$
chosen by the competitor.
\begin{itemize}
\item If the seller chooses $s_{C}$ then the expected payoff is :
\begin{equation}
(1-\beta)\cdot\pi(s_{C},s_{R})+\beta(1-\beta)\cdot\pi(s_{C},s_{T})+\beta^{2}\cdot\pi(p_{M},p_{M})=(1-\beta)\cdot(1+x)+\beta^{2}
\end{equation}
\item If the seller chooses $s_{T}$ then the expected payoff is: 
\begin{equation}
(1-\beta)\cdot\pi(s_{T},s_{R})+\beta\cdot\pi(p_{M},p_{M})
\end{equation}
which is equal to (for two options of resolving cycles):
\begin{equation}
\begin{array}{l}
(minimum\,price)\qquad\,\,\,\, =\beta\\
(average\,payoff)\qquad =(1-\beta)\cdot\left(\frac{1}{2}+\frac{x-y}{4}\right)+\beta
\end{array}
\end{equation}
\item Responding with $s_{T}$ could be preferable to $s_{C}$ only for
``average payoff'' case and only if
\begin{equation}
(1-\beta)\cdot(1+x)+\beta^{2}\le(1-\beta)\cdot\left(\frac{1}{2}+\frac{x-y}{4}\right)+\beta\qquad\Leftrightarrow
\end{equation}
\begin{equation}
y\le4\beta-2-3x
\end{equation}
which is only possible if $\beta\ge\frac{1}{2}$ and $x<\frac{4\beta-2}{3}<\beta$
(thus, only type I equilibrium could be affected).
\end{itemize}
\end{enumerate}

\medskip

\subsection*{\it \underline{Part B}: Non-existent "second best" price pair}

In this section we consider a special case, in which the "second best" price pair from Definition \ref{def:second_best} does not exist for at least one of the sellers. We can show that in this case for an arbitrary equilibrium sequence $s_0, s_1, s_2, \dots$ the continuation payoffs for at least one of the seller will exceed the monopoly payoff $\pi(p_M, p_M)$. The steps bellow prove this result. For the "first best" pair pairs we use the same notation equation (\ref{eq:first_best_consise}) as in the proof of Lemma \ref{lem:v_SB}.

\begin{enumerate}

\item Suppose, for example, $U^{-i}(q,p)<\underline{u}^{-i}$ for any $p, q\ne q_*$. 
This implies $U^{-i}(q_*,p)\ge\underline{u}^{-i}$ for any $p$ (seller $i$ can construct an algorithm which would lead to prices $(q_*,p)$).

\item As a result, Lemma \ref{lem:uv} for seller $i$ means either $v^i(s_{k+1})\ge u^i(s_k)$ or $v_i(s_{k+1})=v^i_*$. Thus, the continuation payoffs for seller $i$ in the any equilibrium sequence either become constant (the we can use Lemma \ref{lem:constant_payoffs}) or reaches $v_i^*$. 

\item If $q_*=p_M$ then $v^i_*\ge V^i(p_M,p_M)\ge\pi(p_M,p_M)$. Thus, assume $q_*\ne p_M$.

\item If $\dot{p}\ne p_M$ then $v^{-i}_{**}\ge V^{-i}(p_M,p_M)\ge\pi(p_M,p_M)$. Thus, assume $\dot{p}=p_M$. (Since $U^i(p_M,p_M)\ge\pi(p_M,p_M)$, the "second best" price pair necessarily exists for seller $-i$).

\item Since $U^{-i}(\dot{q},\dot{p})=(1-\beta)\cdot\pi(\dot{q},\dot{p})+\beta\cdot v_{*}^{-i}\ge u^{-i}(\{x\rightarrow p_C\})\ge\underline{u}^{-i}$ then $\dot{q}=q_*$.

\item Finally for any $p$: $U^{-i}(q_*,p)=(1-\beta)\cdot\pi(q_*,p)+\beta\cdot v^{-i}(q_*,p)$. Then, either $p=\dot{p}$ and $v^{-i}(q_*,p)=v^{-i}_*$ or $p\ne\dot{p}$. This together with Lemma \ref{lem:uv} implies that in any equilibrium sequence either continuation payoffs for seller $-i$ converge to a constant or they become above $v^{-i}_*\ge V^{-i}(p_M,p_M)\ge\pi(p_M,p_M)$. 

\end{enumerate}
\medskip

\subsection*{\it \underline{Part C}: Explicitly including pre-convergent prices into the calculations}

Here we define the precise model and calculate expected payoffs within this model, and study the Markov equilibria in this updated model. A key difference is that the payoff relevant variable now is not only the algorithm of the opponent, but also the next price that this algorithm chooses. Then, we formally establish and prove the claim made in Section \ref{sec:precise}. For simplicity and to relax assumptions piecemeal, we continue to assume that the seller with the revision opportunity deciphers the opponent's algorithm at a very small cost, see part E of the Appendix for a justification.

\subsubsection*{Precise Model \label{subsec:precise_payoffs}}

Similar to our main model we assume
that prices change in the discrete moments $0,1\cdot\Delta t,2\cdot\Delta t,\dots$
with subsequent prices determined by
\begin{equation}
\begin{array}{l}
p_{t+\Delta t}^{A}=s^{A}(p_{t}^{B})\\
p_{t+\Delta t}^{B}=s^{B}(p_{t}^{A})
\end{array}
\end{equation}
where $s^{A}$ and $s^{B}$ are the current algorithms of the two
sellers. Customers arrive continuously and randomly (also potentially between price adjustments) with Poisson intensity $\lambda$. The continuously compounded discount
rate is $r$. Before each price adjustment the seller who was not
the last to adjust their algorithm gets the opportunity to adjust
their algorithms with probability $(1-e^{-\mu\Delta t})$. When this
happens this seller can also choose prices for the current and
for the next period. 

As before, we {\it index} all moments in time when one of the sellers 
adjusting their algorithm with $k=0,1,2,\dots$. and consider some
arbitrary adjustment moment $k$ that happens at time $t_{k}$. We
express the ``precise'' expected payoff of the seller adjusting
their algorithm, $\tilde{U}_{k}$, and of the other seller, $\tilde{V}_{k}$,
as follows (note that $\int_{0}^{\Delta t}\lambda\pi_{k,i}e^{-rt}dt=\frac{\lambda}{r}(1-e^{-r\Delta t})\pi_{k,i}$):
\begin{equation}
\begin{array}{l}
\tilde{U}_{k}=\sum_{i=0}^{\infty}e^{-(r+\mu)i\Delta t}\cdot\left[\frac{\lambda}{r}(1-e^{-r\Delta t})\pi_{k,i}+(1-e^{-\mu\Delta t})\tilde{V}_{k+1}\right]\\
\ \\
\tilde{V}_{k}=\sum_{i=0}^{\infty}e^{-(r+\mu)i\Delta t}\cdot\left[\frac{\lambda}{r}(1-e^{-r\Delta t})\bar{\pi}_{k,i}+(1-e^{-\mu\Delta t})\tilde{U}_{k+1}\right]
\end{array}
\end{equation}
where $\pi_{k,i}$ and $\bar{\pi}_{k,i}$ are the expected payoff
from one customer for the corresponding seller in the interval $(t_{k+i\cdot\Delta t},t_{k+(i+1)\cdot\Delta t})$,
in which the prices are constant.

If we denote average weighted per period payoffs of each sellers as:
\begin{equation}
\pi_{k}=\frac{\sum_{i=0}^{\infty}e^{-(r+\mu)i\Delta t}\pi_{k,i}}{\sum_{i=0}^{\infty}e^{-(r+\mu)i\Delta t}}\qquad\bar{\pi}_{k}=\frac{\sum_{i=0}^{\infty}e^{-(r+\mu)i\Delta t}\bar{\pi}_{k,i}}{\sum_{i=0}^{\infty}e^{-(r+\mu)i\Delta t}}
\label{eq:pi_average}
\end{equation}
and normalize payoffs as $\tilde{u}_{k}=\frac{\tilde{U}_{k}}{\lambda/r}\cdot e^{-\mu\Delta t}$
and $\tilde{v}_{k}=\frac{\tilde{V}_{k}}{\lambda/r}\cdot e^{-\mu\Delta t}$
then we get the following equations describing the ``precise'' normalized
payoffs:
\begin{equation}
\begin{array}{l}
\tilde{u}_{k}=(1-\tilde{\beta})\cdot\pi_{k}+\tilde{\beta}\cdot\tilde{v}_{k+1}\\
\tilde{v}_{k}=(1-\tilde{\beta})\cdot\bar{\pi}_{k}+\tilde{\beta}\cdot\tilde{u}_{k+1}
\end{array}\label{eq:uv_precise}
\end{equation}
Here $\tilde{\beta}=\frac{1-e^{-\mu\Delta t}}{1-e^{-(\mu+r)\Delta t}}$
is the effective discount factor.

Equations (\ref{eq:uv_precise}) are similar to equations (\ref{eq:uv})
describing the evolution of normalized payoffs in the main model. In
addition, $\tilde{\beta}\rightarrow\beta=\frac{\mu}{\mu+r}$ as $\Delta t\rightarrow0$. Finally, if  $\pi(s_{k},s_{k-1})$ and $\bar{\pi}(s_{k-1},s_{k})$ are the limiting expected payoffs from one customer for the two algorithms $s_{k}$ and $s_{k-1}$\footnote{To be precise these payoffs also depend on the initial price sequence chosen by the seller adjusting their algorithms. We skip this here for more concise notation.}, then it follows directly from equations (\ref{eq:pi_average}) that:
\begin{equation}
\begin{array}{l}
\left|\pi_{k}-\pi(s_{k},s_{k-1})\right|\le\Delta\pi_{max}\cdot\left(1-e^{-(r+\mu)\cdot N_{k}\Delta t}\right)\\
\left|\bar{\pi}_{k}-\bar{\pi}(s_{k-1},s_{k})\right|\le\Delta\pi_{max}\cdot\left(1-e^{-(r+\mu)\cdot N_{K}\Delta t}\right)
\end{array}
\end{equation}
where $\Delta\pi_{max}=max_{p,q,p',q'}\left|\pi(p,q)-\pi(p',q')\right|$
and $N_k$ is the length of the converge period for the current pair of algorithms (which cannot be higher than the number of possible prices squared). Thus, for the same behavior of the sellers the expected payoffs in the ``precise'' model (\ref{eq:uv_precise}) would converge to the expected payoffs of the ``approximate'' limiting model (\ref{eq:uv}) as $\Delta t \rightarrow 0$.\footnote{We again assume sellers cannot respond with an algorithm that would generate a cycle. This assumption can be easily relaxed though. If we resolve cycles in the main model as equivalent to obtaining the average payoffs over the cycle, then similarly the ``precise'' payoffs $\pi_{k}$and $\bar{\pi}_{k}$would converge to these averages as $\Delta t\rightarrow0$.}

\subsubsection*{Equilibria in the Precise Model \label{subsec:precise_equilibria}}

For sufficiently small $\Delta t$ the initial price would not have
any significant effect on the expected payoff of the seller adjusting
their algorithm---by choosing the appropriate initial prices they can
always get the continuation sequence with the highest expected payoffs.
However, it could be used as a coordination device allowing the seller
to select one of the possible best responses with practically identical payoffs.
In order not to do deal with such situations we assume below that
all payoffs in the payoff matrix $\pi(p,q)$ are distinct, that is: 
\begin{equation}
\pi(p,q)=\pi(p',q')\ \ \Leftrightarrow\ \ (p,q)=(p',q')\label{eq:distinct_prices}
\end{equation}

Now consider some arbitrary Markov best response functions for
two sellers:
\begin{equation}
\{f^{i}(s,p),\ p_{0}^{i}(s,p),\ p_{1}^{i}(s,p)\}\qquad i=A,B\label{eq:BR_pair}
\end{equation}
Here $f^{i}(s,p)$ is the algorithm with which seller $i$ would respond
and $p_{0}^{i}(s,p)$ and $p_{1}^{i}(s,p)$ are the initial two prices
chosen. Note, the state variable here includes, in addition to the current algorithm of the opponent $s$, the first price $p$ that the opponent's algorithms will choose. For the pair of responses (\ref{eq:BR_pair}) we can calculate
the expected payoffs of the seller adjusting their algorithm $\tilde{u}^{i}(s,p)$
and of the other seller $\tilde{v}^{i}(s,p)$ for given values of $s$ and $p$. Also, denote by $u^{i}(s,p)$ and $v^{i}(s,p)$
the limits of these expected payoffs as $\Delta t\rightarrow0$ (for
the same response functions of the sellers (\ref{eq:BR_pair})):
\begin{equation}
\begin{array}{l}
u^{i}(s,p)=\lim_{\Delta t\rightarrow0}\tilde{u}(s,p)\\
v^{i}(s,p)=\lim_{\Delta t\rightarrow0}\tilde{v}(s,p)
\end{array}
\end{equation}

\begin{proposition}
\label{prop:precise}Suppose that for any $\varepsilon>0$ there exists
$\Delta t=\Delta t_{\varepsilon}<\varepsilon$ such that the pair
of responses from (\ref{eq:BR_pair}) is an equilibrium in the ``precise
model'' then for any $\beta$ (except for possibly some finite
set of values):
\begin{enumerate}
\item $u^{i}(s,p)=u^{i}(s,p')$ and $v^{i}(s,p)=v^{i}(s,p')$ for any prices
$p$ and $p'$
\item For arbitrary $p^{i}(s)$ the strategies 
\begin{equation}
\tilde{f}^{i}(s)=f^{i}(s,p^{i}(s))\label{eq:price_dependence}
\end{equation}
 are an equilibria in the ``main model" with expected
payoffs $u^{i}(s)=u^{i}(s,p')$ and $v^{i}(s)=v^{i}(s,p')$ for any
$p'$.
\end{enumerate}
\end{proposition}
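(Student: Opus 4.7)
The plan is to take $\Delta t_\varepsilon \to 0$ and use the two convergence facts recorded just after equation (\ref{eq:pi_average}): $\tilde\beta\to\beta=\mu/(\mu+r)$, and $|\pi_k-\pi(s_k,s_{k-1})|\le \Delta\pi_{\max}(1-e^{-(r+\mu)N_k\Delta t})=O(\Delta t)$ (with the analogous bound for $\bar\pi_k$). For the fixed response functions $(f^i,p_0^i,p_1^i)$, every state $(s,p)$ induces a deterministic trajectory $\{s_k\}$ of algorithms, and the precise-model normalized payoffs satisfy (\ref{eq:uv_precise}) along it. Since stage payoffs are bounded in $[\pi_{\min},\pi_{\max}]$, truncating the recursion at a sufficiently large horizon $T$ contributes a tail of at most $\tilde\beta^T\pi_{\max}$ uniformly in $\Delta t$; term-by-term application of the two estimates above then gives $\tilde u^i(s,p)\to u^i(s,p)$ and $\tilde v^i(s,p)\to v^i(s,p)$, with the limits satisfying the main-model recursion (\ref{eq:uv}) with $\pi_k$ replaced by the convergent per-customer profit $\pi(s_k,s_{k-1})$.

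For part 1, I would argue that the initial prices $(p_0^i,p_1^i)$ and the opponent's current price $p$ only affect the first $O(K)$ intervals of length $\Delta t$ before prices converge; these enter $\pi_k$ with cumulative weight $O(K(r+\mu)\Delta t)$, which vanishes in the limit. Concretely, in state $(s,p')$ seller $i$ can deviate to the algorithm $f^i(s,p)$ together with any feasible initial prices. After a bounded number of $\Delta t$-periods prices converge to the same pair $\xi(f^i(s,p),s)$ that would have arisen from the equilibrium play at $(s,p)$, and the subsequent trajectory, which is driven entirely by $f^{-i}$ applied to the converged state, is then identical. Hence this deviation's precise-model payoff converges to $u^i(s,p)$, and best-response optimality at $(s,p')$ forces $u^i(s,p')\ge u^i(s,p)$; the symmetric deviation yields equality, and applying the identity (\ref{eq:uv_precise}) to $\tilde v^i$ and passing to the limit gives the corresponding identity for $v^i$.

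For part 2, I set $u^i(s):=u^i(s,p)$ and $v^i(s):=v^i(s,p)$ (well-defined by part 1) and, given an arbitrary selection $p^i(s)$, form $\tilde f^i(s):=f^i(s,p^i(s))$. The main-model Bellman conditions (\ref{eq:MPE}) for $\tilde f^i$ follow by passing to the limit in the precise-model best-response inequalities: any candidate main-model deviation $s'$ can be implemented in the precise model with arbitrary initial prices, its precise payoff converges to $(1-\beta)\pi(s',s)+\beta v^i(s')$, and by precise-model optimality this is at most $\tilde u^i(s,p^i(s))\to u^i(s)$. The main obstacle, and the reason for the exceptional set of $\beta$, is that along different subsequences $\Delta t_\varepsilon\to 0$ the precise-model best response might oscillate among algorithms that tie in the main-model limit, and the constructed $\tilde f^i$ would then fail to pick a valid main-model best response. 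To control this, observe that the main-model continuation payoffs solve a finite linear system whose coefficients are polynomial in $\beta$, so ties between two distinct algorithm choices correspond to nontrivial polynomial equations in $\beta$; taking the union of zero sets across the finitely many pairs of finite trajectories yields a finite exceptional set, and for any $\beta$ outside it the selected $\tilde f^i$ is the unique main-model best response and the construction goes through.
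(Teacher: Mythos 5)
Your proposal is correct and follows essentially the same route as the paper's own proof: convergence of the precise payoffs to the limiting ones, independence of the adjusting seller's optimal limiting payoff from the opponent's current price, genericity in $\beta$ (via the distinct-payoffs assumption) to handle ties, and the observation that any profitable main-model deviation would be profitable in the precise model for small $\Delta t$. The only organizational looseness is that the claim $v^{i}(s,p)=v^{i}(s,p')$ in part 1 already requires the tie-breaking argument you defer to part 2, since indifferent best responses of the adjusting seller can give the non-adjusting seller different payoffs outside the generic set of $\beta$.
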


Proposition \ref{prop:precise} shows that if some pair of best
responses (\ref{eq:BR_pair}) survives as an equilibrium for an arbitrary
small $\Delta t$ then we can construct an equilibrium in the
main model by resolving the possible dependence
on price arbitrarily (equation (\ref{eq:price_dependence})). Moreover,
the expected payoffs in the limiting main model would
be equal to the limits of the expected payoffs of the precise
model as $\Delta t\rightarrow0$. Since there is a finite number
of possible response pairs (\ref{eq:BR_pair}) the next corollary
directly follows.
\begin{corollary}
For any $\beta=\frac{\mu}{\mu+r}$ (except for possibly some finite
set) and for sufficiently small $\Delta t$ if we take any Markov
equilibrium in the ``precise'' model and resolve dependency on the
initial price arbitrarily then the obtained strategies would
be an equilibrium in the ``approximate'' limiting model. 
\end{corollary}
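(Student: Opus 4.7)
The plan is to first establish pointwise convergence of the precise-model payoffs $\tilde u^i(s,p),\tilde v^i(s,p)$ to their limits $u^i(s,p),v^i(s,p)$, and then to leverage the hypothesis that the same response pair is an equilibrium along a sequence $\Delta t_n\to 0$ to derive the two conclusions. For convergence, I would plug the text's own bound $|\pi_k-\pi(s_k,s_{k-1})|\le\Delta\pi_{max}(1-e^{-(r+\mu)N_k\Delta t})=O(\Delta t)$ (and its analogue for $\bar\pi_k$) together with $\tilde\beta\to\beta$ into the recursion (\ref{eq:uv_precise}) and iterate. The key point is that the freely chosen initial prices $(p_0^i,p_1^i)$ and the opponent's currently-displayed price $p$ affect only a finite pre-convergence window whose discounted contribution is $O(\Delta t)$; hence $u^i(s,p)$ and $v^i(s,p)$ satisfy the main-model recursion (\ref{eq:uv}) with discount factor $\beta$, and in the limit depend on $(s,p)$ only through the algorithm $f^i(s,p)$ and the continuation generated by $f^{\pm i}$.

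For part 1, fix $s$ and two prices $p\ne p'$. Because the responder can always insert any feasible initial-price sequence, the sets of continuation algorithm sequences available to seller $i$ from the states $(s,p)$ and $(s,p')$ coincide up to $O(\Delta t)$ perturbations of the realized per-customer profits. At each $\Delta t_n$, both $f^i(s,p)$ and $f^i(s,p')$ are precise-model best responses for their respective states, so by passing to the limit both must attain the common supremum $u^\ast(s)$ over all algorithmic responses to $s$. This gives $u^i(s,p)=u^i(s,p')=u^\ast(s)$; the analogous equality for $v^i$ follows by writing $v^i(s,p)=(1-\beta)\bar\pi(s,f^i(s,p))+\beta\,u^i(f^i(s,p),q)$ for whichever $q$ is displayed next, and invoking the statement for $u^i$ one step further in the continuation.

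The ``finite set of $\beta$'s'' caveat enters because MPE continuation values are determined by a linear system whose coefficients are polynomial in $\beta$; each value $u^i(s,p)$ is therefore a rational function of $\beta$. Coincidence of the limiting values of two different candidate response algorithms holds either identically in $\beta$ or only on a finite subset of $(0,1)$. Collecting these finitely many coincidence loci across all reachable pairs of algorithms yields a single finite exceptional set; off this set every limiting optimizer is unique, no tie-breaking needs to be resolved, and the argument goes through cleanly.

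Part 2 is then immediate: take any selector $p^i(s)$ and set $\tilde f^i(s):=f^i(s,p^i(s))$. By part 1 the continuation values under $\tilde f^i$ equal the $p$-independent $u^i(s)$ and $v^i(s)$ and satisfy (\ref{eq:uv}); the MPE condition (\ref{eq:MPE}) then follows because an alternative algorithm $s'$ that was strictly preferred in the main model would, by $O(\Delta t)$-continuity of the precise-model payoff together with any feasible initial prices, remain strictly preferred in the precise model for small $\Delta t_n$, contradicting optimality of $f^i(s,p^i(s))$ at that $\Delta t_n$. The main obstacle, and precisely where the finite $\beta$-exclusion becomes essential, is this last ``small perturbation preserves strict inequalities'' step: at the exceptional $\beta$'s where two candidate continuation values tie exactly in the limit, the $O(\Delta t)$ pre-convergence term can decide the ranking, so the chosen algorithm $f^i(s,p)$ may genuinely vary with $p$ and neither conclusion need hold.
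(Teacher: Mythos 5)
Your argument is, in substance, an attempted proof of Proposition \ref{prop:precise}, not of the Corollary, and it skips the one step the paper actually uses to pass from the former to the latter. The Corollary quantifies over \emph{every} Markov equilibrium of the precise model at a given small $\Delta t$; your proof instead begins from the hypothesis that ``the same response pair is an equilibrium along a sequence $\Delta t_n\to 0$,'' which is the hypothesis of Proposition \ref{prop:precise}, not something you are handed. A Markov equilibrium observed at one particular small $\Delta t$ need not recur at smaller values of $\Delta t$, so you cannot pass to the limit along it without further argument. The paper's proof of the Corollary is exactly the bridge you omit: the set of response pairs \eqref{eq:BR_pair} is finite, so every pair that is \emph{not} an equilibrium for arbitrarily small $\Delta t$ has a positive threshold below which it never occurs; taking the minimum of these finitely many thresholds yields a $\Delta t^{*}$ such that for all $\Delta t<\Delta t^{*}$ every precise-model equilibrium response pair satisfies the hypothesis of Proposition \ref{prop:precise}, and the union of the finitely many exceptional $\beta$-sets (one per pair) remains finite. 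Without this pigeonhole step your ``for sufficiently small $\Delta t$'' conclusion is not established.

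There is also a soft spot in your part 1. Off the exceptional $\beta$-set you claim ``every limiting optimizer is unique,'' which is too strong: two algorithms that differ only at unreached prices, or that generate the same price path, tie identically in $\beta$. What is actually needed is that all limiting optimizers give the same continuation payoff \emph{to the opponent}, since $v^i(s,p)=v^i(s,p')$ does not follow from $u^i(s,p)=u^i(s,p')$ when $f^i(s,p)\ne f^i(s,p')$ --- the terms $\bar\pi(s,f^i(s,p))$ and the states at which the continuation values are evaluated differ. The paper closes this with assumption \eqref{eq:distinct_prices} (all entries of $\pi(p,q)$ distinct), which forces two responses whose $u$-values coincide identically in $\beta$ to generate the same price sequence and hence the same $v$-value; your rational-function observation handles the ``coincide at finitely many $\beta$'' branch but not the ``coincide identically'' branch. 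The remainder of your convergence argument (the $O(\Delta t)$ bound on the pre-convergence window and $\tilde\beta\to\beta$) does match the paper's.
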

\noindent Thus, if we observe only collusive equilibria in the main
model, then all equilibria in the ``precise model'' for sufficiently small
$\Delta t$ would also be collusive (except for possibly some non-generic
subset of discount factors $\beta$).

\subsection*{Proof of Proposition \ref{prop:precise}}

\begin{enumerate}
\item Consider one of the sellers choosing the best response for some $s$
and $p$. For sufficiently small $\Delta t$ the seller would always
respond with an algorithm $s'$ and initial price sequence that would
generate the highest possible limiting expected payoff:
\begin{equation}
u(s,p)=\max_{p',s'(.)}(1-\beta)\cdot\pi(p',s(p'))+\beta\cdot v(s',p')\qquad s.t.\ s'(s(p'))=p'
\end{equation}
which does not depend on $p$.
\item If if there are several different $p'$ and $s'$ that generate the
highest limiting payoff for the seller adjusting their algorithm then
with assumption (\ref{eq:distinct_prices}) except for possibly finite
set of discount factors $\beta$ these responses also generate exactly
the same limiting expected payoff for the other seller:
\begin{equation}
v(s,p)=(1-\beta)\cdot\pi(s(p'),p')+\beta\cdot u(s',p')
\end{equation}
\item As in the main model, for sufficiently small $\Delta t$ the seller
adjusting the algorithm would also choose the initial prices that
maximizes the current expected payoff (since any future responses
of the opponent generate the same limiting expected payoff irrespective
of the future initial prices).
\item Finally, the strategies (\ref{eq:price_dependence}) are an equilibrium
in the ``approximate'' limiting model since any profitable deviation
would also be profitable in the ``precise'' model for sufficiently
small $\Delta t$.
\end{enumerate}

\medskip

\subsection*{\it \underline{Part D}: Example with unequal equilibrium}

In this section we construct an example of an unequal equilibrium, in which prices can converge to a collusive that is not $(p_M,p_M)$. For this example we will use the discrete choice demand function with the following expected payoff from one customer:
\begin{equation}
\pi(p,q)=p\cdot\frac{e^{-bp}}{a+e^{-bp}+e^{-bq}}
\end{equation}
We choose parameters $a\approx0.0158$ and $b\approx0.4760$ so that the monopoly price is $p_{M}=8$ and the competitive price is $p_{C}=4$. Each seller can chose any prices from the set $\{4,5,6,7,8\}$. The payoff matrix for these prices with the above parameters is shown in Table \ref{tab:UE_payoffs}.

To describe an equilibrium instead of specifying all algorithms that
are used (which would be quite complicated) we instead specify the transition
matrices, which show which price pair could follow another price pair.
Later we can verify that the chosen transition matrices can indeed be sustained
in an equilibrium. More specifically, consider the best algorithm that could follow prices
$(p,q)$ for seller $i$ (from equation (19)):
\begin{equation}
s^{i}(p,q)=\arg\max_{s:q\rightarrow p}v^{i}(s)
\end{equation}
Now denote by $\phi^{i}(p,q)$ the price pair that would follow $(p,q)$
on the equilibrium sequence starting with $s^{i}(p,q)$ or more precisely
\begin{equation}
\phi^{i}(p,q)=\xi^{-i}(s^{i}(p,q),f^{-i}(s^{i}(p,q))
\end{equation}
If we know the transition matrices $\phi^{i}(p,q)$
for each seller we can completely describe how prices will evolve and
we can calculate the expected payoffs matrices $v^{i}(p,q)$,
$u^{i}(p,q)$, $V^{i}(p,q)$, $U^{i}(p,q)$. Finally, using Proposition
\ref{prop:before_after} we verify that the specified transitions are feasible and optimal, and, thus, they indeed correspond to an equilibrium. The specific
algorithms necessary to sustain such equilibrium can then be recovered from
equation (\ref{eq:transition_algorithm}).

Consider the transition matrices specified in Table \ref{tab:UE_transition}. We verified that the specified transition matrices can be sustained in  an equilibrium for $\beta=0.9,\,0.95\,,0.999$ and, also, for any $\beta$ sufficiently close $1$.\footnote{We performed the calculations numerically with a Python script. Nevertheless, when comparing two expected payoffs we required for the difference to be higher than the specified threshold, which was much higher than possible mistakes in the numerical calculations. Thus, we believe
the proof is strict. To show that the transition matrices could be
sustained as an equilibrium for $\beta$ sufficiently close to $1$
we created a slightly modified version of the script, which compared
not the discounted expected payoffs but the sequences of payoffs.}
For the specified transition matrices, depending on the initial condition we end up either with monopoly prices $(8,8)$ or with an unequal outcome $(7,8)$ in which seller $A$ gets a payoff above the monopoly. In order to sustain prices $(7,8)$ in an equilibrium seller $A$ can use the following algorithm:
\begin{equation}
s^A = \{8\rightarrow 7,\ 7\rightarrow 6,\ 6\rightarrow 5,\ 5\rightarrow 4,\  4\rightarrow 5\}
\end{equation}
In response to such algorithm seller $B$ cannot get more that $\pi(8,7)$ immediately and if seller $B$ believes that seller $A$ will keep this algorithm they do not have any incentives to continue with any other price pair. While for $A$ it is optimal to keep this algorithm because it guarantees higher than the monopoly expected payoff.

\begin{table}
\centering{}
\begin{tabular}{|c|c|c|c|c|c|}
\hline
$A\backslash B$ & $8$ & $7$ & $6$ & $5$ & $4$\tabularnewline
\hline 
$8$ & $2.95,\ 2.95$ & $2.41,\ 3.39$ & $1.86,\ 3.61$ & $1.36,\ 3.54$ & $0.95,\ 3.19$\tabularnewline
\hline 
$7$ & $3.39,\ 2.41$ & $2.87,\ 2.87$ & $2.29,\ 3.16$ & $1.74,\ 3.21$ & $1.25,\ 2.97$\tabularnewline
\hline 
$6$ & $3.61,\ 1.86$ & $3.16,\ 2.29$ & $2.64,\ 2.64$ & $2.08,\ 2.79$ & $1.55,\ 2.68$\tabularnewline
\hline 
$5$ & $3.54,\ 1.36$ & $3.21,\ 1.74$ & $2.79,\ 2.08$ & $2.30,\ 2.30$ & $1.80,\ 2.32$\tabularnewline
\hline 
$4$ & $3.19,\ 0.95$ & $2.97,\ 1.25$ & $2.68,\ 1.55$ & $2.32,\ 1.80$ & $1.90,\ 1.90$\tabularnewline
\hline
\end{tabular}
\caption{\label{tab:UE_payoffs} Expected payoffs from one customer in an unequal equilibrium.}
\end{table}

\begin{table}
\centering{}
$\qquad$%

\begin{tabular}{cc}
Seller $A$ & seller $B$  \\
\begin{tabular}{|c|c|c|c|c|c|}
\hline 
$A\backslash B$ & $8$ & $7$ & $6$ & $5$ & $4$\tabularnewline
\hline 
$8$ & $8,\ 8$ & $8,\ 7$ & $8,\ 6$ & $8,\ 6$ & $8,\ 7$\tabularnewline
\hline 
$7$ & $7,\ 8$ & $8,\ 8$ & $8,\ 7$ & $8,\ 7$ & $8,\ 7$\tabularnewline
\hline 
$6$ & $5,\ 4$ & $7,\ 8$ & $8,\ 8$ & $8,\ 8$ & $8,\ 8$\tabularnewline
\hline 
$5$ & $5,\ 4$ & $7,\ 8$ & $7,\ 8$ & $8,\ 8$ & $7,\ 8$\tabularnewline
\hline 
$4$ & $5,\ 4$ & $7,\ 8$ & $7,\ 8$ & $7,\ 8$ & $8,\ 8$\tabularnewline
\hline 
\end{tabular} &
\centering{}
\begin{tabular}{|c|c|c|c|c|c|}
\hline 
$A\backslash B$ & $8$ & $7$ & $6$ & $5$ & $4$\tabularnewline
\hline 
$8$ & $8,\ 8$ & $7,\ 7$ & $7,\ 7$ & $7,\ 7$ & $7,\ 7$\tabularnewline
\hline 
$7$ & $7,\ 8$ & $8,\ 8$ & $8,\ 8$ & $8,\ 8$ & $8,\ 8$\tabularnewline
\hline 
$6$ & $7,\ 8$ & $7,\ 8$ & $8,\ 8$ & $8,\ 8$ & $8,\ 8$\tabularnewline
\hline 
$5$ & $7,\ 8$ & $7,\ 8$ & $7,\ 8$ & $8,\ 8$ & $7,\ 8$\tabularnewline
\hline 
$4$ & $7,\ 8$ & $7,\ 8$ & $7,\ 8$ & $7,\ 8$ & $8,\ 8$\tabularnewline
\hline 
\end{tabular}
\end{tabular}
\caption{\label{tab:UE_transition} Price transition matrices (the first price in each pair corresponds to seller $A$)}
\end{table}

\medskip

\subsection*{\it \underline{Part E}: Modeling experimentation}

In this section we discuss in more detail how sellers learn the opponent's algorithm. In particular, we argue that within our model, experimentation is relatively inexpensive (and converges to $0$ as $\Delta t\rightarrow 0$), and, thus, the sellers will always choose to discover the opponent's algorithm (at least enough to make a decision). As a result, for $\Delta t$ small enough, we can assume the seller to be best responding the opponent's algorithm.  

Whenever a seller gets an opportunity to revise their algorithm, they can also spend a short period of time experimenting with the opponent's algorithm (by manually choosing an initial short sequence of prices) before committing to a new algorithm themselves.

Suppose $U^{observable}$ is the highest payoff the seller currently adjusting their algorithm could get if the opponent's algorithm is directly observable. Now, the considered seller can experiment for the first $K$ periods (for $K$ different price levels) to fully recover the opponent's algorithm. Then, their ultimate payoff is then at least:
\begin{equation}
\begin{array}{lll}
U & \ge &\sum_{j=0}^{K-1}e^{-r\cdot j\Delta t}\cdot\frac{\lambda}{r}(1-e^{-r\Delta t})\cdot\pi(\dots,\dots)+e^{-r\cdot K\Delta t}\cdot U^{observable} \\
& \ge & U^{observable}-\frac{\lambda}{r}\left(1-e^{-r\cdot K\Delta t}\right)\Delta\pi_{max}
\end{array}
\end{equation}
where $\Delta\pi_{max}=max_{p,q,p',q'}\left|\pi(p,q)-\pi(p',q')\right|$.\footnote{In the expression we also ignore the probability simultaneous algorithm adjustments, which for small $\Delta t$ is a small probability event, and, thus, similarly would not materially affect the ultimate payoff.} The second term in the above expressions is converging to $0$ as $\Delta t\rightarrow0$, thus, the actual expected payoffs approximate $U^{observable}$ for small $\Delta t$.\footnote{Since the experimentation phase is not longer than $K$ prices adjustment cycles, it could be even shorter the period of convergence to a fixed price pair, which we already ignore in our model by Assumption 1.} While if the seller does not experiment and chooses the response immediately, their expected payoff cannot be higher that $U^{observable}$ and could be sufficiently lower (even for small $\Delta t$) because the chosen response might not be optimal for the specific opponent's algorithm. 

We should point out two aspects about more general approaches to modeling experimentation. First that algorithms could in principle experiment at arbitrary times to learn more about the opponent's algorithm. At a high level this would help our conceptual arguments because it would make learning even easier. However, it would make the theoretical analysis intractable by making it harder to separate the pure delegation (to the algorithm) phase from the experimentation phase. Our modeling of the experimentation phase immediately after the Poisson arrival makes the analysis tractable while keeping the basic (realistic) forces in tact. 

Second, for general algorithms with many possible internal states, experimentation could certainly be more costly and it might take more time. It might not even be possible to fully decipher the opponent's algorithm and you might need to choose your algorithm with some remaining uncertainly. In addition, the opponent's algorithm might have some irreversible state transitions (e.g. some sort of grim-trigger algorithm), thus, it might not be obvious sometimes whether you want to experiment at all.

It would, of course, be interesting, and also challenging, to study these more general approaches to experimentation in algorithmic pricing. It is certainly beyond the scope of this paper, and we leave it for future research.
\newpage

\bibliographystyle{abbrvnat}
\bibliography{PriceWithAlgo}

\end{document}